\newcommand{\old}[1]{{}}
\newcommand{\bi}{\begin{itemize}}
\newcommand{\ei}{\end  {itemize}}
\newcommand{\bt}{\begin{tabbing}}
\newcommand{\et}{\end  {tabbing}}
\newcommand{\be}{\begin{enumerate}}
\newcommand{\ee}{\end  {enumerate}}
\newcommand{\fraisse}{Fra\"{i}ss\'{e} }
\def\begin@lgo{\begin{minipage}{1in}\begin{tabbing}
        \quad\=\qquad\=\qquad\=\qquad\=\qquad\=\qquad\=\qquad\=\kill}
\def\end@lgo{\end{tabbing}\end{minipage}}
\theoremstyle{definition}
\newtheorem{observation}[thm]{Observation}
\newtheorem{observ}[thm]{Observation}
\newcommand{\set}[1]{\{#1\}}
\newcommand{\qr}{\mathrm{qr}}
\renewcommand{\gets}{\!\leftarrow\!}
\newcommand{\AND}{\; \wedge \;}
\newcommand{\ef}{Ehrenfeucht-\fraisse}
\newcommand{\edashf}{E-F }
\newcommand{\vs}{vs. }
\newcommand{\ms}{MS }
\newcommand{\MS}{MS }
\begin{document}

\title[Multi-Structural Games and Number of Quantifiers]{Multi-Structural Games and Number of Quantifiers$^*$}
\titlecomment{$^*$This paper is an expanded and reorganized version of the LICS (Logic in Computer Science) 2021 paper \cite{Fagin21}, which points to the Computer Science arXiv, version 1, entry \cite{Fagin21a} for full proofs of its theorems.}

\author[R.~Fagin]{Ronald Fagin\lmcsorcid{0000-0002-7374-0347}}[a]
 \address{IBM Research - Almaden, San Jose, CA}
 \email{fagin@us.ibm.com}

\author[J.~Lenchner]{Jonathan Lenchner\lmcsorcid{0000-0002-9427-8470}}[b]
 \address{IBM T.J. Watson Research Center, Yorktown Heights, NY}
 \email{lenchner@us.ibm.com}

\author[K.~W.~Regan]{Kenneth W. Regan\lmcsorcid{0000-0002-7382-7178}}[c]
 \address{Department of CSE, University at Buffalo, Amherst, NY}
 \email{regan@buffalo.edu}

\author[N.~Vyas]{Nikhil Vyas\lmcsorcid{0000-0002-4055-7693}}[d, \begin{minipage}{1em}\tiny$\dagger$\end{minipage}]
 \address{Department of EECS, MIT, Cambridge, MA}
 \email{nikhil@g.harvard.edu} 
 \thanks{$^\dagger$Supported by NSF CCF-1909429. Work Partially done while visiting IBM Research - Almaden}

\begin{abstract}
We study multi-structural games, played on two sets $\mathcal{A}$ and $\mathcal{B}$ of structures. These games generalize \ef games. Whereas \ef games capture the quantifier rank of a first-order sentence, multi-structural games capture the number of quantifiers, in the sense that Spoiler wins the $r$-round game if and only if there is a first-order sentence $\phi$ with at most $r$ quantifiers, where every structure in $\mathcal{A}$ satisfies $\phi$ and no structure in $\mathcal{B}$ satisfies $\phi$. We use these games to give a complete characterization of the number of quantifiers required to distinguish linear orders of different sizes and we develop machinery for analyzing structures beyond linear orders.
\end{abstract}

\maketitle

\section{Introduction}
Model theory has a number of techniques for proving inexpressibility results. 
However, 
as noted in \cite{Fagin93},
almost none of the key theorems and tools of model
theory, such as the compactness theorem and the
L\"owenheim-Skolem theorems, apply to
finite structures.  Among the few tools of model
theory that yield inexpressibility results for finite structures are \ef games \cite{Ehr61,Fra54}, henceforth \edashf games. In this paper we shall entirely be concerned with \textit{finite} structures over \textit{finite} relational vocabularies. For definitions of these concepts, we refer the reader to \cite{Imm99}. 

The standard \edashf game is played by ``Spoiler'' and ``Duplicator'' on a pair $(A,B)$ of structures over the same first-order vocabulary $\tau$, for a specified number $r$ of \emph{rounds}.  In each round, Spoiler chooses an element from $A$ or from $B$, and Duplicator replies by choosing an element from the other structure.  In this way, they determine sequences of elements $a_1,\dots,a_r \in A$ and $b_1,\dots,b_r \in B$, repetitions allowed, which, in addition to any possible constants defined on $A$ and $B$, define substructures $A'$ of $A$ and $B'$ of $B$.  The analysis can be phrased as attempting to define a sequence of functions $f$, $g$, and $f\cup g$. If any of these functions are not well-defined then Spoiler wins. The function $f{:}\{a_1,....,a_r\} \rightarrow \{b_1,....,b_r\}$ is defined by $f(a_i) = b_i$ for $i = 1,...,r$. If there are $i$ and $j$ with $1 \leq i < j \leq r$ such that $a_i = a_j$ but $b_i \neq b_j$, then $f$ is not well-defined. The partial function $g{:} A \rightarrow B$ is defined so that $g(a) = b$ if $a$ and $b$ are associated with the same constant symbol in $\tau$. If after a round of play there are two constant symbols $c_i, c_j$ with $a \in A$ associated with both $c_i$ and $c_j$, but different elements of $B$ associated with $c_i$ and $c_j$, then $g$ is not well-defined.  The joint partial function $f \cup g$ stipulates that $(f \cup g)(a) = f(a)$ if $f$ is defined on $a$ and $(f \cup g)(a) = g(a)$ if $g$ is defined on $a$.  If $f$ and $g$ \emph{are} well-defined \emph{and} agree on common elements, then $f \cup g$ is well-defined.  If it is not well-defined, or if $f \cup g$ is not an isomorphism from its domain $A'$ to its image $B'$, then Spoiler wins. Finally, if  $f \cup g$ is well-defined, and $f \cup g$ is an isomorphism from $A'$ to $B'$, then Duplicator wins. In this latter case of a Duplicator win, we say that the sequences $\langle a_1,...,a_r \rangle$ and  $\langle b_1,...,b_r \rangle$ of selected elements ``give rise to a partial isomorphism'' between $A$ and $B$.

The equivalence theorem for \edashf games \cite{Ehr61,Fra54} 
characterizes the minimum \emph{quantifier rank} of a sentence $\phi$ over $\tau$ that is true for  $A$ but false for $B$.  
The quantifier rank $\qr(\phi)$ is defined as zero for a quantifier-free sentence $\phi$, and inductively:
\begin{eqnarray*}
\qr(\neg\phi) &=& \qr(\phi),\\
\qr(\phi \vee \psi) =  \qr(\phi \land \psi)&=& \max\set{\qr(\phi),\qr(\psi)},\\
\qr(\forall x \phi(x)) = \qr(\exists x \phi(x))  &=& \qr(\phi) + 1,
\end{eqnarray*}

\begin{thm}[\textbf{Equivalence Theorem for E-F Games}]\label{thm:ef}
Spoiler wins the $r$-round \edashf game on $(A,B)$ if and only if there is a 1st order sentence $\phi$ of quantifier rank at most $r$ such that $A \models \phi$ while $B \models \neg\phi$.
\end{thm}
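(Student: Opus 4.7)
The plan is to prove the theorem by induction on $r$, but to strengthen the statement in order to make the induction go through. Specifically, I would prove: for all $k \geq 0$ and all tuples $\vec{a} = (a_1,\dots,a_k) \in A^k$ and $\vec{b} = (b_1,\dots,b_k) \in B^k$, Spoiler wins the $r$-round game starting from position $(\vec{a},\vec{b})$ if and only if there is a formula $\phi(x_1,\dots,x_k)$ of quantifier rank at most $r$ with $A \models \phi[\vec{a}]$ and $B \models \neg\phi[\vec{b}]$. Theorem~\ref{thm:ef} is then the $k=0$ case. The base case $r=0$ is handled by observing that a 0-round position is a partial isomorphism (including constants) exactly when no quantifier-free atomic formula distinguishes $(A,\vec{a})$ from $(B,\vec{b})$.

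For the $(\Rightarrow)$ direction of the inductive step, suppose Spoiler wins in $r$ rounds from $(\vec{a},\vec{b})$. Without loss of generality the winning first move is some $a \in A$, so for every possible Duplicator response $b \in B$, Spoiler wins the remaining $(r-1)$-round game from $(\vec{a}a,\vec{b}b)$. By the inductive hypothesis, for each such $b$ there is a formula $\psi_b(\vec{x},y)$ of quantifier rank at most $r-1$ with $A \models \psi_b[\vec{a},a]$ and $B \models \neg\psi_b[\vec{b},b]$. I would then take $\phi(\vec{x}) := \exists y\,\bigwedge_i \psi_{b_i}(\vec{x},y)$ for a suitable finite collection $\{b_i\}$, giving $\qr(\phi) \leq r$ with $A \models \phi[\vec{a}]$ (witnessed by $a$) and $B \models \neg\phi[\vec{b}]$.

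For the $(\Leftarrow)$ direction, I would argue contrapositively. Assuming Duplicator wins the $r$-round game from $(\vec{a},\vec{b})$, a Duplicator strategy guarantees that for every Spoiler move $a \in A$ there exists $b \in B$ such that Duplicator wins the $(r-1)$-round game from $(\vec{a}a,\vec{b}b)$; by induction this means no formula of quantifier rank $\leq r-1$ distinguishes these extended positions. Now if $\phi(\vec{x})$ of quantifier rank $\leq r$ held in $A$ at $\vec{a}$, I would push negations inward to a Boolean combination of formulas of shape $\exists y\,\psi(\vec{x},y)$ with $\qr(\psi) \leq r-1$, and use the witness-matching above to show $\phi$ must also hold at $\vec{b}$ in $B$, a contradiction.

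The main obstacle is the finiteness step hidden in the $(\Rightarrow)$ argument: a priori there are infinitely many $b \in B$, so a priori the conjunction $\bigwedge_b \psi_b$ might be infinite. Resolving this requires the standard lemma that, over a \emph{finite} vocabulary $\tau$ and in $k+1$ free variables, there are only finitely many formulas of quantifier rank at most $r-1$ up to logical equivalence. I would prove this lemma by a side induction on $r$: a formula of quantifier rank $r$ is a Boolean combination of atomic formulas (finitely many up to equivalence since $\tau$ is finite) and formulas of the form $\exists x\,\chi$ with $\qr(\chi) \leq r-1$ in one more variable (finitely many up to equivalence by induction). This finiteness lets me group the $b$'s into finitely many equivalence classes and pick one $\psi_b$ per class, making $\phi$ a legal first-order formula of the required quantifier rank.
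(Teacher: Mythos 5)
Your proposal is correct, and it is the standard textbook proof of the Ehrenfeucht--Fra\"{i}ss\'{e} theorem. Be aware, though, that the paper does not actually prove Theorem~\ref{thm:ef}: it explicitly defers the proof (in particular the ``only if'' direction, which it calls ``somewhat tricky'') to \cite{Imm99, Lib12}, so there is no in-paper argument to match yours against. The closest in-paper analogue is the proof of the multi-structural Theorem~\ref{thm:main1} in Section~\ref{sec:fundamental}, and the comparison is instructive. You strengthen the statement to formulas with $k$ free variables interpreted by the played tuples and induct ``on the last move,'' which forces you to take a conjunction over all possible Duplicator responses and hence to invoke the finiteness, up to logical equivalence, of formulas of bounded quantifier rank over a finite vocabulary. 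The paper instead inducts on the \emph{first} move, replaces free variables by freshly introduced constant symbols $c_1,\dots,c_r$, and---in the multi-structural setting---lets Duplicator's ``play every possible element on fresh copies'' move absorb the quantification over responses: the induction hypothesis then returns a \emph{single} sentence $\psi$ distinguishing the two resulting sets of labeled structures, so no large conjunction is needed in the inductive step, and the only finiteness observation required is the easy one in the base case (finitely many distinct quantifier-free sentences over the constants). Your route buys the classical quantifier-rank statement directly; the paper's route is tailored to counting quantifiers over sets of structures. Two small points in your write-up: in the $(\Leftarrow)$ direction you should also record that Duplicator's strategy answers Spoiler moves made in $B$ by elements of $A$, since the constituents $\exists y\,\psi$ may occur negatively in the Boolean combination and you need agreement in both directions; and the closing ``a contradiction'' should read ``contradicting the assumption that $\phi$ distinguishes the two positions,'' since what you actually derive is that no $\phi$ of quantifier rank at most $r$ distinguishes them.
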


The ``if'' direction of this theorem is fairly easy to prove by induction on $r$.  This is the ``useful" direction,  which is used to prove inexpressibility results. The ``only if'' direction is somewhat tricky to prove;  intuitively, it tells us that 
any technique for proving
that a certain property cannot be defined by a first-order sentence with a certain quantifier rank 
can, in principle, be replaced by a proof via \edashf games. See \cite{Imm99, Lib12} for 
a proof and extended discussion.

We investigate a variant of \edashf games that we call \emph{multi-structural games}. These games  make Duplicator more powerful and characterize the \emph{number} of quantifiers rather than quantifier \emph{rank}.
It is straightforward to see that the minimum number of quantifiers needed to define a property $P$ is the same as the minimum size of the quantifier prefix of a sentence in prenex normal form that is needed to define property $P$.
This is because converting a sentence into prenex normal form does not increase the number of quantifiers.

As we discovered during review of this paper's conference version [FLRV21a] and acknowledged there, an equivalent of our multi-structural game was described in the journal version of Neil Immerman’s paper ``Number of Quantifiers Is Better Than Number of Tape Cells" \cite{Immerman81}. Its conference version \cite{Immerman79} did not mention the game.\footnote{We reference two personal communications with Immerman as \cite{Immerman21}, one beforehand where he raised the related size game described in \cite{AdlImm03} and discussed in section \ref{sec:related_work}, and one after our discovery that informed the present discussion.}  In \cite{Immerman81},  Immerman called it the ``separability game" and showed that it characterizes the number of quantifiers, without providing further results.
Just prior to the conclusion of \cite{Immerman81}, Immerman remarked,

\begin{small}
\begin{quote}
``Little is known about how to play the separability game. We leave it here as a 
jumping off point for further research. We urge others to study it, hoping that the 
separability game may become a viable tool for ascertaining some of the lower 
bounds which are `well believed' but have so far escaped proof.''
\end{quote}
\end{small}

\noindent Indeed, as our paper shows, analysis of the multi-structural games is often quite confounding, with many delicate issues. 

We now define the rules of the multi-structural game (henceforth, ``MS game''). There are again two players, Spoiler and Duplicator, and there is a fixed number $r$ of rounds.  Instead of being played on a pair 
$(A,B)$
of structures with the same vocabulary (as in an \edashf game), it is played on a pair $({\mathcal A}, {\mathcal B})$ of sets of structures, all with the same vocabulary.  
For $k$ with $0 \leq k \leq r$, by a   
\textit{labeled structure} after $k$ rounds, we mean a structure along with a labeling of which elements were selected from it in each of the first $k$ rounds. 
Let ${\mathcal{A}}_0 = \mathcal{A}$ and ${\mathcal{B}}_0 = \mathcal{B}$.
Thus, ${\mathcal{A}}_0$ represents the labeled structures from $\mathcal A$ after 0 rounds, and similarly for ${\mathcal{B}}_0$.
If $1 \leq k <r$, let ${\mathcal{A}}_k$ be the labeled structures originating from $\mathcal A$ after $k$ rounds, and similarly for ${\mathcal{B}}_k$.
In round $k+1$, 
Spoiler either chooses an element from each member of ${\mathcal{A}}_k$, thereby creating ${\mathcal{A}}_{k+1}$,
or chooses an element from each member of ${\mathcal{B}}_k$, thereby creating ${\mathcal{B}}_{k+1}$.
Duplicator responds as follows. 
Suppose that Spoiler chose an element from each member of ${\mathcal{A}}_k$, thereby creating ${\mathcal{A}}_{k+1}$.   Duplicator can then make multiple copies of each labeled structure of ${\mathcal{B}}_k$, and choose an element from each copy,
thereby creating ${\mathcal{B}}_{k+1}$.
Similarly, if Spoiler chose an element from each member of ${\mathcal{B}}_k$, thereby creating ${\mathcal{B}}_{k+1}$, Duplicator can then make multiple copies of each labeled structure of ${\mathcal{A}}_k$, and choose an element from each copy, thereby creating ${\mathcal{A}}_{k+1}$.
Duplicator wins if there is some labeled $A$ in ${\mathcal{A}}_{r}$ and some labeled $B$ in ${\mathcal{B}}_{r}$ where the labelings (in addition to any constants) give rise to a partial isomorphism in the same sense as in an E-F game.  Otherwise, Spoiler wins.

Note that on each of Duplicator's moves, Duplicator can make ``every possible choice," via the multiple copies. Making every possible choice creates what we call the \emph{oblivious strategy}.
It is easy to see that Duplicator has a winning strategy if and only if the oblivious strategy is a winning strategy.

We shall prove the following theorem.
It is analogous to Theorem~\ref{thm:ef} for ordinary \edashf games.

\begin{thm} \textbf{Equivalence Theorem for MS Games:} \label{thm:main1}
Spoiler wins the $r$-round MS game on
$(\mathcal{A}, \mathcal{B})$
if and only if there is a 1st order sentence $\phi$ with at most $r$ quantifiers such that $A \models \phi$ for every $A \in {\mathcal A}$ while $B \models \neg\phi$ for every $B \in {\mathcal B}$.
\end{thm}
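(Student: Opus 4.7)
The plan is to prove a strengthened statement by induction on $r$: for every $k \geq 0$ and every pair $(\mathcal{A}, \mathcal{B})$ of sets of structures in which each structure carries an initial labeling of $k$ elements (corresponding to free variables $x_1, \ldots, x_k$), Spoiler wins the $r$-round game on $(\mathcal{A}, \mathcal{B})$ if and only if there is a formula $\phi(x_1, \ldots, x_k)$ with at most $r$ quantifiers such that every labeled $A \in \mathcal{A}$ satisfies $\phi$ and every labeled $B \in \mathcal{B}$ satisfies $\neg\phi$. Theorem~\ref{thm:main1} is the $k = 0$ case. Throughout I use the paper's remark that Duplicator may be assumed to play the oblivious strategy. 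For the base case $r = 0$, the formula is a Boolean combination of atomic formulas, so separability is equivalent to the sets $T_\mathcal{A}$ and $T_\mathcal{B}$ of complete atomic types realized by the labeled structures being disjoint. If they are disjoint, $\phi := \bigvee_{\tau \in T_\mathcal{A}} \tau(x_1, \ldots, x_k)$ separates; if they share a type, a partial isomorphism exists between a witnessing pair, so Duplicator wins the $0$-round game.

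For the inductive step, consider the \emph{if} direction at level $r+1$. Assume $\phi$ has at most $r+1$ quantifiers and separates; converting to prenex form preserves the quantifier count, and padding with a dummy $\exists y$ if needed lets us assume $\phi = Q y\, \phi'(y, \vec x)$ with $Q \in \{\exists,\forall\}$ and $\phi'$ having at most $r$ quantifiers. In the existential case (the universal case is symmetric, with Spoiler playing on the $\mathcal{B}$ side and choosing witnesses for $\exists y\,\neg\phi'$), for each labeled $A$ select a witness $a^*_A$ with $A \models \phi'(a^*_A, \vec a)$, and let Spoiler play these on the $\mathcal{A}$ side. Since each labeled $B$ satisfies $\forall y\,\neg\phi'(y, \vec b)$, every element $b$ Duplicator could pick in any copy of $B$ makes $(B, \vec b, b) \models \neg\phi'$, so in the resulting $(\mathcal{A}_1, \mathcal{B}_1)$, every labeled $A$ satisfies $\phi'$ and every labeled $B$ satisfies $\neg\phi'$. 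By the inductive hypothesis Spoiler wins the remaining $r$ rounds.

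For the \emph{only if} direction at level $r+1$, suppose Spoiler has a winning strategy, and inspect Spoiler's first move. If Spoiler plays from $\mathcal{A}$, picking $a^*_A$ in each $A$, the oblivious Duplicator response yields $\mathcal{A}_1 = \{(A, a^*_A) : A \in \mathcal{A}\}$ and $\mathcal{B}_1 = \{(B, b) : B \in \mathcal{B},\, b \in B\}$, on which Spoiler still wins the remaining $r$-round game. By induction there is $\phi'(y, \vec x)$ with at most $r$ quantifiers separating $\mathcal{A}_1$ from $\mathcal{B}_1$, and then $\exists y\,\phi'(y, \vec x)$ separates $\mathcal{A}$ from $\mathcal{B}$ with at most $r+1$ quantifiers (each $A$ has witness $a^*_A$, while $B \models \neg\phi'(b, \vec b)$ for every $b \in B$). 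The dual case (Spoiler plays from $\mathcal{B}$) yields a $\forall y$ prefix. The main delicate point is the role of the oblivious strategy: by forcing $\mathcal{B}_1$ to range over \emph{all} $b \in B$, it matches the universal quantification implicit in $\neg\exists y\,\phi'$, which is exactly why the game captures the number of quantifiers rather than quantifier rank --- each round lets Duplicator branch over the entire universe rather than commit to a single element as in an E-F game.
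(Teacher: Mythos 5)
Your proof is correct and follows essentially the same route as the paper's: induction on $r$ keyed to Spoiler's first move, with the base case handled by a finite disjunction of complete atomic descriptions and the inductive step matching the oblivious Duplicator response to the universal quantification in $\neg\exists y\,\phi'$. The only (immaterial) difference is bookkeeping: you carry free variables $x_1,\dots,x_k$ on labeled structures, whereas the paper introduces fresh constant symbols $c_1,\dots,c_r$ and substitutes them back into variables when forming the quantified sentence.
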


We now give an interesting refinement of the Equivalence Theorem (although, as we shall discuss,  it does not seem to directly imply the Equivalence Theorem). 
Let $Q_1 \cdots Q_r$ be a sequence of quantifiers.  
We now define the ``$Q_1 \cdots Q_r$ MS game''.
It is an $r$-round MS game, with the following restrictions on Spoiler.
If $Q_k$ is an existential quantifier, then Spoiler's $k$th move 
must be in $\mathcal A$, and otherwise it must be in $\mathcal B$.
We then have the following result.

\begin{thm} \textbf{Fixed Prefix Equivalence Theorem for MS Games:} \label{thm:main1a}
Spoiler wins the $Q_1 \cdots Q_r$ MS game on
$(\mathcal{A}, \mathcal{B})$
if and only if there is a 1st order sentence $\phi$ 
in prenex normal form with exactly $r$ quantifiers, in the order $Q_1 \cdots Q_r$,
such that $A \models \phi$ for every $A \in {\mathcal A}$ while $B \models \neg\phi$ for every $B \in {\mathcal B}$.
\end{thm}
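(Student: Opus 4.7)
The plan is to strengthen the statement by induction on the number of rounds remaining, working with labeled structures and formulas with free variables, in parallel with how one would prove Theorem~\ref{thm:main1}. Namely, I aim to show that for every $k$ with $0 \le k \le r$ and every pair $(\mathcal{A}_k, \mathcal{B}_k)$ of classes of $k$-labeled structures, Spoiler wins the restricted game with prefix $Q_{k+1}\cdots Q_r$ starting from $(\mathcal{A}_k, \mathcal{B}_k)$ iff there exists a formula $\phi(x_1,\dots,x_k) = Q_{k+1}x_{k+1}\cdots Q_r x_r\,\psi(x_1,\dots,x_r)$ with $\psi$ quantifier-free such that $A \models \phi(a_1,\dots,a_k)$ for every labeled $(A;a_1,\dots,a_k) \in \mathcal{A}_k$ and $B \not\models \phi(b_1,\dots,b_k)$ for every labeled $(B;b_1,\dots,b_k) \in \mathcal{B}_k$. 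Theorem~\ref{thm:main1a} is the $k = 0$ case.

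Base case $k = r$: for each pair $(A;\vec a) \in \mathcal{A}_r$ and $(B;\vec b) \in \mathcal{B}_r$, the failure of the labelings to give rise to a partial isomorphism is witnessed by a quantifier-free formula $\theta_{A,B}$ with $A \models \theta_{A,B}(\vec a)$ and $B \not\models \theta_{A,B}(\vec b)$. Then $\phi(\vec x) = \bigvee_{A} \bigwedge_{B} \theta_{A,B}(\vec x)$ separates $\mathcal{A}_r$ from $\mathcal{B}_r$: each $A$ satisfies its own disjunct, while each $B$ fails every $\theta_{A,B}$ appearing in the conjunct for that $B$. If $\mathcal{A}$ or $\mathcal{B}$ is infinite this looks infinitary, but under the standard finite-vocabulary assumption there are only finitely many inequivalent quantifier-free formulas in $r$ variables, so the expression collapses to a finite formula; this is the same convention used to justify Theorem~\ref{thm:main1}.

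Inductive step for $Q_{k+1} = \exists$ (the $\forall$ case is dual). The rules force Spoiler's $(k+1)$st move to be in $\mathcal{A}$. For the $(\Rightarrow)$ direction, a winning strategy selects, for each $(A;\vec a) \in \mathcal{A}_k$, an element $a_A \in A$; let $\mathcal{A}_{k+1}$ consist of these extensions. Duplicator's oblivious response produces $\mathcal{B}_{k+1} = \{(B;\vec b, b) : (B;\vec b) \in \mathcal{B}_k,\ b \in B\}$, and Spoiler still wins the remaining $Q_{k+2}\cdots Q_r$ game. By induction there is a formula $\psi(\vec x, x_{k+1}) = Q_{k+2}x_{k+2}\cdots Q_r x_r\,\psi'$ separating these classes, so $\phi(\vec x) = \exists x_{k+1}\,\psi(\vec x, x_{k+1})$ is the desired formula: each $A$ satisfies $\phi(\vec a)$ with witness $a_A$, and each $B$ fails $\phi(\vec b)$ because $\psi$ fails at every $b \in B$. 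Conversely, given such a $\phi$, Spoiler picks a witness $a_A$ for $\exists x_{k+1}\,\psi(\vec a, x_{k+1})$ in each $A$; for any copies and choices by Duplicator, each resulting $(B;\vec b, b)$ satisfies $\neg\psi(\vec b, b)$ (since $\phi(\vec b)$ fails in $B$), and the induction furnishes a winning continuation.

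The main obstacle is conceptual rather than technical: tracking that restricting Spoiler's side at round $k+1$ corresponds cleanly to forcing the outermost remaining quantifier to be $\exists$ or $\forall$, and verifying that oblivious responses on Duplicator's side translate to universal quantification over witnesses on the ``fails'' side. Once this bookkeeping is set up, the argument parallels the standard \edashf inductive proof and the anticipated proof of Theorem~\ref{thm:main1}, with the only subtlety being the base-case treatment of possibly infinite $\mathcal{A}, \mathcal{B}$ via finiteness of inequivalent atomic types.
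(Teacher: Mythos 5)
Your proposal is correct and is essentially the adaptation the paper has in mind: the paper gives no separate proof of Theorem~\ref{thm:main1a}, stating only that it follows the proof of Theorem~\ref{thm:main1}, which is exactly the same forward induction on the first move that you carry out (with your free-variable/labeled-structure bookkeeping playing the role of the paper's special constants $c_1,\dots,c_r$, and your pairwise witnesses $\theta_{A,B}$ in the base case replacing the paper's isomorphism-type formulas $\phi_A$). The one point worth making explicit is the reduction to Duplicator's oblivious response in the forward direction, which is justified because any other response yields a subset of the oblivious position's labeled structures.
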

On the face of it, Theorem~\ref{thm:main1a} does not seem to directly imply Theorem~\ref{thm:main1}, for the following reason.
It is a priori conceivable that Spoiler's winning strategy in an $r$-round game is to move first in $\mathcal A$, and then, depending on Duplicator's response, to move in either $\mathcal A$ or $\mathcal B$.
So there is then no prefix $Q_1 \cdots Q_r$ dictating where Spoiler must move. 
In fact, in an \edashf game (but not in a MS game\footnote{Since we can assume Duplicator plays obliviously, Spoiler's moves are not conditioned on what Duplicator plays.}), it can indeed happen that Spoiler can win in 2 rounds, and where Spoiler  plays in the second round depends on how Duplicator played in the first round. 
An example of where this phenomenon can take place in a 2-round \edashf game is via the sentence 
$\exists x (\forall y B(x,y) \land \exists y R(x,y))$.
The proof of Theorem~\ref {thm:main1} appears in Section~\ref {sec:fundamental}.
The proof of Theorem~\ref{thm:main1a} 
 is almost the same as the proof of Theorem~\ref{thm:main1} and requires just very minor changes.

There is an interesting and non-obvious difference between \edashf games and MS games. Let us say that a player makes a move ``on top of'' a previous move if the player selects an element $c$ of a structure, and the same element $c$ had been selected by either player in an earlier round.
It is easy to see that in an \edashf game, it never helps Spoiler to make a move on top of a previous move (it only wastes a round). On the other hand, in MS games the issue of playing on top of a previous move is a frequent consideration for us. In fact, a detailed analysis shows that playing a move on top of a previous move may be a necessary part of a winning Spoiler strategy (see  Observation \ref{obs:play-on-top}).

We now give an example (Figure \ref{fig:3_vs_2_singleton-intro}) that shows differences between the \edashf game and the MS game, and what they say about quantifier rank vs. number of quantifiers. 
Consider the following two structures $B$ (for ``Big'') and $L$  (for ``Little''), over $\tau = \{<\}$, where $<$ is the binary ``less than'' relation. The vertex labels are not part of the structures.  Elements that appear to the left, within the same structure, are considered to be less than elements to the right. $B$ is a linear order on $3$ elements and $L$ is a linear order on two elements. In the text of this paper we write $B(i)$ (or $L(i)$) to denote the $i$th element in the linear order $B$ (respectively, $L$), while in the figures, for economy of space, we label the $i$th vertex instead by $Bi$ (respectively, $Li$).
\begin{figure} [ht]
\centerline{\scalebox{0.4}{\includegraphics{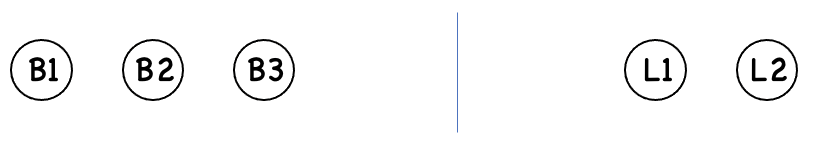}}}
\caption{An example showing the difference between MS and \edashf games.}
\label{fig:3_vs_2_singleton-intro}
\end{figure}
Further, rather than use the notation ${<}(x,y)$ 
we shall use the customary $x < y$ notation.

Suppose first that the number $r$ of rounds is $2$.
We show that Spoiler wins the \edashf game. 
On Spoiler's first move, Spoiler selects
 vertex $B(2)$ in $B$.  Duplicator must select either $L(1)$ or $L(2)$ in $L$. If Duplicator chooses $L(1)$, then Spoiler selects $B(1)$ in $B$.
 After Duplicator selects $L(2)$ in $L$ Spoiler wins since the mapping given by $B(2) \mapsto L(1)$ and $B(1) \mapsto L(2)$ fails to be a partial isomorphism because the ``less than'' relationship is flipped.  If Duplicator had instead selected $L(2)$ in the first round, then Spoiler would have won, by a similar argument, by selecting $B(3)$ in $B$ in the second round.
 
 The fact that Spoiler wins the 2-round game over $(B,L)$ tells us (by Theorem~\ref{thm:ef}) that there is a sentence
$\phi$ of quantifier rank at most 2 such that $B \models \phi$ while $L \models \neg\phi$.
Such a sentence is: $\exists x(\exists y(y < x) \AND \exists y(x < y))$.

Now let us consider the 2-round MS game over $({\mathcal B}, {\mathcal L})$, where ${\mathcal B} = \set{B}$ and ${\mathcal L} = \set{L}$.
We show 
that unlike the 2-round \edashf game, in the 2-round MS game, Duplicator wins. 
It is easy to see that Duplicator wins if Spoiler's first-round move is anything other than $B(2)$.
Let us see what happens if Spoiler's first move  is $B(2)$, which was a winning move for Spoiler in the 2-round E-F game.
Then on Duplicator's first move, Duplicator makes a second copy of $L$ and in one copy, call it $L_1$, Duplicator selects $L(1)$, and in the other copy, call it $L_2$, Duplicator selects $L(2)$.
Let us now consider Spoiler's possible second round responses. 
Suppose first that Spoiler's second round move is in $B$.
If Spoiler selects $B(3)$, then Duplicator selects $L(2)$ in $L_1$ and the mapping $B \rightarrow L_1$ such that $B(2)\mapsto L(1), B(3) \mapsto L(2)$ yields a partial isomorphism. On the other hand, if Spoiler selects $B(1)$, then Duplicator selects $L(1)$ in $L_2$ and $B \rightarrow L_2$ such that $B(1) \mapsto L(1), B(2) \mapsto L(2)$ yields a partial isomorphism. Section \ref{sec:upperbds} will complete the analysis of a Duplicator win.  
Since Duplicator wins the 2-round game, it follows 
by Theorem \ref{thm:main1} there is no sentence with just two quantifiers that distinguishes 
$\mathcal B$ from $\mathcal L$.

\medskip

The focus of our analysis of MS games in this paper is finite linear orders. Henceforth, all linear orders are assumed to be finite. In the case of E-F games, one has the following:

\begin{thmC}[\cite{Ros82}]\label{thm:f}
Let $f(r) = 2^r -1$.
In an $r$-round  \edashf  game played on two linear orders of different sizes, Duplicator wins if and only if the size of the smaller linear order is at least $f(r)$.
\end{thmC}

Since part of the proof of Theorem~\ref{thm:f} is left to an exercise in \cite{Ros82}, we give a proof 
in Appendix \ref{app:f}. 
Further,
the proof illustrates a simple recursive idea that is surprisingly not available to us in the analysis of linear orders from the vantage point of MS games. 

Theorem \ref{thm:f} together with Theorem \ref{thm:ef} imply that $f(r)$ is the maximum value $k$ such that a sentence of quantifier rank $r$ can distinguish linear orders of size $k$ and above from those of size smaller than $k$.

In analogy to this function $f$, and in an effort to arrive at a parallel theorem to Theorem \ref{thm:f}, we make the following definition.

\begin{defi} \label{def:g}
Define the function $g:\mathbb{N} \rightarrow \mathbb{N}$ such that $g(r)$ is the maximum number $k$ such that there is a sentence with $r$ quantifiers that can distinguish linear orders of size $k$ or larger, from linear orders of size less than $k$.
\end{defi}

To see that $g$ is well-defined, observe that the sentence 
\begin{equation} \label{eqn:basic}
	\exists x_1 \cdot\cdot\cdot \exists x_r \bigwedge\limits_{1 \leq i < r} x_i < x_{i+1},
\end{equation}
distinguishes linear orders of size $r$ or larger from linear orders of size less than $r$. Furthermore, 
there are only finitely many inequivalent sentences with up to $r$ quantifiers that include only the relation symbols $<$ and $=$, some fraction of which distinguish linear orders of some size $k$ or greater from linear orders of size less than $k$. There is therefore a maximum such $k \geq r$, which is then $g(r)$.

After building up quite a bit of machinery we eventually arrive at the following:

\begin{thm}\label{thm:g}
The function $g$ takes on the following values: 
$g(1) = 1, g(2) = 2, g(3) = 4, g(4) = 10$, and for $r > 4$,
\begin{equation*}
g(r) = \begin{cases} &2g(r-1)~~~~~~\textrm{if $r$ is even,}\\ &2g(r-1) + 1\textrm{ if $r$ is odd.} \end{cases}
\end{equation*}
\end{thm}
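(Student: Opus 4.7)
The plan is to proceed by induction on $r$, translating via Theorem~\ref{thm:main1} between the existence of an $r$-quantifier sentence and a winning Spoiler strategy in the $r$-round multi-structural game. Writing $\mathcal{L}_{\geq k} = \{L_m : m \geq k\}$ and $\mathcal{L}_{<k} = \{L_m : 1 \leq m < k\}$, showing $g(r) = K$ amounts to (i) exhibiting a winning Spoiler strategy for the $r$-round multi-structural game on $(\mathcal{L}_{\geq K}, \mathcal{L}_{<K})$ (or its swap, since negation preserves quantifier count), and (ii) exhibiting a winning Duplicator strategy on $(\mathcal{L}_{\geq k}, \mathcal{L}_{<k})$ for every $k > K$. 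Only finitely many sizes in each set matter because $r$ quantifiers induce only finitely many inequivalent types on linear orders.

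The base cases $r = 1, 2, 3, 4$ I would verify directly. For $r \leq 3$ the lower bounds come from short explicit sentences: $\exists x\,(x = x)$ for $r = 1$; $\exists x \exists y\,(x < y)$ for $r = 2$; and the three-quantifier sentence $\forall x \exists y \exists z\,(y < z \land (z < x \lor x < y))$, which asserts that every element has two elements on one side and is true on $L_m$ exactly when $m \geq 4$, for $r = 3$. The matching Duplicator strategies at threshold $K+1$ follow from short game analyses. The value $g(4) = 10$ is exceptional, lying outside the recursion that takes over only for $r \geq 5$, so it would be handled by both exhibiting a specific $4$-quantifier sentence distinguishing size $\geq 10$ from size $< 10$ and constructing an oblivious Duplicator strategy for the $4$-round game on $(\mathcal{L}_{\geq 11}, \mathcal{L}_{<11})$; both steps invoke the earlier sections' machinery describing labeled linear orders via their sequence of gap sizes.

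For the inductive step with $r \geq 5$ I would prove the two inequalities separately. The lower bound $g(r) \geq 2g(r-1) + [r \text{ odd}]$ uses a splitting argument. Starting from a winning Spoiler strategy $\sigma$ for the $(r-1)$-round game at threshold $g(r-1)$, Spoiler's first move in the $r$-round game selects a roughly central element of each $L_m$ on the large side so that both the left and right blocks have at least $g(r-1)$ elements, whereas on each $L_{m'}$ on the small side every possible Duplicator response leaves at least one block with strictly fewer than $g(r-1)$ elements. Spoiler then applies $\sigma$ in the remaining $r - 1$ rounds to that deficient block on the small side and the corresponding block on the large side, winning by the inductive hypothesis. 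When $r$ is odd, the target $K = 2g(r-1)+1$ is odd and admits a unique middle position, giving an extra unit of slack; when $r$ is even, the symmetric splitting at $K = 2g(r-1)$ leaves no such slack.

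The main obstacle is the matching upper bound $g(r) \leq 2g(r-1) + [r \text{ odd}]$, which demands a winning Duplicator strategy at threshold $K + 1$. Duplicator plays obliviously, so after Spoiler's first move every element of every surviving small-side structure is selected in a distinct copy. The resulting labeled structures are parametrized by the pair of gap sizes to the left and right of the selected element, and I would maintain through the remaining rounds an invariant asserting that for every labeled structure in $\mathcal{A}_j$ there exists one in $\mathcal{B}_j$ whose pair of gaps is compatible in the sense of the inductive $(r-j)$-round Duplicator strategy. Propagating this invariant across subsequent Spoiler moves, which refine exactly one of the existing gaps, requires carefully recombining two smaller instances of the inductive Duplicator strategy; this combinatorial recoupling---together with the parity case split, which reflects how closely the two sides' gap sequences can be matched when their total sizes differ by one versus two---is the technical heart of the argument and is where the earlier sections' shape-type machinery becomes essential.
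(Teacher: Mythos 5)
Your overall architecture (translate via Theorem~\ref{thm:main1}, explicit sentences for the base cases, a splitting argument for the lower bound, a recursive Duplicator strategy for the upper bound) matches the paper's in outline, and your lower-bound sketch is essentially Theorem~\ref{thm:main_lower} --- though note that the paper has to work harder there than you indicate: after Duplicator's oblivious reply the deficient blocks sit on different sides in different copies, the two resulting ``conceptual games'' must be playable round-by-round in tandem (same side $L$ or $B$ on every board each round), and the end-move copies ($L(1)$ and $L(2g(r-1))$) require a separate endgame exploiting the fact that the inherited strategy finishes with a move on $L$ followed by a move on $B$.

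The genuine gap is in your upper bound. The invariant you propose --- for each labeled structure on one side keep a gap-compatible partner on the other, propagated by ``recombining two smaller instances of the inductive Duplicator strategy'' --- is precisely the naive recursion that Section~\ref{sec:interlude} of the paper demonstrates is \emph{false}. In the $10$ vs.\ $9$ game, after Duplicator mirrors Spoiler's central first move and tries to run the $3$-round $5$ vs.\ $4$ strategy on the long side, Spoiler defeats her by playing \emph{on top of} the first-round move (or anywhere in the already-``matched'' block) on one of the small-side copies; this is a move that does not exist in the sub-game, and it forces any Duplicator reply that preserves the isomorphism with that copy to break the isomorphisms with all the other copies. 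So the two gaps cannot be treated as independent sub-games, and no straightforward recoupling of the inductive strategies repairs this. The paper's actual route is to strengthen Spoiler by introducing multi-structural games \emph{with atoms}, define $g'(r)\geq g(r)$ for those games, prove a Reduction Lemma that makes the recursion legitimate in that setting, and additionally introduce the restricted quantity $g'_\forall$ (Spoiler forced to open on $L$) to handle the even-$r$ case, where Duplicator must answer the central move $B(g'(r-1)+1)$ on \emph{two} boards, $L(g'(r-1))$ and $L(g'(r-1)+1)$. The miracle is that the weaker upper bounds obtained this way still meet the lower bounds for $r\geq 4$. Relatedly, your plan to certify $g(4)\leq 10$ by a direct oblivious Duplicator analysis of the $4$-round game at threshold $11$ is not what the paper does (it derives $g(4)\leq g'(4)\leq 2g'(3)=10$ from the atoms machinery), and given how intricate even the $3$-round analysis of Lemma~\ref{lemma:g_upper_of_3} already is, a direct $4$-round case analysis is unlikely to be feasible as stated. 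Without some substitute for the atoms idea, your upper-bound step does not go through.
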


The value $g(3) = 4$ is a curious anomaly.  If it had turned out that $g(3) = 5$, then the entire induction could be founded on $r = 1$.
The proof of Theorem~\ref{thm:g} is a careful mathematical journey to restore this induction founded instead at $r = 4$. 

The following theorem for MS games is the analog of Theorem \ref{thm:f} for E-F games, and describes precisely when Duplicator (alternatively, Spoiler) wins $r$-round MS games on two linear orders of different sizes.

\begin{thm} \label{thm:g_for_game_play}
In an $r$-round MS game played on two linear orders of different sizes Duplicator has a winning strategy if and only if the size of the smaller linear order is at least $g(r)$. 
\end{thm}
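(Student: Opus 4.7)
The approach is to leverage Theorem~\ref{thm:main1} to rephrase the game-theoretic statement in logical terms. By Theorem~\ref{thm:main1}, Duplicator wins the $r$-round multi-structural game on $(\{L_m\}, \{L_n\})$ if and only if no $r$-quantifier first-order sentence separates $L_m$ from $L_n$. So the theorem becomes the claim that an $r$-quantifier separating sentence exists precisely when $\min(m,n) < g(r)$.

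For the direction smaller $< g(r) \Rightarrow$ Spoiler wins: assume WLOG $m < n$ and $m < g(r)$, and exhibit an $r$-quantifier sentence separating $L_m$ from $L_n$. If $n \geq g(r)$, then the sentence witnessing the maximum threshold in Definition~\ref{def:g} immediately separates them. Otherwise $m < n < g(r)$, in which case the plan is to invoke the inductive construction supplied by the proof of Theorem~\ref{thm:g}, which produces, for each $k \leq g(r)$, an $r$-quantifier sentence $\phi_k$ with $L_j \models \phi_k$ iff $j \geq k$; taking $k = m+1 \leq g(r)$ yields the required separator.

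For the direction smaller $\geq g(r) \Rightarrow$ Duplicator wins: rather than arguing directly on sentences (which is awkward because a non-threshold sentence could in principle distinguish $L_m$ from $L_n$ without contradicting the maximality of $g(r)$), the plan is to use Theorem~\ref{thm:main1} in the opposite direction and supply a direct Duplicator winning strategy in the game, by induction on $r$. The base cases $r \leq 4$ use the concretely computed values $g(1)=1, g(2)=2, g(3)=4, g(4)=10$. For the inductive step, using the recurrence $g(r) = 2g(r-1)$ or $2g(r-1)+1$, describe Duplicator's oblivious response: when Spoiler plays a position $p$ splitting its linear order into a prefix and a suffix, Duplicator's copies cover every position of the other order, and the recurrence guarantees that some copy splits into a prefix and a suffix each of size $\geq g(r-1)$, matching Spoiler's split on both sides. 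The inductive hypothesis then applies to each side as an $(r-1)$-round subgame.

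The principal obstacle is the anomaly $g(3) = 4$ (rather than the pattern-predicted $5$), which breaks any clean induction founded at $r = 1$ and forces the induction to be based at $r = 4$ with standalone small-case verification. A second subtlety is making the two-sided recursive split work rigorously in the multi-structural setting — where Duplicator maintains a family of labeled copies and Spoiler can switch sides freely between rounds — which is why substantial machinery from the paper's earlier sections (in particular the analysis of Section~\ref{sec:upperbds}) must be marshalled to make the inductive reduction precise.
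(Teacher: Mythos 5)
Your high-level plan --- translate the game statement into one about $r$-quantifier sentences via Theorem~\ref{thm:main1} and then split into the cases $|L|<g(r)\leq|B|$, $|L|<|B|<g(r)$, and $g(r)\leq|L|$ --- is the paper's route (Theorem~\ref{thm:g_for_game_play} is proved as the game-theoretic twin of Theorem~\ref{thm:fund_thm_for_los}), but two of your three cases rest on steps that do not go through as described. For $m<n<g(r)$ you assert that the proof of Theorem~\ref{thm:g} already ``produces, for each $k\leq g(r)$, an $r$-quantifier sentence $\phi_k$'' with threshold exactly $k$. It does not: the lower-bound construction only yields the threshold sentence at $g(r)$ itself, and filling in the intermediate thresholds is a separate piece of work (Lemma~\ref{lemma:gap_closer}), whose base case $r=4$ needs genuinely new sentences $\Phi_{4,9},\dots,\Phi_{4,5}$ obtained by deleting conjuncts from $\Phi_4$, followed by its own induction. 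Because of the anomaly $g(3)=4$, these intermediate sentences cannot simply be read off the main recursion.

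More seriously, your Duplicator direction is exactly the naive recursion that Section~\ref{sec:interlude} refutes. The strategy ``find a copy whose split matches Spoiler's and play each side as an $(r-1)$-round subgame by induction'' fails in the multi-structural setting: in the $10$ vs.\ $9$ game Spoiler defeats it by later playing \emph{on top of} first-round moves on boards outside the intended subgame, a move with no analogue inside the subgame, so the $(r-1)$-round inductive hypothesis cannot be invoked side-by-side. (As a smaller point, when $r$ is even and the order has size exactly $g(r)=2g(r-1)$, no position even has both sides of size $\geq g(r-1)$.) The machinery that repairs this is not ``the analysis of Section~\ref{sec:upperbds}'' --- that section only covers $r\leq 3$ --- but the games-with-atoms apparatus of Section~\ref{sec:atoms}: the auxiliary quantity $g'$, the Reduction Lemma~\ref{lemma:reduction} (which legitimizes a one-sided recursion by converting the discarded side into atoms), and the restricted quantity $g'_\forall$ needed when Spoiler attacks the short side; the Duplicator half of the theorem is then Corollary~\ref{cor:main_upper}. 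Until your inductive step is replaced by something equivalent to that apparatus, the ``only if'' half of the theorem is unproven.
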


It is important to note that neither the \textit{if} nor \textit{only if} portion of this theorem is implied by the definition of $g$.

\smallskip

Both \edashf games and MS games are used to prove inexpressibility results by showing there is a wining strategy for Duplicator.
It is typically easier to demonstrate a winning strategy for Duplicator in \edashf games than in MS games, for several reasons.
First, it is easier to reason about only two structures at a time rather than about many structures at a time.
Second, in MS games, there is a tactic available to Spoiler (that is of no use in \edashf games) to make one move on top of an earlier move in one of the structures, and this can  greatly complicate the analysis.
On the other hand, in MS games,
Duplicator has the advantage of being able to make multiple copies of structures and  make different moves on the various  copies.
This feature can be very useful in proving a winning strategy for Duplicator.

A similar phenomenon of modifying the rules of the game to make it easier for Duplicator to win arose in defining and making use of Ajtai-Fagin games \cite{AjtaiFagin90} rather than making use of the originally defined Fagin games \cite{Fagin75} for proving inexpressibility results in monadic existential second-order logic  (called ``monadic NP" in \cite{FaginSV95}).
In Fagin games, there is a coloring round (a choice of the combinations of existentially-quantified monadic predicates) where Spoiler colors $A$ then Duplicator colors $B$, and then an ordinary \edashf game is played on the colored structures.
In Ajtai-Fagin games, Spoiler must commit to a coloring of $A$ without knowing what the other structure $B$ is.  
Fagin, Stockmeyer, and Vardi \cite{FaginSV95} use Ajtai-Fagin games to give a much simpler proof that connectivity is not in monadic NP than Fagin's original proof in \cite{Fagin75}.
In extending MS games to second-order logic, which we think is an interesting and important future step (and which is straightforward to define), these games can easily simulate Ajtai-Fagin games. 
This is because we can replace structure $A$ by the singleton set ${\mathcal A} = \{A\}$, and replace the structure $B$ by a set ${\mathcal B}$ that contains all possible choices for $B$ that Duplicator might choose in the Ajtai-Fagin game after Spoiler colors $A$.

\subsection{Related work} \label{sec:related_work}
Since our results can be viewed as giving information about the size of prefixes of sentences in prenex normal form, we begin by discussing some other papers that focus on such prefixes. 

Rosen \cite{Rosen2005} shows that there is a strict prefix hierarchy, based on the prefixes of sentences written in prenex normal form. The proof involves standard \edashf games.

Dawar and  Sankaran \cite{DaSa21} consider \edashf games, each of which focuses on a fixed prenex prefix.  For example, there is one game that deals with the prenex prefix $\exists \forall \exists$.
For each of these prefixes, they define an \edashf game on a pair $(A,B)$ of structures.  For example, in the $\exists \forall \exists$ game, Spoiler must move first in $A$, then in $B$, and then in $A$.  
Their Theorem 2.3 says that Spoiler has a winning strategy in a prefix game if and only if there is a sentence in prenex normal form  with exactly that prefix that is true about $A$ but not about $B$. Unfortunately, the ``only if” direction of their Theorem 2.3  is false \cite{Dawar21pc}.  This is because if $A$ is a linear order of size 5 and  $B$ is a linear order of size 4, and the prefix is $\exists \forall \exists$, then it turns out that Spoiler wins that 3-round prefix game, but it follows from  our Theorem~\ref{thm:g} that $A$ and $B$  agree on all sentences with at most three quantifiers, and in particular on all sentences $\exists x \forall y \exists z \phi(x,y,z)$, where $\phi$ is quantifier-free. 
Fortunately, in their paper, Dawar and  Sankaran just make use of the ``if” direction of Theorem 2.3, which is correct \cite{Dawar21pc}.

We now discuss some papers that, like ours, modify \edashf games by allowing a pair of sets of structures, rather than simply a pair of structures.
Adler and Immerman \cite{AdlImm03} use a type of \edashf game that involves a pair of sets of structures, where, as in our MS games, Duplicator can make multiple copies of structures and make different moves on them. 
Adler and Immerman’s concern is to obtain results about the size of sentences (rather than the number of quantifiers)  in transitive closure logic (first-order logic with the transitive closure operator). The rules of the game are rather complicated, since it must deal with transitive closure logic and capture the size of sentences. 

Hella and Vilander \cite{HelVil19} build on Adler and Immerman’s game, and their goal is also to determine sentence size (but in modal logic).  The rules of their game are also fairly complicated. 

 Grohe and Schweikardt \cite{GroheS05}
introduce a method (extended syntax trees) that corresponds to a game tree that is constructed by the two players in the Adler-Immerman game. They use these to study the size of sentences in the $2,3$ and $4$-variable fragments of first-order logic on linear orders.

Lotfallah \cite{Lotfallah04} introduces a class of E-F-like games played on a pair 
of sets of  structures rather than on a pair of single structures. 
In  Lotfallah’s games, Duplicator cannot make multiple copies of structures.
A follow-up paper by Lotfallah and Youssef \cite{LoYo05} characterizes certain first and second order 
prefix types but does not involve sets of structures. 

Hella and V{\"{a}}{\"{a}}n{\"{a}}nen \cite{HellaV15}, like Lotfallah, 
introduce a class of E-F-like games played on a pair
of sets  of structures rather than on a pair of single structures, where Duplicator cannot make multiple copies of structures. Hella and V{\"{a}}{\"{a}}n{\"{a}}nen use one variant of their game to
characterize 
the size of sentences needed for separating sets of structures in propositional logic and a second variant to characterize the size of sentences  needed for separating sets of structures in first-order logic. The first-order game is used for proving an exact bound on the size of existential sentences needed to define the length of linear orders.

\subsection{Overview of the Sections}
In Section~\ref{sec:fundamental} we prove the Equivalence Theorem~\ref{thm:main1}. 
In Section~\ref{sec:prelims} we establish certain preliminary terminology and notation and prove that the property of Duplicator having a winning strategy on two sets of structures gives rise to an equivalence relation over sets of structures with the same signature. The ensuing sections establish upper and lower bounds on the function $g(r)$ associated with Theorem~\ref{thm:g} until we are able to observe that we have tight bounds in all cases.
In Section~\ref{sec:upperbds} we establish upper bounds on $g(r)$, for $2$ and $3$. (The trivial tight bound $g(1) = 1$ is established in Section \ref{sec:prelims}.) The natural next step would be to proceed to higher values of $r$ using a type of recursive argument, but in Section~\ref{sec:interlude} we show why the natural recursive argument for MS games does not work. 
For pedagogical reasons, in Section \ref{sec:lower_bounds}, we jump to establishing lower bounds for $g(r)$. We then jump back to establishing upper bounds
in Section~\ref{sec:atoms}, where we introduce a new type of game, an \ms game with ``atoms",  which allows us to recurse and prove upper bounds for all $r$. 
The upper bounds are then seen to be tight with respect to our lower bounds and hence, in Section \ref{sec:final}, we are able to prove Theorem \ref{thm:g}. The machinery that we have built up then enables a quick proof of Theorem \ref{thm:fund_thm_for_los}, which is just the syntactic equivalent of Theorem \ref{thm:g_for_game_play} from the Introduction (there stated game theoretically). We note that while games with ``atoms" are an important part of our upper bound proofs, the final sentences guaranteed by 
Definition \ref{def:g} and Theorem \ref{thm:g}
do not contain atoms. 
We present our conclusions in Section \ref{sec:conc}.

\section{Proof of the Equivalence Theorem}\label{sec:fundamental}

To prove Theorem~\ref{thm:main1} we are going to add a special constant just prior to the play of each round that will help us maintain the induction. These constants, which we denote by $c_1,...,c_r$ are in addition to whatever constants may exist in the vocabulary $\tau$. We write $(A;\dots;c_1 \gets a_1)$ to mean the structure obtained after assigning $c_1$ to the element $a_1$ in $A$, and so on.

\begin{proof}[Proof of Theorem~\ref{thm:main1}]
Both directions are proved by induction on the number $r$ of rounds.  For $r = 0$, Spoiler winning in $0$ rounds means that for every $A \in \mathcal{A}$ and $B \in \mathcal{B}$, the restrictions $A',B'$ of $A$ and $B$ (respectively) to their constants must be non-isomorphic.  For every $A \in \mathcal{A}$, we can write a quantifier-free sentence $\phi_{A}$ that characterizes $A'$ up to isomorphism, using equality to identify any coinciding constants.  Then take $\phi$ to be the disjunction of all $\phi_{A}$ -- note that even if $\mathcal{A}$ is infinite, there are only finitely many distinct $\phi_{A}$.  Then $A \models \phi$ for all $A \in \mathcal{A}$.  Now consider any $B \in \mathcal{B}$.  We claim that $B \models \neg\phi$.  As $\neg\phi$ is a conjunction, this is equivalent to $B \models \neg\phi_{A}$ for every $A$.  Suppose not, then we would have $B \nvDash \neg\phi_{A}$, i.e., $B \models \phi_{A}$.  But because $\phi_{A}$ characterizes $A'$ up to isomorphism, this would make $B'$ isomorphic to $A'$, contradicting that Spoiler wins.  Hence $\phi$ is a quantifier-free sentence that distinguishes $\mathcal{A}$ and $\mathcal{B}$. 

Conversely, if there is a quantifier-free sentence $\phi$ that distinguishes $\mathcal{A}$ and $\mathcal{B}$, then there cannot exist $A \in \mathcal{A}$ and $B \in \mathcal{B}$ such that the restrictions $A'$ and $B'$ to the constants appearing in $\phi$ are isomorphic.  Thus, Duplicator loses without further play. 

Now suppose $r \geq 1$ and the equivalence is true for $r-1$.  We induct on the first move rather than the last move of the games.  For the forward direction, suppose Spoiler can win---say by playing in $\mathcal{B}$.  For each $B \in \mathcal{B}$, Spoiler selects an element $b \in B$.  Duplicator replies by replicating every $A \in \mathcal{A}$ and playing every possible $a \in A$. Now let us assign the constant $c_1$ to the element played on each of the structures in $\mathcal{A}$ and $\mathcal{B}$.  The resulting game position $(\mathcal{A}^1,\mathcal{B}^1)$, which includes the new constant $c_1$ in each structure, is winnable in $r-1$ rounds by Spoiler.  By the induction hypothesis, there is a sentence $\psi$ with $r-1$ quantifiers that distinguishes $\mathcal{B}^1$ from $\mathcal{A}^1$.  Now define $\phi = (\exists x_1)\psi'$ where $\psi'$ replaces all occurrences of $c_1$ in $\psi$ by $x_1$.
(This is alright even in degenerate cases where $c_1$ does not occur in $\psi$.)
For every $B \in \mathcal{B}$, we have $B \models \phi$ because there is a $b \in B$ such that $(B;\dots;c_1 \gets b) \models \psi$, namely the $b$ that Spoiler played in $B$.  Hence it suffices to show that $A \models \neg\phi = (\forall x_1)\neg\psi'$ for every $A \in \mathcal{A}$.  After Duplicator's play, $A$ was replaced by
\[
\{(A;\dots;c_1 \gets a_1),(A;\dots;c_1 \gets a_2),\dots,(A;\dots;c_1 \gets a_m)\},
\]
where $A = \{a_1,\dots,a_m\}$.  Since $\psi$ distinguishes $\mathcal{B}^1$ from $\mathcal{A}^1$, we have $(A;\dots,c_1 \gets a_j) \models \neg\psi$ for each $j = 1,\dots,m$.  It follows that $A \models (\forall x_1)\neg\psi'$.  The case where Spoiler wins by playing in $\mathcal{A}$ is handled symmetrically.  

Going the other way, suppose $\phi$ is a prenex sentence with $r$ quantifiers that distinguishes $\mathcal{A}$ from $\mathcal{B}$.  If the leading quantifier is $\forall$ then $\neg\phi$ has leading quantifier $\exists$ and distinguishes $\mathcal{B}$ from $\mathcal{A}$, so we can reason by symmetry.  So let $\phi = (\exists x_1)\psi'$ for some $\psi'$ and take $\psi$ to be the sentence $\psi'$ with $x_1$ replaced everywhere by the special constant symbol $c_1$.  For every $A \in \mathcal{A}$, $A \models \phi$, so there exists $a_1 \in A$ such that $(A;\dots;c_1 \gets a_1) \models \psi$.  Spoiler can play such an element $a_1$ in every $A$.  Now every $B \in \mathcal{B}$ models $\neg\phi = (\forall x_1)\neg\psi'$.  For every $b \in B$, Duplicator creates the structure $(B;\dots;c_1 \gets b)$, but regardless of $b$, it models $\neg\psi$.  Thus, $\psi$ distinguishes the resulting set $\mathcal{A}^1$ from Duplicator's $\mathcal{B}^1$, has $r-1$ quantifiers, and includes $c_1$ along with any previous constants.  By the induction hypothesis, Spoiler wins from $(\mathcal{A}^1,\mathcal{B}^1)$ in $r-1$ rounds, so Spoiler wins from $(\mathcal{A},\mathcal{B})$ in $r$ rounds.
\end{proof}

\begin{proof}[Proof of Theorem~\ref{thm:main1a}]
In both directions of the induction, if the first quantifier $Q_1$ is fixed to be $\exists$ or fixed to be $\forall$, then this limits which case can arise within the induction step, but does not affect its validity. 
\end{proof}

\section{Preliminaries} \label{sec:prelims}

\begin{defi}
Let $\mathcal{A} = \{A\}$ and $\mathcal{B} = \{B\}$ be two \textit{singleton sets of structures}. Write $\mathcal{A} \equiv_r \mathcal{B}$ iff Duplicator has a winning strategy for \ms games of $r$ rounds on $\mathcal{A}$ and $\mathcal{B}$. 
\end{defi}

An important consequence of Theorem \ref{thm:main1} is the following. 

\begin{lem} \label{lemma:prenex_equiv} The relation $\equiv_r$ is an equivalence relation between singleton sets of structures.
\end{lem}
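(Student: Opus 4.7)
The plan is to appeal to Theorem~\ref{thm:main1} and translate the game-theoretic definition of $\equiv_r$ into syntactic terms: $\mathcal{A} \equiv_r \mathcal{B}$ iff there is no first-order sentence $\phi$ with at most $r$ quantifiers such that $A \models \phi$ for every $A \in \mathcal{A}$ while $B \models \neg\phi$ for every $B \in \mathcal{B}$. Since replacing $\phi$ by $\neg\phi$ interchanges the two sides, this condition is manifestly symmetric in $\mathcal{A}$ and $\mathcal{B}$. With this rephrasing in hand, I would verify the three axioms of an equivalence relation in turn.

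Reflexivity is immediate: no sentence $\phi$ can simultaneously have $A \models \phi$ and $A \models \neg\phi$ for every $A \in \mathcal{A}$, so the syntactic obstruction to $\mathcal{A} \equiv_r \mathcal{A}$ is vacuous. Symmetry follows directly from the $\phi \leftrightarrow \neg\phi$ duality noted above: a witness that $\mathcal{A} \not\equiv_r \mathcal{B}$ is, after negation, a witness that $\mathcal{B} \not\equiv_r \mathcal{A}$.

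Transitivity is the substantive step, and I would prove it by contrapositive. Suppose $\mathcal{A} \equiv_r \mathcal{B}$ and $\mathcal{B} \equiv_r \mathcal{C}$, and assume toward contradiction that $\mathcal{A} \not\equiv_r \mathcal{C}$, witnessed (without loss of generality, using the aforementioned symmetry) by a sentence $\phi$ with at most $r$ quantifiers such that every $A \in \mathcal{A}$ satisfies $\phi$ while every $C \in \mathcal{C}$ satisfies $\neg\phi$. I would then case-split on how $\phi$ behaves across $\mathcal{B}$: if every $B \in \mathcal{B}$ satisfies $\phi$, the same $\phi$ witnesses $\mathcal{B} \not\equiv_r \mathcal{C}$, contradicting one hypothesis; if every $B \in \mathcal{B}$ satisfies $\neg\phi$, the same $\phi$ witnesses $\mathcal{A} \not\equiv_r \mathcal{B}$, contradicting the other.

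The main obstacle I anticipate is the remaining ``mixed'' case, in which $\mathcal{B}$ contains both a structure satisfying $\phi$ and one satisfying $\neg\phi$; here $\phi$ itself fails to separate either $(\mathcal{A},\mathcal{B})$ or $(\mathcal{B},\mathcal{C})$ uniformly, and one can no longer simply recycle $\phi$. To handle this case I would try either to construct a derived sentence from $\phi$, for instance by a Boolean combination with a short sentence isolating the $\phi$-part of $\mathcal{B}$, that does separate one of the remaining pairs within the $r$-quantifier budget, or, alternatively, to bypass syntax altogether and compose a winning Duplicator strategy on $(\mathcal{A}, \mathcal{B})$ with one on $(\mathcal{B}, \mathcal{C})$ into a winning Duplicator strategy on $(\mathcal{A}, \mathcal{C})$, exploiting the oblivious-strategy formulation introduced earlier in the paper to route Spoiler's moves through $\mathcal{B}$ as a relay.
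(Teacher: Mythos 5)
Your overall route is the same as the paper's: invoke Theorem~\ref{thm:main1} to replace the game-theoretic definition of $\equiv_r$ by the non-existence of a separating sentence with at most $r$ quantifiers, and then check the three axioms syntactically. Reflexivity and symmetry are fine. For transitivity you have put your finger on exactly the right difficulty, and it is one that the paper's own proof silently elides: the paper asserts that $\mathcal{A} \equiv_r \mathcal{B}$ means $\mathcal{A}$ and $\mathcal{B}$ ``agree on the same sentences,'' which tacitly assumes that every sentence with at most $r$ quantifiers is uniformly true or uniformly false across each set. Your ``mixed'' case is precisely where that assumption fails, and neither of your proposed repairs can succeed there, because transitivity is genuinely false for arbitrary sets of structures. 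Concretely, let $A$ and $C$ be linear orders of sizes $1$ and $2$, take $r=2$, and set $\mathcal{A}=\{A\}$, $\mathcal{B}=\{A,C\}$, $\mathcal{C}=\{C\}$. Then $\mathcal{A}\equiv_2\mathcal{B}$ and $\mathcal{B}\equiv_2\mathcal{C}$ hold vacuously: any separating sentence would have to be true and false in the shared structure, and game-theoretically Duplicator survives by mirroring play on the two copies of $A$ (respectively, of $C$). Yet $\mathcal{A}\not\equiv_2\mathcal{C}$, since $\exists x\exists y(x<y)$ separates them. The same example shows why the strategy-composition idea breaks down: the surviving partial isomorphism in the $(\mathcal{A},\mathcal{B})$ game and the one in the $(\mathcal{B},\mathcal{C})$ game need not route through the same member of $\mathcal{B}$, so there is nothing to relay.

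The statement is salvageable, and is in fact only applied in the paper, in a restricted setting where the mixed case cannot arise: each set is a singleton (as in the Laddering Up Lemma), or more generally each set is ``complete'' for $r$-quantifier sentences, meaning every sentence with at most $r$ quantifiers takes a uniform truth value across it. Under that hypothesis your two-case argument is the whole proof of transitivity, and it is essentially the paper's argument made precise. So your instincts are correct that the mixed case is the crux, but the right resolution is to add a hypothesis excluding it rather than to hunt for a cleverer separating sentence or a relayed Duplicator strategy.
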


\begin{proof} That the relation $\equiv_r$ is reflexive and symmetric follows immediately from the definition. For transitivity, suppose there are three singleton sets of structures, $\mathcal{A} = \{A\}, \mathcal{B} = \{B\}$ and $\mathcal{C} = \{C\}$ such that $\mathcal{A} \equiv_r \mathcal{B}$ and $\mathcal{B} \equiv_r \mathcal{C}$.  By the Equivalence Theorem \ref{thm:main1}, $\mathcal{A} \equiv_r \mathcal{B}$ implies that $\mathcal{A}$ and $\mathcal{B}$ agree on the same sentences with at most $r$ quantifiers. Similarly, $\mathcal{B} \equiv_r \mathcal{C}$ implies that $\mathcal{B}$ and $\mathcal{C}$ agree on the same set of sentences with at most $r$ quantifiers. Hence, $\mathcal{A}$ and $\mathcal{C}$ agree on these same set of sentences. By the Equivalence Theorem again, it follows that $\mathcal{A} \equiv_r \mathcal{C}$.
\end{proof}

\begin{observation} \label{obs:phokion}
Among non-singleton sets of structures, the property of there being a Duplicator-winning strategy between two sets does \textit{not} induce an equivalence relation. 
\end{observation}
The induced relation between pairs of sets of structures can fail to be transitive. To see this, let $A$ be a graph with no self-loops among the nodes, and let $C$ be a graph with at least one node with a self-loop. Now let $\mathcal{A} = \{A\}, \mathcal{B} = \{A, C\}$, and $\mathcal{C} = \{C\}$. Duplicator wins a 1-round M-S game on $\mathcal{A}$ and $\mathcal{B}$ simply by focusing on $A$, and, analogously, Duplicator wins a 1-round M-S game on $\mathcal{B}$ and $\mathcal{C}$ simply by focusing on $C$.  However, Spoiler wins the 1-round M-S game on $\mathcal{A}$ and $\mathcal{C}$ by playing on a node in $C$ that has a self-loop.

We note that Definition 3.1 and Lemma 3.2  are incorrectly stated for sets of structures in both the LICS conference \cite{Fagin21} and extended ArXiV \cite{Fagin21a} versions of this manuscript.  The fallacious bit of reasoning is that in the case of non-singleton sets $\mathcal{A}, \mathcal{B}$ of structures, the fact that Duplicator has a winning strategy in an $r$-round M-S game does \textit{not} imply that all elements of $\mathcal{A}$ and $\mathcal{B}$ agree on every $r$-quantifier sentence, as was claimed -- consider, for example, the sets $\mathcal{A}$ and $\mathcal{B}$ in the preceding paragraph. We thank Phokion Kolaitis for pointing out this error to us.

\medskip

Before proceeding further let us establish some terminology that is intended to make the reading smoother.  In cases where there are multiple linear orders on one side or another we often refer to the different linear orders as different ``boards'' that Spoiler and Duplicator play on.
 
When we say a ``$K$ versus $K'$ game'' or a ``$K$ vs.\ $K'$ game'', we mean an \ms game played on $(\mathcal{A}, \mathcal{B})$ where $\mathcal{A}$ consists of a single linear order of size $K$, and $\mathcal{B}$ consists of a single linear order of size $K'$.

We will typically play games where $\mathcal{A}$ and $\mathcal{B}$ each consist of a single linear order as above. In this context, as we did in the Introduction, we will use $B$ to denote the \textit{big} linear order and $L$ to denote the \textit{little} linear order.

As is standard in model theory, we assume a non-empty universe so all linear orders are of size at least $1$. With reference to the function $g$ in Definition \ref{def:g}, we begin by establishing:

\begin{lem} \label{lemma:g_of_1}
$g(1) = 1$.
\end{lem}

\begin{proof}
The sentence $\exists x(x=x)$ is true for all linear orders so that $g(1) \geq 1$.  Moreover, Duplicator can win $1$-round \ms games whenever both linear orders are of size $1$ or greater, which implies that $g(1) \leq 1$.
\end{proof}

\section{Towards Establishing Upper Bounds on $g(r)$}\label{sec:upperbds}

A potent tool for finding an upper bound $k$ on the value of $g(r)$ will be to find strategies such that Duplicator can win $r$-round games on a \textit{pair of singleton linear orders} whenever the sizes of the linear orders are at least $k$. All of our upper bounds are established in this manner. 

Since we have established that $g(1) = 1$, 
we start establishing upper bounds at $g(2)$.

\begin{lem} \label{lemma:g_upper_of_2}
Duplicator can win $2$-round \ms games whenever both linear orders are of size $2$ or greater, and hence $g(2) \leq 2$.
\end{lem}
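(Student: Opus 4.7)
The plan is to exhibit Duplicator's oblivious strategy and verify by a short case analysis that it wins whenever $|B| = K \geq 2$ and $|L| = K' \geq 2$. By the symmetry of the MS game with respect to swapping $(\mathcal{A}, \mathcal{B})$, I would assume without loss of generality that Spoiler's first move is in $B$, selecting some element $B(i)$. Duplicator responds obliviously, producing in $\mathcal{B}_1$ the $K'$ labeled copies $(L, L(j))$ for $j = 1, \dots, K'$.

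I would then split Spoiler's round-2 move into two cases. In the first case, Spoiler plays again in $\mathcal{A}_1$, choosing some $B(i')$ (possibly on top of $B(i)$). Duplicator, playing obliviously, generates from every $(L, L(j))$ all $K'$ descendants $(L, L(j), L(j'))$. It then suffices to find $j, j'$ for which the pair $(L(j), L(j'))$ has the same order-type as $(B(i), B(i'))$. Since $K' \geq 2$, the pairs $(1,1)$, $(1,2)$, and $(2,1)$ realize respectively the relations $=$, $<$, and $>$, and one of these matches $(B(i), B(i'))$, yielding a partial isomorphism.

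In the second case, Spoiler plays in $\mathcal{B}_1$, selecting some element $L(k_j)$ in each $(L, L(j))$. Duplicator's oblivious response produces the $K$ descendants of $(B, B(i))$ obtained by selecting every $B(i'')$. I would then exhibit a compatible $(j, i'')$ as follows: try $j = 1$, where necessarily $k_1 \geq 1$; if $k_1 = 1$ pick $i'' = i$, and if $k_1 > 1$ pick $i'' = i+1$, available provided $i < K$. The boundary case $i = K$ is handled symmetrically using $j = K'$ with $k_{K'} \leq K'$, which works because $K \geq 2$.

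No step is a real obstacle; the argument reduces to a tidy case check in which the hypotheses $K, K' \geq 2$ provide exactly enough room for Duplicator's oblivious moves. The only subtle point worth flagging in the write-up is the need to handle Spoiler's ``move on top'' plays ($i' = i$ or $k_j = j$), where the partial-isomorphism condition demands that Duplicator reproduce the equality rather than a strict order; this is precisely what the diagonal pairs $(1,1)$ and $i'' = i$ take care of above.
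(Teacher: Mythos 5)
Your proof is correct. It is worth noting how it differs in organization from the paper's: the paper describes an explicit, economical Duplicator strategy (mirror end elements with a single copy; answer a middle element with $L(1)$ and $L(2)$ on two copies when $|L|=2$, or with any single middle element when $|L|\geq 3$), and then verifies it by casework on end versus middle first moves and on the size of $L$. You instead run the fully oblivious strategy and observe that winning a $2$-round game over the vocabulary $\{<\}$ with no constants reduces to realizing, in the other order, the order type ($=$, $<$, or $>$) of the ordered pair Spoiler selects --- which any linear order with at least two elements can do. Your handling of the side-switching case (Spoiler's second move in $\mathcal{L}_1$) by inspecting only the extreme copies $j=1$ and $j=K'$ is a clean way to dispose of the boundary cases $i=1$ and $i=K$, and your explicit treatment of play-on-top moves via the diagonal pairs is exactly the subtlety that needs flagging. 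What the paper's version buys is a smaller strategy (at most two copies ever needed) that foreshadows the style of analysis used for the harder $3$- and $4$-round games; what yours buys is uniformity --- no split on $|L|=2$ versus $|L|\geq 3$ and no end/middle distinction in round one.
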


\begin{proof}
In the Introduction we considered a $2$-round \ms game on two linear orders of sizes $|B| = 3, |L| = 2$. 
Figure \ref{fig:3_vs_2_singleton-intro} is given again here for ease of reference.
\begin{figure} [ht]
\centerline{\scalebox{0.40}{\includegraphics{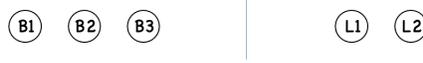}}}
\caption{The case $|B| = 3, |L| = 2$.}
\label{fig:3_vs_2_singleton}
\end{figure}
The Introduction covered the case where Spoiler selects the middle $B$ element, $B(2)$, in the first round. The case where Spoiler picks an end element from either the $L$ or $B$ boards in the first round is easier -- Duplicator just picks the corresponding end element from the other linear order and she does not even need to make a second copy of the board to win. For example, in response to $B(1)$, Duplicator will play $L(1)$, or in response to $L(2)$, Duplicator will play $B(3)$, in either case leading to simple wins.

In the example given in the Introduction, where Spoiler played $B(2)$, once Duplicator makes copies and plays different moves in each copy, we render the game after round 1, as in Figure \ref{fig:3_vs_2_rd1}.
\begin{figure} [ht]
\centerline{\scalebox{0.40}{\includegraphics{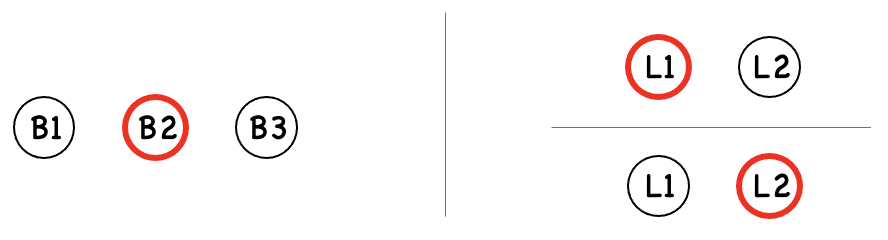}}}
\caption{In response to Spoiler playing $B(2)$, Duplicator makes a second copy of the $L$ board and place $L(1)$ on one board and $L(2)$ on the other board.  (The diagrams omit parentheses.)}
\label{fig:3_vs_2_rd1}
\end{figure}

\medskip

To complete the analysis, we must show that Duplicator wins 
whenever $2 \leq |L| < |B|$. Let us begin with the case $|L| = 2$ and $|B| > 3.$
 If Spoiler picks an end element from $B$ or from $L$, then Duplicator picks the corresponding end element from the opposite side and wins. On the other hand, if Spoiler picks a non-end element on $B$, Duplicator picks $L(1)$ on top and $L(2)$ on bottom, winning just like in the introduction. We are left to consider the case when $3 \leq |L| < |B|$. Playing end moves from either $B$ or $L$ have the same effect as before. While if Spoiler picks a non-end element from either $B$ (or $L$), Duplicator wins by playing \textit{any} non-end element from, respectively, $L$ (or $B$), guaranteeing a $2$-round win. 
\end{proof}

\begin{lem} \label{lemma:g_upper_of_3}
Duplicator can win $3$-round \ms games on linear orders whenever both linear orders are of size $4$ or greater, and hence $g(3) \leq 4$.
\end{lem}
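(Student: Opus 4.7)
The plan is to exhibit a winning strategy for Duplicator in the 3-round \ms game, namely the oblivious strategy already discussed: in each round, Duplicator makes copies of all her current boards and plays every possible element of her side on them. Since the oblivious strategy is as powerful as any Duplicator strategy, we need only verify that for every Spoiler play, some resulting pair of boards in $\mathcal{A}_3 \times \mathcal{B}_3$ yields a partial isomorphism. The verification proceeds by case splitting on the sequence of sides (either $B$ or $L$) on which Spoiler plays in each of the three rounds, giving $2^3 = 8$ patterns that reduce to four essentially distinct cases by the $B \leftrightarrow L$ symmetry.

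The ``pure'' patterns $BBB$ and $LLL$ are immediate: one side ends up with a single labeled triple, while the oblivious responses on the other side enumerate every labeled triple there; since $|L|, |B| \geq 4 \geq 3$, every three-element order type is realizable on both sides, so some pair matches as a partial isomorphism. For the mixed patterns ($BBL$, $BLB$, $BLL$, and their $B/L$-swapped counterparts) I would use the fact that a partial isomorphism between selected elements of two linear orders is precisely a matching of the pairwise $<, =, >$ relations. After two elements are selected on each side, the third element's relative position falls into one of (at most) five interval types: left tail, first singleton, middle, second singleton, right tail. For Duplicator to match Spoiler's third move, the intervals that are non-empty in $L$ relative to her chosen pair must include the intervals that are non-empty in $B$ relative to Spoiler's pair. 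In each mixed case I would show that Duplicator's oblivious play contains a pair $(l_1, l_2) \in L \times L$ whose interval-emptiness pattern is appropriately compatible, which amounts to small casework that hinges on $|L| \geq 4$; for example, in the $BBL$ case there are only seven possible emptiness patterns that Spoiler's $(b_1, b_2)$ can induce in $B$, and for each one Duplicator's oblivious enumeration of $L \times L$ contains a pair of the same order type with at least the same empty intervals.

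The hard part will be the case $BLB$ (and by symmetry $LBL$), where Spoiler makes \emph{two} adaptive choices, $l_2(l_1')$ and $b_3(b_2')$, each depending on Duplicator's intervening oblivious responses. I would sub-case by the position of $b_1$ in $B$ (endpoint, near-endpoint, or interior) and then argue that, whatever function $l_2(\cdot)$ Spoiler uses, among Duplicator's $|L|$ oblivious values of $l_1'$ at least one produces a pair $(l_1', l_2(l_1'))$ whose two-element type and interval structure are compatible with some $b_2' \in B$; for that $(l_1', b_2')$, Duplicator's subsequent oblivious $l_3'$ can match whatever $b_3(b_2')$ Spoiler plays. The threshold $|L| \geq 4$ is exactly what is needed: with $|L| = 3$, Spoiler can design $l_2$ and $b_3$ functions thwarting every Duplicator combination, consistent with the eventual value $g(3) = 4$. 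Once all cases are verified, the conclusion $g(3) \leq 4$ follows via Theorem~\ref{thm:main1}.
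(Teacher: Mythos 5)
Your framework is sound: against the oblivious strategy the game is determined by Spoiler's choices, so organizing the verification by his side-pattern ($2^3$ patterns up to symmetry) is legitimate, the reduction of "partial isomorphism on triples" to matching order types plus interval-emptiness bookkeeping is exactly right, and the pure patterns $BBB$/$LLL$ and the pattern $BBL$ do go through as you describe. The problem is that what you have written for the remaining patterns is a plan, not a proof. For $BLB$/$LBL$ and $BLL$/$LBB$ you say "I would sub-case by the position of $b_1$... and then argue that, whatever function $l_2(\cdot)$ Spoiler uses, ... at least one produces a pair ... compatible with some $b_2'$." That sentence is precisely the content of the lemma, and it is asserted rather than established. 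These side-switching patterns are exactly where the paper's own proof spends nearly all of its effort: after Spoiler opens with $B(3)$ in the $5$ vs $4$ game and then plays round $2$ in $L$, the analysis must handle Spoiler playing \emph{on top of} $L(1)$ on one board and $L(4)$ on another (forcing Duplicator to hedge, since no single $B$-board can stay compatible with both), uses the pigeonhole observation that at least two of Spoiler's non-on-top round-$2$ moves land on the same side of the round-$1$ marks, and then requires a further wrinkle when $|B|>5$ (Duplicator can no longer survive on a single board and must split her defense across two). None of this is visible in, or guaranteed by, your outline, so the claim that the casework "hinges on $|L|\ge 4$" and closes is unverified.

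Two smaller points. First, your two formulations of the compatibility condition point in opposite directions ("the intervals non-empty in $L$ must include those non-empty in $B$" versus "a pair with at least the same empty intervals"); the correct containment depends on which side Spoiler plays in round $3$ — when he plays last in $L$ you need the nonempty intervals of the $L$-pair to be contained in those of the $B$-pair, and when he plays last in $B$ you may hedge across several $L$-boards, so the condition becomes per-interval rather than per-board. You should state this carefully, since conflating the two directions is an easy way to "verify" a case incorrectly. Second, the lemma quantifies over all $|B|,|L|\ge 4$, and the required argument is not monotone in $|B|$ in an obvious way (larger $|B|$ gives Spoiler more distinct opening positions in the $B??$ patterns), so the generalization beyond the base case $5$ vs $4$ needs explicit treatment — either direct casework or an appeal to an equivalence/laddering argument such as the paper's Lemma \ref{lemma:prenex_equiv}.
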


\begin{proof}
Let us start with the base case $|B|=5, |L|=4$. See Figure \ref{fig:5_vs_4_base}. 
\begin{figure} [ht]
\centerline{\scalebox{0.30}{\includegraphics{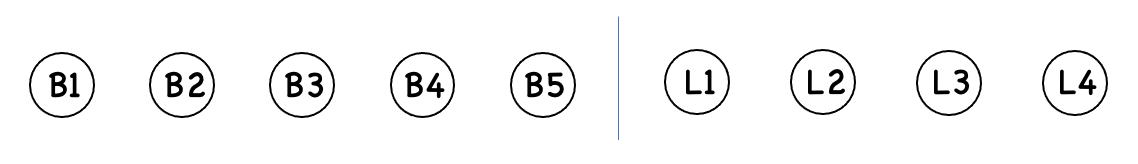}}}
\caption{The case $|B| = 5, |L| = 4$.}
\label{fig:5_vs_4_base}
\end{figure}
Duplicator-winning outcomes associated with all Spoiler 1st round plays, \textit{other than} $B(3)$, are easy to analyze and described in the table below. Further explanation of the notation used in the table is given in the text that follows it.

\medskip

\begin{small}
\begin{center}
\begin{tabular}
{|cc|cc|cc|}
\hline
\multicolumn{2}{|c|}{\textbf{Round 1}} & \multicolumn{2}{|c|}{\textbf{Round 2}} & \multicolumn{2}{|c|}{\textbf{Round 3}} \\
\hline
\textbf{S} & \textbf{D} & \textbf{S} & \textbf{D} & \multicolumn{2}{|l|}{}\\
\hline
$B(1)$ & $L(1)$ & \multicolumn{2}{|l|}{D wins by reduction to $f(2)$} & \multicolumn{2}{|l|}{} \\
$B(2)$ & $L(2)$ & $B(1)$ & $L(1)$ & \multicolumn{2}{|l|}{D wins} \\
 &  & $B(2)$ & $L(2)$ & \multicolumn{2}{|l|}{D wins} \\
 &  & $B(3)$ & $L(3)$ & \multicolumn{2}{|l|}{D wins} \\
 &  & $B(4)$ & $L(3)$ & \multicolumn{2}{|l|}{D wins on this board or board below} \\
 &  & & $L(4)$ & \multicolumn{2}{|l|}{} \\
 &  & $B(5)$ & $L(4)$ & \multicolumn{2}{|l|}{D wins} \\ 
 &  & $L(1)$ & $B(1)$ & \multicolumn{2}{|l|}{D wins by transposition to Rd2. $B(1)$,$L(1)$} \\ 
 &  & $L(2)$ & $B(2)$ & \multicolumn{2}{|l|}{D wins by transposition to Rd2. $B(2)$,$L(2)$} \\ 
 &  & $L(3)$ & $B(3)$ & \multicolumn{2}{|l|}{D wins by transposition to Rd2. $B(3)$,$L(3)$}  \\ 
 &  & $L(4)$ & $B(5)$ & \multicolumn{2}{|l|}{D wins by transposition to Rd2. $B(5)$,$L(4)$}  \\
 $B(3)$ & $L(1)$ & \multicolumn{2}{|l|}{\textbf{Described in the text}}  & \multicolumn{2}{|l|}{} \\
  & $L(2)$ & \texttt{"} & \texttt{"} & \multicolumn{2}{|l|}{} \\
  & $L(3)$ & \texttt{"} & \texttt{"} & \multicolumn{2}{|l|}{} \\
  & $L(4)$ & \texttt{"} & \texttt{"} & \multicolumn{2}{|l|}{} \\
 $B(4)$ & $L(3)$ & \multicolumn{2}{|l|}{Symmetrical to Rd1. $B(2)$,$L(2)$}  & \multicolumn{2}{|l|}{} \\
 $B(5)$ & $L(4)$ & \multicolumn{2}{|l|}{Symmetrical to Rd1. $B(1)$,$L(1)$}  & \multicolumn{2}{|l|}{} \\
 $L(1)$ & $B(1)$ & \multicolumn{2}{|l|}{Transposition to Rd1. $B(1)$,$L(1)$}  & \multicolumn{2}{|l|}{} \\
 $L(2)$ & $B(2)$ & \multicolumn{2}{|l|}{Transposition to Rd1. $B(2)$,$L(2)$}  & \multicolumn{2}{|l|}{} \\
 $L(3)$ & $B(4)$ & \multicolumn{2}{|l|}{Transposition to Rd1. $B(4)$,$L(3)$}  & \multicolumn{2}{|l|}{} \\
 $L(4)$ & $B(5)$ & \multicolumn{2}{|l|}{Transposition to Rd1. $B(5)$,$L(4)$}  & \multicolumn{2}{|l|}{} \\
\hline
\end{tabular}
\end{center}
\end{small}

A few notes on the table:
Moves given in the respective \textbf{S} columns are Spoiler plays while those given in the \textbf{D} columns are Duplicator plays.  When we say that a game is winnable for Duplicator ``by reduction to $f(k)$,'' for some $k$, we mean that the game from this point on can be won by Duplicator simply as a $1$-board standard \ef game. Recall the definition of $f$ from Theorem \ref{thm:f}. In these cases, typically an end element has been played, and from an \ef point of view, it is of no benefit to Spoiler to subsequently play on top of the 1st element, so we may remove these elements from both $L$ and $B$ and consider the game, in subsequent rounds, to be played solely on the remaining elements. The game reduces to $f(k)$ if there are $k$ rounds yet to be played and both sets of remaining elements are at least of size $f(k)$. 

When we say that a given sequence of moves $X,Y$ is ``symmetrical'' to another sequence of moves $X', Y'$, we mean that you can arrive at one sequence by taking the mirror image of the other sequence with respect to the center of the boards, where the other sequence has already been analyzed.

We say that a sequence of moves is a ``transposition'' of another sequence of moves if the end board positions are the same but the sequence of moves leading to that board position is different.

Note that if Duplicator is able to win in a single additional move by focusing just on a single board on the $B$ and $L$ sides without any special strategy, we do not describe every possible move and response sequence (there are just too many), but rather just indicate that ``D wins.''

In response to the one tricky Spoiler 1st round move of $B(3)$, Duplicator creates additional copies of the $L$ board and makes every possible move, as depicted in Figure \ref{fig:5_vs_4_rd1}.
\begin{figure} [ht]
\centerline{\scalebox{0.35}{\includegraphics{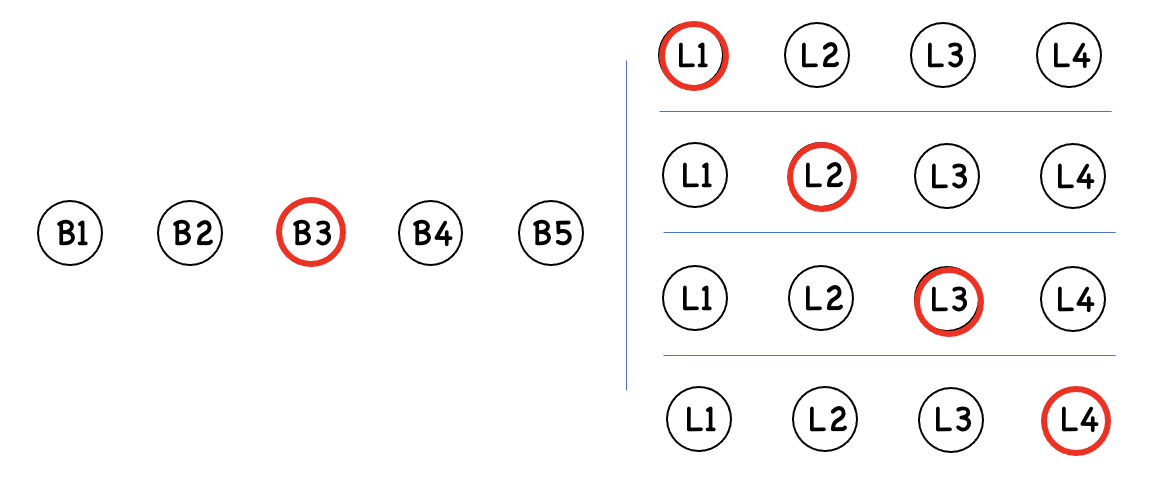}}}
\caption{After Spoiler plays $B(3)$ from the $B$ side, Duplicator plays $L(1)$, $L(2)$, $L(3)$ and $L(4)$ on different boards on the $L$ side.}
\label{fig:5_vs_4_rd1}
\end{figure}
Let's now analyze the possible Spoiler 2nd round responses. 

\smallskip

\noindent \underline{Case where Spoiler makes 2nd round move on $B$}: 
\begin{small}
\begin{center}
\begin{tabular}
{|cc|}
\hline
\multicolumn{2}{|c|}{\textbf{Round 2}} \\
\hline 
\textbf{Spoiler} & \textbf{Duplicator} \\
\hline
$B(1)$ & $L(1)$ on 3rd $L$ board insuring a win on that board \\
$B(2)$ & $L(2)$ on 3rd $L$ board insuring a win on that board \\
$B(3)$ & $L(3)$ on 3rd $L$ board insuring a win on that board \\
$B(4)$ & $L(3)$ on 2nd $L$ board insuring a win on that board \\
$B(5)$ & $L(4)$ on 2nd $L$ board insuring a win on that board \\
\hline
\end{tabular}
\end{center}
\end{small}

\bigskip

\noindent  \underline{Case where Spoiler makes 2nd round moves on $L$}:

\smallskip

First consider the possibility of Spoiler playing atop an existing move. If he plays atop $L(2)$ on board 2, or $L(3)$ on board 3, then Duplicator will play atop $B(3)$, ensuring victory on the associated boards in another move. 
Analogously, if Spoiler plays on top of \textit{both} $L(1)$ on the top board and $L(4)$ on the bottom board, then Duplicator can play atop $B(3)$ guaranteeing a win on one of these $L$ boards or the other. Thus, we may assume Spoiler plays atop \textit{at most} one of the existing moves, with that one move being atop either $L(1)$ or $L(4)$. 
Of the at least three boards in which Spoiler does \textit{not} play on top of existing moves, the moves are either to the right of the existing moves, or to the left of the existing moves, and hence, there must be at least two played to one side or the other of the existing moves.
Without loss of generality, assume that at least two of these moves are to the right of the existing moves. 

Suppose that one of the moves to the right is on the 2nd board. An $L(3)$ move would be met by a $B(4)$ response by Duplicator, while an $L(4)$ move would be met by a $B(5)$ response, in either case leading to a single board win for Duplicator. Thus, we can assume that the two 2nd round moves to the right of the existing moves are on the 1st and 3rd boards -- the only two boards we need to consider to conclude this bit of the analysis.  The element $L(4)$ must be selected from board 3. If $L(2)$ is selected from board 1 then Duplicator wins by selecting $B(4)$: any 3rd round Spoiler move on $B$ is parried on one of the two $L$ boards, while if Spoiler plays his 3rd round from $L$, Duplicator can maintain an isomorphism with any move played on either the 1st or 3rd board.  We are left to consider just the possibilities that Spoiler plays $L(3)$ or $L(4)$ for his 2nd round move on board 1. Suppose he plays $L(3)$. Duplicator can then win by playing $B(5)$: if Spoiler plays his 3rd move from $B$ then $B(4)$ is met with $L(2)$ on the 1st board and any other $B$ move is easily parried on the 3rd $L$ board. On the other hand, if Spoiler plays his 3rd move from $L$ then whatever he does on the 3rd $L$ board can be matched with an isomorphism-preserving move on $B$. In the final case where Spoiler plays $L(4)$ for his 2nd round move on the 1st $L$ board, Duplicator responds with $B(5)$, guaranteeing an isomorphism. It follows that Duplicator can always win the $|B| = 5, |L| = 4$, $3$-round game.

To complete the argument we must show that Duplicator can also win when one or both of $|B| > 5$ and $|L| > 4$. Let us start by considering the case of $|B| > 5, |L| = 4$. The analysis for any initial Spoiler move where he plays either an end move or a next-to-end move is precisely as earlier. Consider all other moves on $B$ to be ``middle'' moves. Any such middle move is parried just like in Figure \ref{fig:5_vs_4_rd1}, by playing $L(1)$, $L(2)$, $L(3)$, and $L(4)$ on the different boards on the $L$ side. The only new wrinkle in this analysis occurs in the case where Spoiler first plays a middle move on $B$. We illustrate for $|B| = 6$ and where Spoiler's first play is $B(3)$. He can now play $B(5)$ and Duplicator must take new precautions because she cannot win by playing on a single board. She can, however, play $L(3)$ on the first $L$ board, covering further play on the right of $B(3)$ by Spoiler, while also playing, say, $L(3)$ on board 2, to cover potential play by Spoiler on the left of $B(3)$. The rest of the analysis is precisely as in the smaller $|B|$ case. 

Finally, consider the case $|B| > |L| > 4$. If Spoiler plays an end element or next-to-end element, Duplicator follows suit, playing, respectively, an end element or next-to-end element from the same side on the opposite board, leading to an easy victory: if Spoiler plays on both sides of the 1st round move in subsequent rounds he will clearly lose, while playing just on one side reduces to a losing $2$ round game for him. Similarly, if Spoiler plays a ``middle'' element, Duplicator can respond playing any middle element from the opposite side. Again, if Spoiler plays on both sides of the 1st round move in subsequent rounds he will clearly lose, but playing just on one side reduces to a losing $2$ round game for him. 

That completes the argument that Duplicator can win an \ms game of $3$ rounds whenever the size of all boards is at least $4$ and hence establishes the lemma. 
\end{proof}

\medskip

At this point it is natural to suspect that one can build up upper bounds recursively in a relatively simple manner. However, such an approach runs into unexpected difficulties, as we describe in the next section.

\section{Interlude: Why Naive Recursion cannot be used\texorpdfstring{\\}{} to Build up Duplicator Winning Strategies in \MS Games} \label{sec:interlude}

It is worth pausing to understand why a simple idea to use recursion to build up Duplicator-winning strategies, and hence upper bounds on $g(r)$, fails. Via Lemma \ref{lemma:g_upper_of_3}, we have established the fact that $g(3) \leq 4$ by showing that Duplicator wins $3$-round \ms games if the sizes of both linear orders are $4$ or larger. It is tempting to try to use this fact to produce a Duplicator strategy for winning $4$-round games using recursion. To understand the problematic logic, it suffices to consider a $4$-round game on boards of sizes $9$ and $10$.  The erroneous argument runs as follows. Suppose Spoiler plays $L(5)$ on his 1st move. Duplicator can then simply reply with the single move $B(5)$ (so the erroneous reasoning goes), as in Figure \ref{fig:10_vs_9_recursion_attempt}.
\begin{figure} [ht]
\centerline{\scalebox{0.33}{\includegraphics{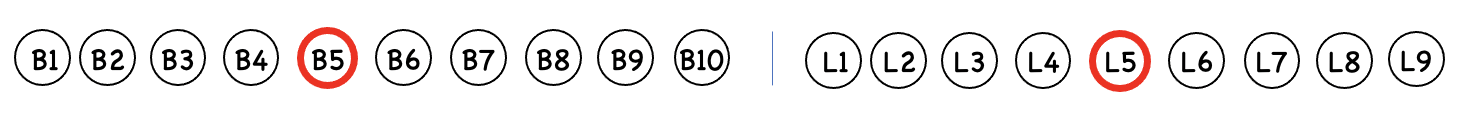}}}
\caption{A simple attempt to arrive at a Duplicator-winning strategy for the $10$ versus $9$ game.  The first round moves are given in red.}
\label{fig:10_vs_9_recursion_attempt}
\end{figure}
Since there are five unplayed elements to the right of $B(5)$ and four unplayed elements to the right of $L(5)$, Duplicator should now just be able to mimic Spoiler's moves at, or to the left of, $B(5)$/$L(5)$ and otherwise play moves to the right of $B(5)$/$L(5)$ as if it were a $3$-round, $5$ versus $4$ game, which we know is winnable by Duplicator. In fact, as we will learn in Lemma~\ref{lemma:g_underline_of_4}, the $10$ \vs $9$ game is winnable by \textit{Spoiler}. Hence this strategy does not work. 

The reason the strategy doesn't work is that there is interaction between play on the two sides and there are moves to the left of the $5$ \vs $4$ sub-game that are more powerful for Spoiler (in the sense of breaking more to-that-point maintained partial isomorphisms) than any moves available in the $5$ \vs $4$ game. This additional power is achieved by Spoiler playing atop an already played move that is not part of the $5$ \vs $4$ sub-game at a critical juncture. 

Indeed, for his 2nd round move, Spoiler will select $B(8)$, and as we know from the analysis of the $5$ \vs $4$ games, this will require Duplicator to make copies of the $L$ board and play each of the possible moves to the right of $L(5)$, as depicted in Figure \ref{fig:10_vs_9_recursion_attempt2}.
\begin{figure} [ht]
\centerline{\scalebox{0.40}{\includegraphics{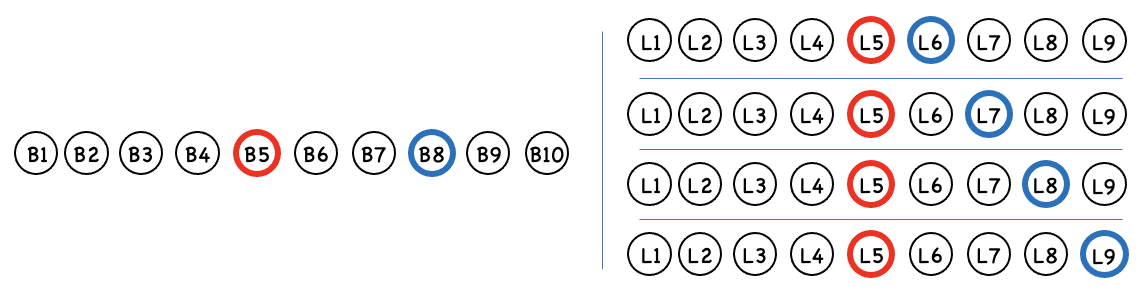}}}
\caption{The 2nd round plays of the simple Duplicator-winning strategy for the $10$ versus $9$ game.  The first round moves are given in red, 2nd round moves in blue.}
\label{fig:10_vs_9_recursion_attempt2}
\end{figure}
At this point Spoiler will play on top of $L(5)$ on the 1st $L$ board (a move that \textit{wasn't available} in the $5$ \vs $4$ game), on $L(6)$ on the 2nd $L$ board, on $L(9)$ on the 3rd $L$ board, and on top of $L(9)$ on the 4th $L$ board. See Figure \ref{fig:10_vs_9_recursion_attempt3a}.
\begin{figure} [ht]
\centerline{\scalebox{0.40}{\includegraphics{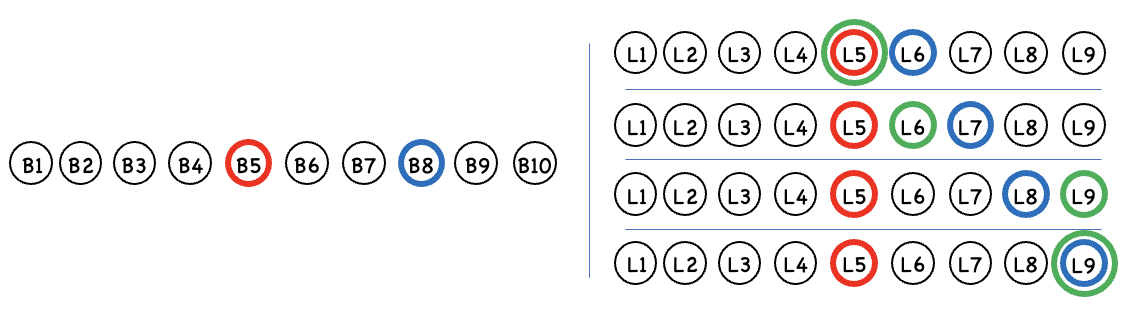}}}
\caption{3rd round plays (in green) that foils the simple Duplicator-winning strategy for the $10$ versus $9$ game.  The first round moves are given in red, 2nd round moves in blue.}
\label{fig:10_vs_9_recursion_attempt3a}
\end{figure}
Spoiler is going to play his 4th and final round moves from $B$, but first let's see what happens in response to the various Duplicator 3rd round moves (we assume the oblivious strategy). The only move that would keep an isomorphism with the 1st $L$ board is a play atop $B(5)$ -- breaking isomorphisms with any $L$ board but the top one. But then Spoiler will play $B(6)$ on his 4th move on the same board, and Duplicator will then not be able to keep the isomorphism going with the top $L$ board. On the other hand, to maintain an isomorphism with the 2nd $L$ board, Duplicator must play either $B(6)$ or $B(7)$, and in so doing, break an isomorphism with any other $L$ board. However, Spoiler will respond with $B(7)$ if $B(6)$ was played, or $B(6)$ if $B(7)$ was played, and Duplicator will have no retort. To maintain an isomorphism with the 3rd $L$ board, Duplicator must play $B(9)$ or $B(10)$, again breaking the isomorphism with all other $L$ boards, and Spoiler will respond by playing $B(10)$ if $B(9)$ was played and vice versa. Finally, to keep an isomorphism going with the bottom $L$ board, Duplicator must play on top of $B(8)$, again breaking the isomorphisms with other $L$ boards. But then Spoiler plays $B(9)$ and Duplicator cannot respond.

Thus, trying to replicate the $5$ \vs $4$ strategy to the right of $B(5)$/$L(5)$ and mimicking play on top of or to the left of $B(5)$/$L(5)$ does not work for Duplicator. 
When Spoiler played on top of $L(5)$ for one of his 3rd round moves, he was utilizing a move that was \textit{not} available to him in the $5$ \vs $4$ game. 
He \textit{did} have the option to play on top of $L(6)$ (which was labeled $L(1)$ in the $5$ \vs $4$ game) -- but doing so would have duplicated the isomorphism type of the bottom board. Playing on top of $L(5)$ has stronger effect since if Duplicator is to maintain isomorphisms with both the top and bottom boards, she cannot do it on a single board on the left, she must make a copy and maintain an isomorphism with the top-right board in one copy, and maintain an isomorphism with the bottom-right board in the other copy. 

Before moving on, it is worth noting that it was not strictly necessary for Spoiler to play on top of $L(5)$ on the 1st $L$ board on his 3rd move -- any move on $L(1)$--$L(4)$ would have worked just as well. See Figure \ref{fig:10_vs_9_recursion_attempt3b}.
\begin{figure} [ht]
\centerline{\scalebox{0.40}{\includegraphics{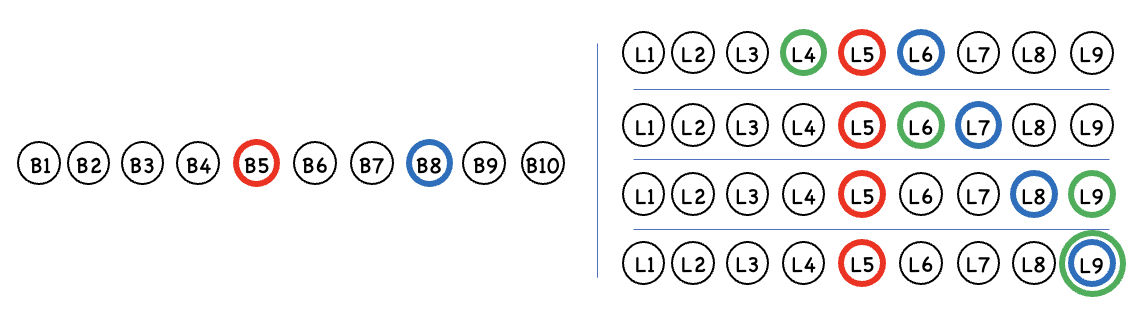}}}
\caption{A second example of 3rd round plays (in green) that foils the simple Duplicator-winning strategy for the $10$ versus $9$ game.  The first round moves are given in red, 2nd round moves in blue.}
\label{fig:10_vs_9_recursion_attempt3b}
\end{figure}
For Duplicator to maintain an isomorphism with the top $L$ board she will have to move on $B(1)$--$B(4)$, again breaking the isomorphisms with any other $L$ boards, and Spoiler can then pick $B(6)$ on his 4th move, 
breaking any hope for an isomorphism with the 1st board. 

We remark that every winning strategy for Spoiler in the 10 vs.\ 9 game requires at least one play-on-top move. We show this at the end of Appendix \ref{sec:app-g_underline_of_4}, as summarized in Observation \ref{obs:play-on-top}.

\section{Lower Bounds on $g(r)$} \label{sec:lower_bounds}

The upshot of the prior section is that establishing tight bounds for $g(r)$ requires careful attention.  Having established $g(1) = 1$ (Lemma \ref{lemma:g_of_1}), we begin our consideration of lower bounds with $g(2)$.

\begin{lem} \label{lemma:g_lower_of_2}
$g(2) \geq 2$
\end{lem}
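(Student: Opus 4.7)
The plan is to exhibit a 2-quantifier first-order sentence that holds on every linear order of size at least $2$ and fails on the unique linear order of size $1$; by Definition~\ref{def:g}, producing such a sentence immediately witnesses $g(2) \geq 2$.

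The natural candidate is
\[ \phi \;:=\; \exists x\, \exists y\, (x < y), \]
which uses exactly two quantifiers. For any linear order $L$ with $|L| \geq 2$, totality of $<$ orders any two distinct elements of $L$, so instantiating $x$ and $y$ to those elements in the appropriate order gives $L \models \phi$. Conversely, in the size-$1$ linear order the only available assignment forces $x$ and $y$ both to the unique element $e$, and irreflexivity of $<$ on a linear order yields $e \not< e$, so $\phi$ is false. Thus $\phi$ distinguishes linear orders of size $\geq 2$ from those of size $<2$, which is exactly what is required.

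As an alternative route, one could appeal to Theorem~\ref{thm:main1} by verifying that Spoiler wins the 2-round multi-structural game on $(\{L_2\},\{L_1\})$: Spoiler plays the two distinct elements of $L_2$ in successive rounds, which forces Duplicator to repeat the unique element of $L_1$ and so violate bijectivity of any candidate partial isomorphism. The direct sentence argument above is more transparent, however, and in either presentation the statement is essentially immediate; there is no substantive obstacle to overcome.
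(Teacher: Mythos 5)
Your proof is correct and uses exactly the sentence the paper itself gives, namely $\exists x\,\exists y\,(x<y)$; the extra verification of both directions and the optional game-based remark are just elaborations of the same one-line argument. No issues.
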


\begin{proof}
The sentence $\Phi_2 = \exists x\exists y(x < y)$ distinguishes linear orders of size $2$ and above, from the linear order of size $1$.
\end{proof}

\begin{lem} \label{lemma:g_lower_of_3}
$g(3) \geq 4$
\end{lem}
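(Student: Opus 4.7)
The plan is to exhibit a specific first-order sentence with exactly three quantifiers that is true on every (finite) linear order of size at least $4$ and false on every one of size less than $4$. Concretely, I would take
\[
\Phi_3 \;=\; \forall x \, \exists y \, \exists z \, \bigl((x < y \wedge y < z) \vee (y < z \wedge z < x)\bigr),
\]
which informally says: for every element $x$, either there are two strictly larger elements forming an ordered pair $y < z$ above $x$, or two strictly smaller elements forming an ordered pair $y < z$ below $x$.

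First I would verify that $\Phi_3$ holds on $L_n$ for every $n \geq 4$. Given any $x \in L_n$, the $n-1 \geq 3$ other elements cannot all sit in two piles (``above $x$'' and ``below $x$'') each of size at most one, so by a trivial pigeonhole argument at least one of the two sides contains two or more elements, providing the required witnesses $y$ and $z$.

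Second, I would exhibit, for each $n \in \{1,2,3\}$, a bad witness $x$ refuting the universal statement on $L_n$. The canonical choices are: the middle element of $L_3$ (one element on each side), either element of $L_2$ (one side empty, the other of size one), and the unique element of $L_1$.

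The main obstacle is identifying the right shape of the sentence. No purely existential three-quantifier sentence can push past $|L| \geq 3$, because such a sentence can name at most three distinct elements and its truth on $L_n$ depends only on which order types of triples are realized — and every such order type already appears in $L_3$. Hence the direct pattern used for Lemma~\ref{lemma:g_lower_of_2} does not generalize, and one is forced instead to use a leading universal quantifier together with two existentials to capture a ``thickness on one side'' condition that first holds at every element precisely when $|L| \geq 4$.
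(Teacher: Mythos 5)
Your proposal is correct and is essentially identical to the paper's proof: the sentence $\forall x \exists y \exists z\bigl((x<y \wedge y<z) \vee (y<z \wedge z<x)\bigr)$ is exactly the paper's $\Phi_3 = \forall x\exists y \exists z(x < y < z \vee y < z < x)$ with the chained inequalities written out. Your explicit pigeonhole verification and the counterexample witnesses for $n \le 3$ are details the paper leaves implicit, but the approach is the same.
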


\begin{proof}
The following sentence, with $3$ quantifiers, 
distinguishes linear orders of size at least $4$ from those of size 
at most $3$:
\begin{equation}
    \Phi_3 = \forall x\exists y \exists z(x < y < z \vee y < z < x)\qedhere  \label{g3_forall} 
\end{equation} 
\end{proof}

\begin{lem} \label{lemma:g_underline_of_4}
${g}(4) \geq 10$.
\end{lem}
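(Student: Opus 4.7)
The plan is to exhibit a first-order sentence $\Phi_4$ over the signature $\{<\}$ with exactly four quantifiers that is satisfied in every finite linear order of size at least 10 and falsified in every finite linear order of size at most 9. Following the pattern of $\Phi_2$ and $\Phi_3$ constructed in Lemmas~\ref{lemma:g_lower_of_2} and~\ref{lemma:g_lower_of_3}, this will immediately yield $g(4) \ge 10$.

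First, I would rule out the most obvious four-quantifier analog of $\Phi_3$, namely
\[
\forall x \,\exists y\, \exists z\, \exists w\,\bigl((x < y < z < w) \vee (y < z < w < x)\bigr),
\]
which asserts that ``every element has a 3-chain on one side.'' A quick case check shows that this is satisfied by every linear order of size at least 6, yielding only $g(4) \ge 6$. A more intricate construction is needed to reach the threshold 10.

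The sentence I would instead try has the prefix structure
\[
\Phi_4 \;=\; \exists x\, \forall y\, \exists z\, \exists w\;\psi(x,y,z,w),
\]
where $x$ plays the role of a ``pivot'' element and $\psi$ is a disjunction splitting on whether $y$ lies below, equals, or lies above $x$. In each non-trivial case, the witness pair $(z,w)$ is required either to form a 2-chain strictly between $y$ and $x$, or to form a 2-chain strictly on the far side of $y$ from the pivot. Universally quantifying over $y$, these alternatives force the pivot $x$ to lie in a central window of the order, giving a lower bound on the total size. A direct implementation of this idea yields a threshold of only 9; to push the threshold up to 10, $\psi$ must be tightened by asymmetrically coupling the two alternatives, so that the borderline configuration with $n = 9$ and $x$ at the middle---where each side has exactly four elements---no longer satisfies $\psi$.

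The main obstacle is precisely this calibration: crafting $\psi$ so that the threshold lands at exactly 10, not 9 or 11. Verification will then split into two parts. Part (i): for every $n \ge 10$, I would exhibit a suitable pivot $x$ (typically the 5th or 6th element) and, for each $y$, verify the required witnesses $(z,w)$ exist via a case check on the position of $y$ relative to $x$. Part (ii): for every $n \le 9$, I would show that no candidate pivot $x$ works, because some $y$ admits no legal $(z,w)$; the most delicate step is the elimination of the boundary case $n = 9$, where the ``naive'' witness conditions just barely succeed, and where the asymmetric refinement in $\psi$ is precisely what is needed to rule this case out.
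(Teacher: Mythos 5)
There is a genuine gap, and it sits exactly where you located your ``main obstacle'': the calibration you are hoping for does not exist. No sentence with prefix $\exists x\,\forall y\,\exists z\,\exists w$ over $\{<,=\}$ can separate linear orders of size $\geq 10$ from those of size $\leq 9$. A quantifier-free $\psi(x,y,z,w)$ on a linear order depends only on the order type of the quadruple, so for each fixed order type of $(x,y)$ the condition $\exists z\,\exists w\,\psi$ is a monotone disjunction of realizability conditions, each demanding at most two elements in total, distributed over the (at most three) regions cut out by $x$ and $y$. Write $p$ and $q$ for the numbers of elements below and above the pivot $x$. Requiring the condition to hold for \emph{every} $y<x$ means it must survive every split of the $p-1$ elements below $x$ into (below $y$, between $y$ and $x$); checking all two-witness disjunctions shows the strongest constraint on the pivot enforceable this way is $p\geq 4$ (via ``two elements below $y$, or two elements between $y$ and $x$,'' which first holds for all splits when $p-1\geq 3$), or alternatively $p\geq 3\wedge q\geq 1$; no disjunct can demand two elements on one side of $y$ \emph{and} anything beyond $x$ without exceeding the two-witness budget. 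Symmetrically the $y>x$ clause enforces at best $q\geq 4$. The best acceptance region for the pivot is therefore $p\geq 4\wedge q\geq 4$, i.e.\ $p+q\geq 8$, i.e.\ threshold $9$ --- which is exactly what your ``direct implementation'' gives, and exactly what the paper obtains by relativizing $\Phi_3$ to the prefix $\exists\forall\exists\exists$ (see the remark at the end of Section~\ref{sec:lower_bounds}, which flags precisely this as the reason a different sentence is needed). Your part (ii) for $n=9$ can therefore never be completed under your chosen prefix.

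The paper's proof of Lemma~\ref{lemma:g_underline_of_4} uses the prefix $\forall x\,\exists y\,\forall z\,\exists w$ instead. The point of the trailing $\forall z\,\exists w$ is that it expresses ``at least two elements strictly between $x$ and $y$ \emph{and} at least two elements beyond $y$ on the same side of $x$'' with only two quantifiers: for every candidate $z$ in either region a distinct companion $w$ is produced in the same region, while the degenerate cases $z=x$ and $z=y$ force both regions to be nonempty. That condition requires the side of $x$ containing $y$ to have at least $5$ elements, and imposing it for every $x$ (in particular the middle element) yields the threshold $10$. So what is missing from your proposal is not a cleverer quantifier-free matrix but a different quantifier pattern: to reach $10$ with four quantifiers you must re-universalize after the inner witness, as in the paper's $\Phi_4$.
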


\begin{proof}
 The following sentence with $4$ quantifiers distinguishes linear orders of size at least $10$ from those of size at most $9$.

\begin{small}
\begin{align}
	\Phi_4 = \forall x \exists y \forall z \exists w(& \notag \\
		&x < z < y \rightarrow (w \neq z \land x < w < y) ~~~\land \label{cond1} \\
		&x < y < z \rightarrow (w \neq z \land x < y < w) ~~~\land \label{cond2}\\
		&y < z < x \rightarrow (w \neq z \land y < w < x)~~~ \land\label{cond3} \\
		&z < y < x \rightarrow (w \neq z \land w < y < x)~~~\land \label{cond4}\\
		&z = x \rightarrow (x < w < y \vee y < w < x)~~~\land \label{cond5}\\
		&z = y \rightarrow (x < y < w \vee w < y < x)). \label{cond6}
\end{align}
\end{small}

This sentence captures the fact that ``for every $x$ there is a $y$ with two or more elements on each side of $y$, both of which are on the same side of $x$ as $y$'', a fact that is true for linear orders of size $10$ or greater, but not for linear orders of size less than $10$. For example, in a linear order of size $9$, the middle element will not have an element to either side of it having these properties. More specifically, the first four implications (\ref{cond1})--(\ref{cond4}) say that for every $x$ there is a $y$ such that for any $z \neq x,y$, with $z$ on the same side of $x$ as $y$, there is an additional element besides $z$ on the same side of $y$ and also on the same side of $x$ as $y$. The last two implications (\ref{cond5})--(\ref{cond6}) are critical and insure that there are elements (i) between $x$ and $y$ and (ii) less than $y$ if $y < x$, and greater than $y$ if $y > x$.
\end{proof}

For an alternate, game-based proof of Lemma \ref{lemma:g_underline_of_4} see Appendix \ref{sec:app-g_underline_of_4}.

\medskip

With initial values $g(1) = 1, g(2) \geq 2$ (Lemma \ref{lemma:g_lower_of_2}), $g(3) \geq 4$ (Lemma \ref{lemma:g_lower_of_3}) and $g(4) \geq 10$ (Lemma \ref{lemma:g_underline_of_4}), we establish all remaining lower bounds via a game argument:

\begin{thm} \label{thm:main_lower}
For $r > 4$, 
\begin{equation} \label{eqn:main_recursion_lower}
	g(r) \geq \begin{cases} 2g(r-1)~~~~~~\textrm{if $r$ is even,}\\ 2g(r-1) + 1\textrm{ if $r$ is odd.} \end{cases}
\end{equation}
\end{thm}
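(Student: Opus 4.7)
The plan is to proceed by induction on $r > 4$, using the Equivalence Theorem \ref{thm:main1} to translate the lower bound $g(r) \geq M_r$ (with $M_r = 2g(r-1)$ for $r$ even and $M_r = 2g(r-1)+1$ for $r$ odd) into the task of exhibiting a winning Spoiler strategy in the $r$-round \ms game on $(\mathcal{A}, \mathcal{B})$, where $\mathcal{A}$ denotes the set of linear orders of size at least $M_r$ and $\mathcal{B}$ those of size less than $M_r$. Write $K = g(r-1)$; combining the inductive hypothesis with Theorem \ref{thm:main1} supplies a winning Spoiler strategy $\sigma$ in the $(r-1)$-round \ms game on $(\{\text{linear orders of size} \geq K\}, \{\text{linear orders of size} < K\})$. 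The base $r = 5$ uses $K = g(4) = 10$ from Lemma \ref{lemma:g_underline_of_4}.

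First, Spoiler's round 1 will establish a \emph{pivot} in every structure. When $r$ is odd (threshold $2K+1$), Spoiler plays in $\mathcal{A}$, picking in each $A$ the element $x_A$ at position $K+1$, so that both sides of $x_A$ in $A$ have size at least $K$. When $r$ is even (threshold $2K$), Spoiler plays in $\mathcal{B}$, picking in each $B$ the middle element $x_B$, so that both sides of $x_B$ in $B$ have size strictly less than $K$ (possible because $|B| \leq 2K-1$). Duplicator's oblivious response produces $\mathcal{A}_1$ and $\mathcal{B}_1$ in which every structure carries a marked pivot, with the key size asymmetry that every structure in $\mathcal{A}_1$ has some pivot side of size $\geq K$, while every structure in $\mathcal{B}_1$ has some pivot side of size $<K$.

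For rounds $2, \ldots, r$, Spoiler emulates $\sigma$ on the pivot sides. The decisive observation is that for any surviving pair $((A, \ldots), (B, \ldots))$ of labeled structures, once a move in $A$ is placed on one side of $x_A$, only a response on the matching side of the $B$-pivot can preserve the partial-isomorphism order constraint: any opposite-side response immediately contradicts the order relation between the new move and the pivot, distinguishing the pair. Hence for every surviving pair, rounds $2$ through $r$ constitute an $(r-1)$-round \ms game on pivot sides of sizes $\geq K$ (on the $\mathcal{A}$-side) versus $<K$ (on the $\mathcal{B}$-side), which Spoiler wins by applying $\sigma$.

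The main obstacle will be the per-structure variation of the \emph{designated small side} in $\mathcal{B}_1$: for $(B, b) \in \mathcal{B}_1$, the small side is left if $b \leq K$, right if $b > K$, or either if $|B|$ is small enough. Spoiler's freedom to pick a different element in each structure per round lets him target the correct small side in each $B$ and match it with an appropriate side of $x_A$ in each $A$, but coordinating these per-structure choices so that $\sigma$'s strategy applies coherently across all surviving pairs within the $r-1$ remaining rounds is the delicate part. I expect to split the surviving pairs into ``left-channel'' and ``right-channel'' sub-games according to the designated $B$-side, run $\sigma$ on each channel, and use the pivot-ordering observation to show that off-channel Duplicator responses are auto-losing, so that Spoiler's per-structure moves can serve both channels simultaneously; the technique of playing on top of an existing move (Observation \ref{obs:play-on-top}) may be needed to synchronize the channels, and some additional care will be required for odd $r$, where $\mathcal{A}_1$ is a singleton per $A$ rather than a family of copies.
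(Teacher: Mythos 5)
Your overall architecture --- a pivot in round 1 on the correct side, then recursion on the pivot sides --- is the same as the paper's, but the two points you defer as ``delicate'' are exactly where the real work lies, and the mechanism you propose for the first of them would fail. The pairwise observation that an off-side Duplicator response violates the order relation with the pivot only shows that each surviving pair decomposes into a left and a right sub-game; it does not let one strategy $\sigma$ ``serve both channels simultaneously.'' Concretely, for odd $r$ the set $\mathcal{A}_1$ contains a \emph{single} labeled copy of each big structure, and if $\sigma$'s opening move is on the big side, Spoiler must commit that one board to one element, say left of its pivot; a Duplicator reply on a right-channel little board that mimics this move on that board's \emph{long} (left) side is same-side, order-consistent, and keeps the pair alive while making no progress in the right-channel sub-game, so Spoiler has burned a round there. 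The fix, which your induction does not carry, is to strengthen the inductive hypothesis so as to control the prefix of $\sigma$: the $(r-1)$-round strategy must open on the side that Duplicator duplicated in round 1 (on $L$ when $r-1$ is even, on $B$ when $r-1$ is odd), so that after round 2 Duplicator's replies split the previously shared boards into disjoint left- and right-channel families, after which the two channels can be run in tandem because they follow the same $\forall\exists\cdots$ alternation. Deriving $\sigma$ from Theorem \ref{thm:main1} gives no such control; you would need Theorem \ref{thm:main1a} together with an invariant on the prenex signature, founded on the explicit $\forall\exists\forall\exists$ Spoiler strategy for $r=4$ (Appendix \ref{sec:app-g_underline_of_4}).

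The second omission is the end-move boards. For odd $r$, Duplicator's round-1 replies on $L$ include the two extreme elements, whose designated small side is empty, so $\sigma$'s opening move on the short side does not exist there; Spoiler must instead play on top, and the big-board copies that then remain consistent with the $L(1)$-board are exactly those that remain consistent with the $L(2g(r-1))$-board, so these two boards are never separated by the channel decomposition at all. The paper kills them with a dedicated endgame that exploits a second structural invariant of $\sigma$ --- that its last two rounds are played on $L$ and then on $B$ --- which again must be verified in the base case and preserved through the induction. Your proposal gestures at playing on top of existing moves but contains neither this case analysis nor the invariant it depends on.
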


\begin{proof}
Let us establish (\ref{eqn:main_recursion_lower}) first for odd $r$.  Here we need to provide a Spoiler-winning strategy for a game with arbitrary linear orders of sizes at least $2g(r-1)+1$ on one side and  arbitrary linear orders of sizes at most $2g(r-1)$ on the other side. For simplicity, we will show the strategy for the game in which we have a single linear order of size at least $2g(r-1)$ + 1 on one side and a single linear order of size at most $2g(r-1)$ on the other side. The strategy for the case of multiple linear orders on each side is exactly the same, as we shall see. Analogous remarks will hold for the even $r$ case.

We start with the special, though pivotal, case where we have linear orders of size $2g(r-1) + 1$ and $2g(r-1)$, as in Figure \ref{fig:lower_bound_odd_case}.
\begin{figure} [ht]
\centerline{\scalebox{0.38}{\includegraphics{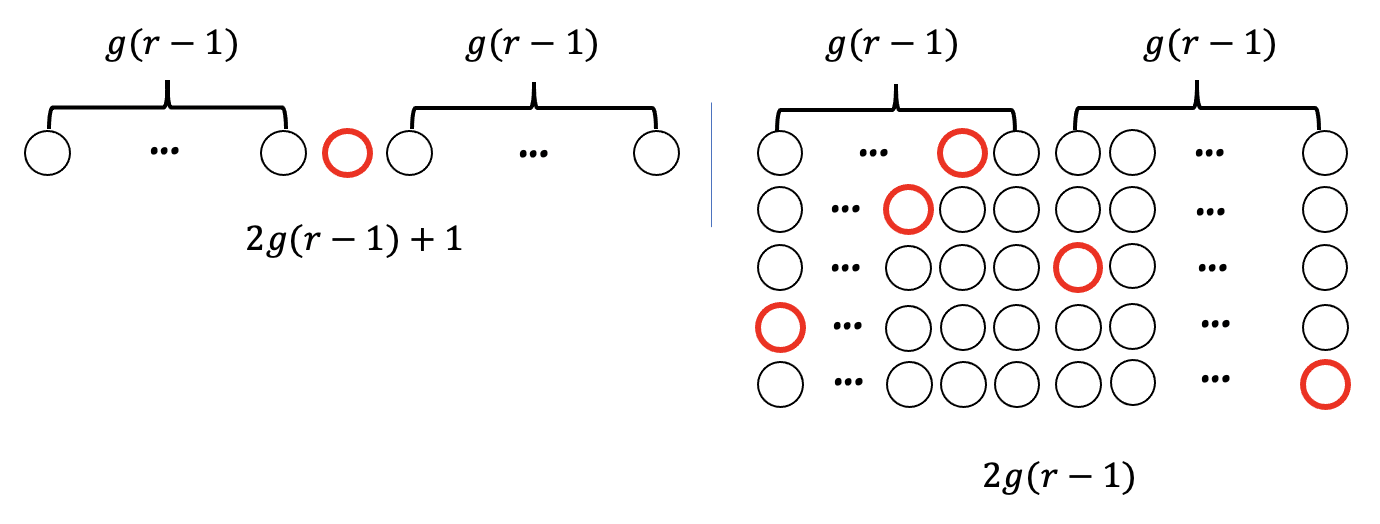}}}
\caption{The odd $r$ case: linear orders of sizes $2g(r-1) + 1$ and $2g(r-1)$. Spoiler plays first on $B$ (the left hand side). First round moves are indicated in red. Duplicator responds by playing every possible move. Five exemplary boards with different moves played on each are shown.}
\label{fig:lower_bound_odd_case}
\end{figure}
We describe a winning strategy for Spoiler. Spoiler begins by playing the middle element on $B$, i.e. $B(g(r-1) + 1)$. Duplicator responds by playing every possible move leaving short sides that all have fewer than $g(r-1)$ elements. Spoiler next makes his 2nd round move on each board of $L$ on the short sides, while playing on top of any end moves (moves with empty short sides), such as those shown on the bottom two boards in Figure \ref{fig:lower_bound_odd_case}. With the exception of the two end moves, which we will handle independently, Spoiler is going to play all subsequent moves entirely on the short sides of the boards of $L$ and the corresponding sides of the boards of $B$ -- in other words if the short side of $L$ is on the left, he will play on the left side of $B$, and vice versa -- in accordance with the known Spoiler-winning strategy on boards of size $g(r-1)$ or greater \vs boards of size less than $g(r-1)$. 

In order to see how he is able to do this, we need to use a strong inductive assumption, namely that in games of $r$ rounds, when $r$ is even, and there are linear orders of sizes $g(r)$ or greater on one side, and less than $g(r)$ on the other side, Spoiler can then always win by playing first on the $L$ side. The game-based proof of Lemma \ref{lemma:g_underline_of_4} demonstrated such a strategy for the case $r=4$, and we will have to keep this commitment when we cover additional even $r$ cases next. For now, though, we are assuming that $r$ is odd, so $r-1$ is even and we have such a strategy. 

Duplicator will respond with the oblivious strategy, making many copies of $B$ and playing all possible moves. From this point forward, the boards on both sides that have had their 2nd moves played to the right of the 1st moves conceptually constitute one game, and the boards on both sides that have had their 2nd moves played to the left of the 1st moves conceptually constitute a second game. The conceptual game played entirely on the left side of the boards can be won by Spoiler as well as the conceptual game played on the right, both times using the strong induction hypotheses. Note that since both games have the same larger size boards, Spoiler can win by picking the same side to play on ($L$ or $B$) on all boards in each of the conceptual games, in each subsequent round, and hence the two conceptual games can be played round-by-round in tandem. In so doing, all partial isomorphisms are broken and so Spoiler wins the combined game. 

However, we have not yet described how to take care of the case where Duplicator played end moves on $L$ in the 1st round and Spoiler reciprocated by playing on top of these end moves. In order to maintain an isomorphism with either of these boards Duplicator will have to play on top of the 1st move on $B$ -- which will break any potential isomorphism with any other $L$ boards. Now we use a key observation from the Spoiler-winning strategy in the 10+ \vs 9- game that established $g(4) \geq 10$ (see the proof of Lemma \ref{lemma:g_underline_of_4}): in the next-to-last round Spoiler played on $L$, and in the last round he played on $B$. Note that the $r$-round Spoiler strategy recursively uses an $(r-1)$-round Spoiler strategy, eventually using the $4$-round Spoiler strategy since the base case of this lemma is $r=5$, which is defined in terms of $r=4$.  Thus, for boards that remain isomorphic long enough, Spoiler will play the last two rounds consecutively on $L$ and then $B$.  
    
In the next-to-last round, in which Spoiler plays on $L$, Spoiler will select any element to the right of $L(1)$ on the boards where $L(1)$ was played as a first move, and Spoiler will select any element to the left of $L(2g(r-1))$ on the boards where $L(2g(r-1))$ was played as a 1st move. As a result, to maintain partial isomorphisms with the boards in which, respectively, $L(1)$ and $L(2g(r-1))$ were played, Duplicator will have to play so that the $B$ boards that remain isomorphic with the board in which $L(1)$ was played are not isomorphic to the boards in which $L(2g(r-1))$ was played, and vice versa. Thus, in the final round, Spoiler will play to the right of the middle element on the $B$ boards that maintained a partial isomorphism with the boards where $L(2g(r-1))$ was played first, thus killing all surviving partial isomorphisms, and will play to the left of the middle element on the $B$ boards that maintained a partial isomorphism with boards where $L(1)$ was played first, killing all of its remaining partial isomorphisms.

For the more general case where $|B| > 2 g(r-1) + 1$, Spoiler simply picks any element having at least $g(r-1)$ elements on each side of his 1st move and the play proceeds via the same induction. Analogously, if $|L| < 2g(r-1)$ it still follows that Duplicator's 1st round moves leave a short side of size less than $g(r-1)$, and hence the argument remains the same with respect to such smaller size $L$.

In the case where we start with multiple boards $\{B_i\}, \{L_j\}$ of sizes $|B_i| \geq 2 g(r-1) + 1$ and $|L_j| \leq 2 g(r-1)$, Spoiler starts by playing on each board on the $\{B_i\}$ side exactly as if there were just that one board there, and responds in turn, with the same alternating strategy as we have outlined in the single board case. Instead of making copies of just one board on her first round moves, Duplicator will make copies of all boards and make every possible move. Spoiler will then respond as before, playing on the short sides (or on top of end elements if necessary), and the argument recurses as in the case where we started with single boards.

Next let us tackle the case where $r$ is even. We begin as usual with the base case of boards of sizes $2g(r-1)$ and $2g(r-1) - 1$. See Figure \ref{fig:lower_bound_even_case}.

\begin{figure} [ht]
\centerline{\scalebox{0.38}{\includegraphics{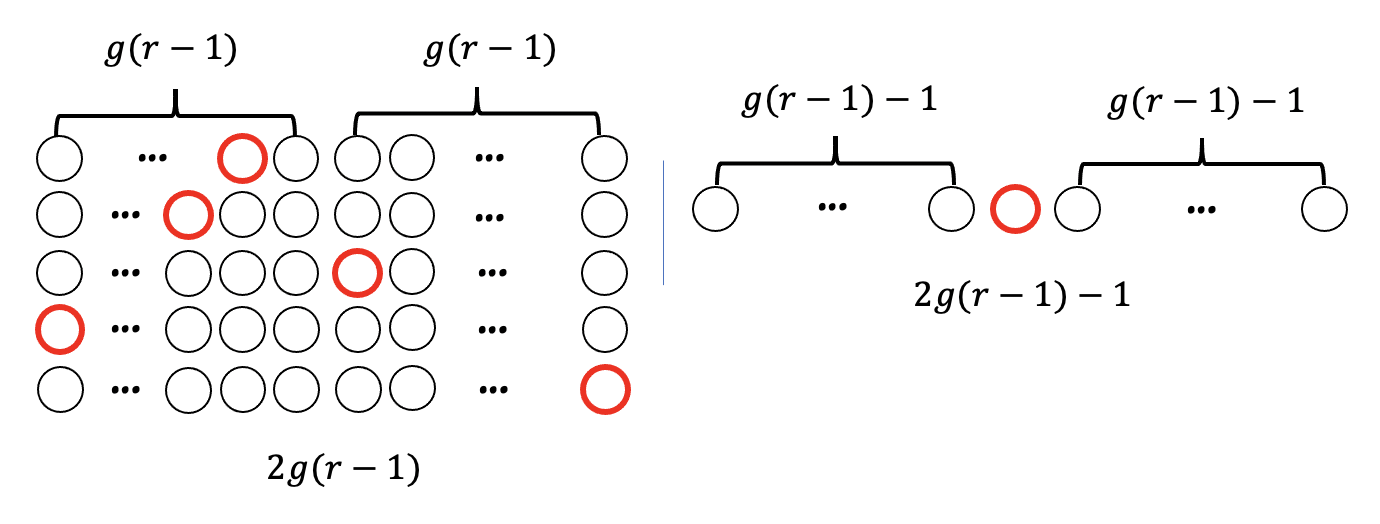}}}
\caption{The even $r$ case: linear orders of sizes $2g(r-1)$ and $2g(r-1) - 1$. Spoiler plays first on $L$ (the right hand side). First round moves are indicated in red. Duplicator responds by playing every possible move. Five exemplary boards with different moves played on each are shown.}
\label{fig:lower_bound_even_case}
\end{figure}

As Spoiler, we must keep our inductive commitment to play first on $L$ and so play the middle element, as indicated in the figure. Now any move by Duplicator on $B$ leaves a long side of size at least $g(r-1)$ versus the same side (left or right of the 1st move) on $L$, of size $g(r-1)-1$. Thus we adapt the argument from the odd $r$ case to this case, but where we play on the long side now rather than  on the short side. The strong induction hypothesis we need this time is that smaller odd $r$ cases can be won by playing the 1st round from $B$ -- but we know this to be the case for $r=5$, the very first case covered by this theorem, and all subsequent cases of odd $r$ by how we argued the odd $r$ case.  The case of boards of sizes greater than $2g(r-1)$ and less than $2g(r-1) - 1$ is handled analogously, and the same remarks about multiple boards versus multiple boards to start that we made for the odd $r$ case apply here as well. The theorem therefore follows.
\end{proof}

For $r > 4$, the sentences associated with the strategies described in the proof of the above theorem say, respectively, for $r$ even, that for every $x$ there is a linear order of size at least $g(r-1)$ either to the left or right of $x$, and for $r$ odd, that there is an element $x$ with a linear order of size at least $g(r-1)$ both to the left and right of $x$.

We give a sketch of how this works for $r = 5$ and $r = 6$.  To get started we rewrite the previous expression for $\Phi_4$, given by (\ref{cond1}) -- (\ref{cond6}), replacing the variables $x,y,z,w$, respectively with $x_2, x_3, x_4, x_5$:
\begin{small}
\begin{align}
	\Phi_4 = \forall x_2 \exists x_3 \forall x_4 \exists x_5(& \notag \\
		&x_2 < x_4 < x_3 \rightarrow (x_5 \neq x_4 \land x_2 < x_5 < x_3) ~~~\land \label{cond1a} \\
		&x_2 < x_3 < x_4 \rightarrow (x_5 \neq x_4 \land x_2 < x_3 < x_5) ~~~\land \label{cond2a}\\
		&x_3 < x_4 < x_2 \rightarrow (x_5 \neq x_4 \land x_3 < x_5 < x_2)~~~ \land\label{cond3a} \\
		&x_4 < x_3 < x_2 \rightarrow (x_5 \neq x_4 \land x_5 < x_3 < x_2)~~~\land \label{cond4a}\\
		&x_4 = x_2 \rightarrow (x_2 < x_5 < x_3 \vee x_3 < x_5 < x_2)~~~\land \label{cond5a}\\
		&x_4 = x_3 \rightarrow (x_2 < x_3 < x_5 \vee x_5 < x_3 < x_2)). \label{cond6a}
\end{align}
\end{small}
With this translation of variable names, the sentence says that ``for every $x_2$ there is a $x_3$ with two or more elements on each side of $x_3$, both of which are on the same side of $x_2$ as $x_3$''. More specifically, as noted before, the first four implications (\ref{cond1a})--(\ref{cond4a}) say that for every $x_2$ there is an $x_3$ such that for any $x_4 \neq x_2,x_3$, with $x_4$ on the same side of $x_2$ as $x_3$, there is an additional element besides $x_4$ on the same side of $x_3$ and also on the same side of $x_2$ as $x_3$. The last two implications (\ref{cond5a})--(\ref{cond6a}) insure that there are elements (i) between $x_2$ and $x_3$ and (ii) less than $x_3$ if $x_3 < x_2$, and greater than $x_3$ if $x_3 > x_2$.

To this we now add that there exists an element $x_1$ such that there is a linear order of size 10 both to its right and its left, in other words there exists an element $x_1$ such that the above sentence is true for elements less than $x_1$ and for elements greater than $x_1$, specifically:
\begin{small}
\begin{align*}
	\Phi_5 = \exists x_1 \forall x_2 \exists x_3 \forall x_4 \exists x_5\Big(& \notag \\
	x_1 < x_2 \rightarrow~~~~~&\Big( x_1 < x_2 < x_4 < x_3 \rightarrow (x_5 \neq x_4 \land x_1 < x_2 < x_5 < x_3) ~~~\land \label{cond1b} \\
		&x_1 < x_2 < x_3 < x_4 \rightarrow (x_5 \neq x_4 \land x_1 < x_2 < x_3 < x_5) ~~~\land \\ 
		&x_1 < x_3 < x_4 < x_2 \rightarrow (x_5 \neq x_4 \land x_1 < x_3 < x_5 < x_2)~~~ \land \\ 
		&x_1 < x_4 < x_3 < x_2 \rightarrow (x_5 \neq x_4 \land x_1 < x_5 < x_3 < x_2)~~~\land \\ 
		&x_4 = x_2 \rightarrow (x_1 < x_2 < x_5 < x_3 \vee x_1 < x_3 < x_5 < x_2)~~~\land \\ 
		&x_4 = x_3 \rightarrow (x_1 < x_2 < x_3 < x_5 \vee x_1 < x_5 < x_3 < x_2)\Big)~~\land \\ 
	x_2 < x_1 \rightarrow~~~~~&\Big( x_2 < x_4 < x_3 < x_1 \rightarrow (x_5 \neq x_4 \land x_2 < x_5 < x_3 < x_1) ~~~\land \\ 
		&x_2 < x_3 < x_4 < x_1 \rightarrow (x_5 \neq x_4 \land x_2 < x_3 < x_5 < x_1) ~~~\land \\ 
		&x_3 < x_4 < x_2 < x_1 \rightarrow (x_5 \neq x_4 \land x_3 < x_5 < x_2 < x_1)~~~ \land \\
		&x_4 < x_3 < x_2 < x_1 \rightarrow (x_5 \neq x_4 \land x_5 < x_3 < x_2 < x_1)~~~\land \\ 
		&x_4 = x_2 \rightarrow (x_2 < x_5 < x_3 < x_1 \vee x_3 < x_5 < x_2 < x_1)~~~\land \\ 
		&x_4 = x_3 \rightarrow (x_2 < x_3 < x_5 < x_1 \vee x_5 < x_3 < x_2 < x_1)\Big)~~\land \\ 
	x_1 = x_2 \rightarrow~~~~~&(x_3 < x_1 \land x_1 < x_5)\Big). 
\end{align*}
\end{small}
Then $\Phi_5$ distinguishes linear orders of size $21$ and above from those of size less than $21$. 

Next, to form the expression for $\Phi_6$, which distinguishes linear orders of size $42$ and above from those of size less than $42$, we say that for every element $x_0$ there is either a linear order of size $21$ to the left or right of $x_0$. The construction is analogous:
\begin{footnotesize}
\begin{align*}
	\Phi_6 = \forall x_0 \exists x_1& \forall x_2 \exists x_3 \forall x_4 \exists x_5\Bigg(& \notag \\
	x_0 < x_1 \rightarrow \Bigg(&x_0 < x_1 < x_2 \rightarrow&~~\Big( x_0 < x_1 < x_2 < x_4 < x_3 \rightarrow (x_5 \neq x_4 \land x_0 < x_1 < x_2 < x_5 < x_3) ~~~\land  \\
		&&x_0 < x_1 < x_2 < x_3 < x_4 \rightarrow (x_5 \neq x_4 \land x_0 < x_1 < x_2 < x_3 < x_5) ~~~\land \\
		&&x_0 < x_1 < x_3 < x_4 < x_2 \rightarrow (x_5 \neq x_4 \land x_0 < x_1 < x_3 < x_5 < x_2)~~~ \land \\
		&&x_0 < x_1 < x_4 < x_3 < x_2 \rightarrow (x_5 \neq x_4 \land x_0 < x_1 < x_5 < x_3 < x_2)~~~\land \\
		&&x_4 = x_2 \rightarrow (x_0 < x_1 < x_2 < x_5 < x_3 \vee x_0 < x_1 < x_3 < x_5 < x_2)~~~\land \\
		&&x_4 = x_3 \rightarrow (x_0 < x_1 < x_2 < x_3 < x_5 \vee x_0 < x_1 < x_5 < x_3 < x_2)\Big)~~\land \displaybreak \\ 
	&x_0 < x_2 < x_1 \rightarrow&~~\Big( x_0 < x_2 < x_4 < x_3 < x_1 \rightarrow (x_5 \neq x_4 \land x_0 < x_2 < x_5 < x_3 < x_1) ~~~\land  \\
		&&x_0 < x_2 < x_3 < x_4 < x_1 \rightarrow (x_5 \neq x_4 \land x_0 < x_2 < x_3 < x_5 < x_1) ~~~\land \\
		&&x_0 < x_3 < x_4 < x_2 < x_1 \rightarrow (x_5 \neq x_4 \land x_0 < x_3 < x_5 < x_2 < x_1)~~~ \land \\
		&&x_0 < x_4 < x_3 < x_2 < x_1 \rightarrow (x_5 \neq x_4 \land x_0 < x_5 < x_3 < x_2 < x_1)~~~\land \\
		&&x_4 = x_2 \rightarrow (x_0 < x_2 < x_5 < x_3 < x_1 \vee x_0 < x_3 < x_5 < x_2 < x_1)~~~\land \\
		&&x_4 = x_3 \rightarrow (x_0 < x_2 < x_3 < x_5 < x_1 \vee x_0 < x_5 < x_3 < x_2 < x_1)\Big)~~\land \\
	&(x_0 < x_1 \land x_1 = x_2) \rightarrow&~~(x_0 < x_3 < x_1 \land x_0 < x_1 < x_5)\Bigg) \\
	& \land & \\
		x_1 < x_0 \rightarrow \Bigg(&x_1 < x_2 < x_0 \rightarrow&~~\Big( x_1 < x_2 < x_4 < x_3 < x_0 \rightarrow (x_5 \neq x_4 \land x_1 < x_2 < x_5 < x_3 < x_0) ~~~\land \\
		&&x_1 < x_2 < x_3 < x_4 < x_0 \rightarrow (x_5 \neq x_4 \land x_1 < x_2 < x_3 < x_5 < x_0) ~~~\land \\
		&&x_1 < x_3 < x_4 < x_2 < x_0 \rightarrow (x_5 \neq x_4 \land x_1 < x_3 < x_5 < x_2 < x_0)~~~ \land \\
		&&x_1 < x_4 < x_3 < x_2 < x_0 \rightarrow (x_5 \neq x_4 \land x_1 < x_5 < x_3 < x_2 < x_0)~~~\land \\
		&&x_4 = x_2 \rightarrow (x_1 < x_2 < x_5 < x_3 < x_0 \vee x_1 < x_3 < x_5 < x_2 < x_0)~~~\land \\
		&&x_4 = x_3 \rightarrow (x_1 < x_2 < x_3 < x_5 < x_0 \vee x_1 < x_5 < x_3 < x_2 < x_0)\Big)~~\land \\
	&x_2 < x_1 < x_0 \rightarrow&~~\Big( x_2 < x_4 < x_3 < x_1 < x_0 \rightarrow (x_5 \neq x_4 x_2 < x_5 < x_3 < x_1 < x_0) ~~~\land  \\
		&&x_2 < x_3 < x_4 < x_1 < x_0 \rightarrow (x_5 \neq x_4 \land x_2 < x_3 < x_5 < x_1 < x_0) ~~~\land \\
		&&x_3 < x_4 < x_2 < x_1 < x_0 \rightarrow (x_5 \neq x_4 \land x_3 < x_5 < x_2 < x_1 < x_0)~~~ \land \\
		&&x_4 < x_3 < x_2 < x_1 < x_0 \rightarrow (x_5 \neq x_4 \land x_5 < x_3 < x_2 < x_1 < x_0)~~~\land \\
		&&x_4 = x_2 \rightarrow x_0 < (x_2 < x_5 < x_3 < x_1 \vee x_3 < x_5 < x_2 < x_1 < x_0)~~~\land \\
		&&x_4 = x_3 \rightarrow x_0 < (x_2 < x_3 < x_5 < x_1 \vee x_5 < x_3 < x_2 < x_1 < x_0)\Big)~~\land \\
	&(x_1 = x_2 \land x_1 < x_0) \rightarrow~~&(x_3 < x_1 < x_0 \land x_1 < x_5 < x_0)\Bigg)  \\
	& \land & \\
	x_0 = x_1 \rightarrow &~x_3 \neq x_0\Bigg).&
\end{align*}
\end{footnotesize}

It is worth noting that we could have begun this process at $g(4)$. Starting with the expression for $\Phi_3$, given by (\ref{g3_forall}), establishing  $g(3) \geq 4$ with prenex signature $\forall\exists\exists$ we could have ``relativized'' $\Phi_3$  to form an expression with prenex signature $\exists\forall\exists\exists$ saying that there exists an element with a linear order of size at least $4$ both to the left and right. This would have established that $g(4) \geq 9$. From there we could have obtained an expression with prenex signature $\forall\exists\forall\exists\exists$ establishing that $g(5)  \geq 18$, and so on, at each juncture obtaining slightly worse bounds than we have already obtained.  The magic is that a stronger lower bound for $g(4)$ can be expressed with the sentence given by (\ref{cond1}) -- (\ref{cond6}) [alternatively, \ref{cond1a}) -- (\ref{cond6a})].

\section{Games with Atoms: A different Approach to Obtain Upper Bounds on $g(r)$}\label{sec:atoms}

Let us define a new type of game. These games are similar to the \ms games but with a twist. They will make it harder for Duplicator to win any particular game of $r$ rounds. Since Duplicator-winning strategies for a game of $r$ rounds provide upper bounds on $g(r)$, we will obtain upper bounds that are potentially weaker, but will later match the lower bounds for $r \geq 4$, proving the upper bounds in the range $r \geq 4$ to be tight. The reason for considering these games is that they will allow us to recurse, getting around the issue we got stuck on in Section~\ref{sec:interlude}.

In our new game of $r$ rounds, we again have sets $\mathcal{A}$ and $\mathcal{B}$ of structures. However, each board on each of the sides, in addition to containing a structure $S$, contains a collection of unrelated, but labeled elements, $\{a_1,...,a_s\}$, where $s$ can be as large as Spoiler wants.
We shall refer to the collection of unrelated labeled elements as \textit{atoms}. 
By a slight abuse of notation, we will treat atoms as if they are constant symbols, so that they can appear in sentences. However, these symbols will not appear in the sentences 
guaranteed by Theorems~\ref{thm:main1} and Theorem~\ref{thm:main1a}.
These atoms are both unrelated to elements of the structure and unrelated to each other.
On their respective turns, Spoiler and Duplicator play, as in the \ms games, on all boards on their chosen side for that turn, picking an element from that board, choosing either an element from the structure, or from the set of atoms. If atom $a_k$ is selected by Spoiler on a given board, in a given round $j$, the only way Duplicator can maintain a partial isomorphism between that board and an opposite board is to select $a_k$ in round $j$ on the opposite board. 
When Duplicator makes copies of boards, the atoms are copied as well as the structures labeled with the moves made thus far.

In our figures, rather than showing all the atoms, and distinguishing those that are selected, we show only the atoms that have thus far been selected for a given board adjacent to the structures, with appropriate labeling.
Let us refer to these new games as \textit{\ms games with atoms}. 

\begin{observ}\label{obs:eq}
\ms games with atoms are equivalent to \ms games with structures that are a union of the prior structures and the set of atoms.
Hence, by Theorem~\ref{thm:main1}, we have that Spoiler wins $r$-round \ms games with atoms on two sets of structures iff there exists a sentence $\Phi$ with up to $r$ quantifiers that distinguishes the two sets of structures. The sentence can contain constants and utilize an additional unary relation $A(x)$, which is true iff $x$ is an atom. 
\end{observ}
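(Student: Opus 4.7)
The plan is to give an explicit reduction from MS games with atoms to ordinary MS games, and then invoke Theorem~\ref{thm:main1}. Given a pair $(\mathcal{A}, \mathcal{B})$ of sets of structures with a shared pool of atoms $\{a_1,\dots,a_s\}$, I would form a pair $(\mathcal{A}^*, \mathcal{B}^*)$ over the expanded vocabulary $\tau^* = \tau \cup \{A\} \cup \{c_1,\dots,c_s\}$, where $A$ is a new unary relation symbol and each $c_k$ is a new constant symbol. For each structure $S$ in $\mathcal{A} \cup \mathcal{B}$, the corresponding $S^*$ has universe the disjoint union of $S$ with $\{a_1,\dots,a_s\}$; the $\tau$-relations stay the same on old elements and are empty whenever an atom is involved; $A$ is interpreted as exactly the set of atoms; and each constant $c_k$ is interpreted as $a_k$.

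First I would verify that the two games are literally the same under this encoding. A Spoiler move that picks a structure element is identified with the same move in $S^*$; a Spoiler move that picks atom $a_k$ is identified with picking the element named by $c_k$ in $S^*$. Duplicator's copying and choosing behavior transfers verbatim. The crucial point is that the partial-isomorphism criterion matches on both sides: in the atoms game, the winning condition explicitly forces Duplicator to respond to $a_k$ with $a_k$; in the reduced game, the same constraint is enforced automatically because $c_k$ is interpreted as $a_k$ in every structure, and the predicate $A$ together with the emptiness of the $\tau$-relations on atoms rules out any partial isomorphism that would pair an atom with a non-atom. Hence Spoiler wins the $r$-round game with atoms on $(\mathcal{A}, \mathcal{B})$ if and only if Spoiler wins the $r$-round MS game on $(\mathcal{A}^*, \mathcal{B}^*)$.

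With the equivalence in hand, applying Theorem~\ref{thm:main1} to $(\mathcal{A}^*, \mathcal{B}^*)$ over the vocabulary $\tau^*$ yields a sentence $\Phi$ with at most $r$ quantifiers that holds in every $A^* \in \mathcal{A}^*$ and fails in every $B^* \in \mathcal{B}^*$. Since $\tau^*$ contains the atom constants $c_k$ and the atom predicate $A$, the sentence $\Phi$ may mention exactly these symbols, as claimed in the observation. The only mildly subtle step is the partial-isomorphism check, and the main obstacle I would watch for is ensuring that atoms are truly inert with respect to the old vocabulary (so that no spurious isomorphism arises by mapping an atom to a structure element); the disjoint-universe construction together with the unary marker $A$ handles this cleanly.
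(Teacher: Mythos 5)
Your proposal is correct and is essentially the argument the paper intends: the observation is stated without a written-out proof, and your explicit construction (union of the structure with the atom set, atoms named by fresh constants $c_k$ so that the constant-matching clause of the partial-isomorphism condition forces Duplicator to answer $a_k$ with $a_k$, plus the unary marker $A$ and empty $\tau$-relations on atoms to keep them inert) is exactly the encoding the paper describes when it says atoms are treated "as if they are constant symbols" and that the resulting sentence may use constants and the relation $A(x)$. The verification that moves, copying, and the winning condition transfer verbatim, followed by an application of Theorem~\ref{thm:main1} to the expanded vocabulary, is precisely what the observation asserts.
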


Since a Spoiler winning strategy for a \ms game is still a winning strategy for the analogous game with atoms (where atoms are never selected by Spoiler) we have:
\begin{lem} \label{lemma:atoms}
If Spoiler can win an $r$-round \ms game on a given pair of sets of structures then he can also win an $r$-round \ms game with atoms on the same pair of sets of structures.	
\end{lem}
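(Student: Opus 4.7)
The plan is to show that Spoiler's winning strategy in the ordinary \ms game transfers unchanged to the \ms game with atoms. Fix a winning Spoiler strategy $\sigma$ in the $r$-round \ms game on $(\mathcal{A},\mathcal{B})$. Since $\sigma$ was defined without any reference to atoms, Spoiler simply plays $\sigma$ verbatim in the atoms version: at every round he selects only structure elements, never atoms. It then suffices to argue that, against any Duplicator reply in the atoms game, no labeled pair $(A',B')$ in the final position gives rise to a partial isomorphism.

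I would split the argument by what Duplicator does on each final pair. If Duplicator never selected an atom on $A'$ or $B'$ during the $r$ rounds, then the restriction of her play to these boards is also a legal Duplicator play in the ordinary \ms game --- invoking the remark that we may assume Duplicator uses the oblivious strategy, so the atom-free sub-tree of the atoms-game play coincides with a sub-tree of the oblivious tree of the ordinary game on $(\mathcal{A},\mathcal{B})$. Because Spoiler follows $\sigma$, its winning guarantee applies to this pair, and no partial isomorphism is produced. Otherwise, Duplicator selected some atom $a_k$ on one of the boards in some round $j$, while Spoiler's matching move on the opposite board was a structure element $s$; since atoms carry no relations of $\tau$ and are individually labeled (so distinguishable from structure elements), the forced pairing $s \mapsto a_k$ cannot extend to a partial isomorphism.

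The main subtlety I expect is formalizing the first case --- making precise the sense in which the atom-free restriction of an atoms-game Duplicator play can be realized as a legitimate ordinary \ms play. An alternative and slicker route, if the direct argument becomes cumbersome, is to invoke Theorem~\ref{thm:main1} together with Observation~\ref{obs:eq}: given a distinguishing sentence $\phi$ with at most $r$ quantifiers for $(\mathcal{A},\mathcal{B})$, form its relativization $\phi^{\mathrm{rel}}$ by replacing every $\exists x\,\psi$ by $\exists x(\neg A(x) \wedge \psi)$ and every $\forall x\,\psi$ by $\forall x(\neg A(x) \to \psi)$, where $A(\cdot)$ is the atom predicate of Observation~\ref{obs:eq}; this preserves the quantifier count and distinguishes the augmented structures, so Observation~\ref{obs:eq} yields a Spoiler win in the atoms game.
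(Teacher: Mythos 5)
Your proposal is correct and matches the paper's approach: the paper gives no formal proof, justifying the lemma with the single remark that a Spoiler winning strategy for the ordinary \ms game remains winning in the atoms game when Spoiler never selects an atom, which is exactly your primary argument with the details (the dead-board case analysis via the oblivious strategy) filled in. Your alternative syntactic route via relativization and Observation~\ref{obs:eq} is also sound but is not what the paper does.
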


\begin{cor} \label{cor:atoms}
If Duplicator can win an $r$-round \ms game with atoms on a given pair of sets of structures, then she can also win an $r$-round \ms game on the same pair of sets of structures.
\end{cor}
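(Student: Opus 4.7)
The plan is to observe that this corollary is essentially the contrapositive of Lemma~\ref{lemma:atoms}, and so the only real content to verify is that the games in question are determined, so that ``Duplicator wins'' and ``Spoiler does not win'' are interchangeable in both settings.

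First I would note that both variants of the game have an underlying finite game tree once $r$ is fixed: although Spoiler may in principle introduce arbitrarily many atoms, any play uses at most $r$ of them per board, and Duplicator's oblivious strategy collapses each board's response into a finite set of choices. More abstractly, Observation~\ref{obs:eq} lets us view the \ms game with atoms as an ordinary \ms game on enlarged structures, and such games are determined: at every position, exactly one of Spoiler and Duplicator has a winning strategy. Alternatively, and cleanly, determinacy of both games follows from the Equivalence Theorem~\ref{thm:main1} together with Observation~\ref{obs:eq}, since existence vs.\ non-existence of a distinguishing sentence of the appropriate form is a dichotomy.

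With determinacy in hand, the conclusion follows at once. Suppose Duplicator wins the $r$-round \ms game with atoms on $(\mathcal{A},\mathcal{B})$. By determinacy of that game, Spoiler has no winning strategy for the $r$-round \ms game with atoms on $(\mathcal{A},\mathcal{B})$. The contrapositive of Lemma~\ref{lemma:atoms} then says that Spoiler has no winning strategy for the $r$-round (atom-free) \ms game on $(\mathcal{A},\mathcal{B})$. Applying determinacy of the atom-free game, Duplicator has a winning strategy there, which is exactly what we needed.

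The only subtle point, and the one I would take care to spell out, is the determinacy step, because the atoms introduce an apparently unbounded branching factor; beyond that, the corollary is purely formal and no further argument is required.
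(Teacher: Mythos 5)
Your proof is correct and takes essentially the same route as the paper, which presents the corollary as the immediate contrapositive of Lemma~\ref{lemma:atoms}; your decision to make the determinacy step explicit (via the finite depth of the game and the dominance of the oblivious strategy) is the right thing to spell out. One caveat: your ``alternative'' determinacy argument via Theorem~\ref{thm:main1} and Observation~\ref{obs:eq} is circular as stated, because that theorem only characterizes when \emph{Spoiler} wins, and passing from ``Spoiler has no winning strategy'' to ``Duplicator has one'' is exactly the determinacy being established---so the first, game-tree argument is the one to rely on.
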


To understand the additional power provided to Spoiler by the atoms, let us revisit how Duplicator survives in one critical juncture of the $3$-round,
$5$ versus $4$ \ms game from Lemma~\ref{lemma:g_upper_of_3}.
Figure~\ref{fig:B5L4} 
\begin{figure}[ht]
\begin{center}
\includegraphics[width=3.4in]{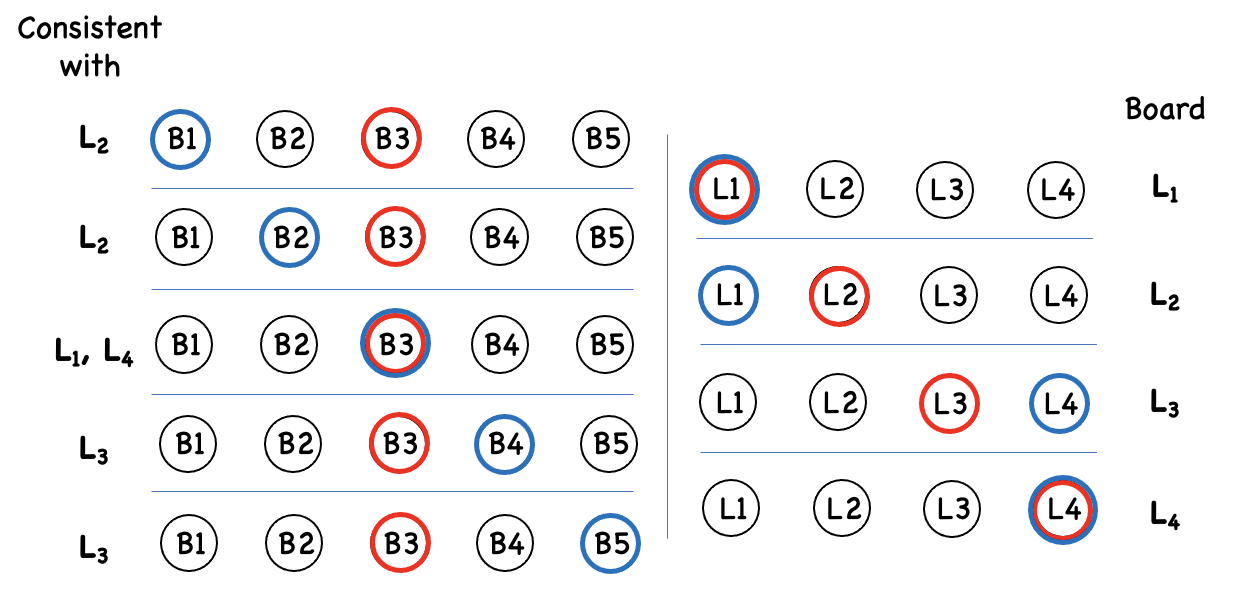}
\caption{MS game play on $\mathcal{B}$ versus $\mathcal{L}$. 1st round moves are in Red, 2nd round moves in blue.}\label{fig:B5L4}
\end{center}
\end{figure}
shows the boards after two rounds in the critical sequence. In the 1st round, Spoiler played $B(3)$ and Duplicator played all possible moves on $L$. We consider the case where Spoiler then replies with 2nd round moves on $L$ as indicated in blue in the figure, and in particular playing on top of $L(1)$ on the board labeled $L_1$ and on top of $L(4)$ on the board labeled $L_4$. Duplicator then replies with all possible responses on $B$ (again in blue). To the left of each $B$ board we indicate which $L$ boards the board has managed to keep an isomorphism with. 

The crux of the matter is that because the third $B$ board is consistent with both $L_1$ and $L_4$, Spoiler cannot break both isomorphisms in the one remaining move.

With atoms, however, investing a turn to play an atom  can usefully separate these two $L$ boards.  Figure~\ref{fig:B5L4Atoms} 
\begin{figure}[ht]
\begin{center}
\includegraphics[width=3.4in]{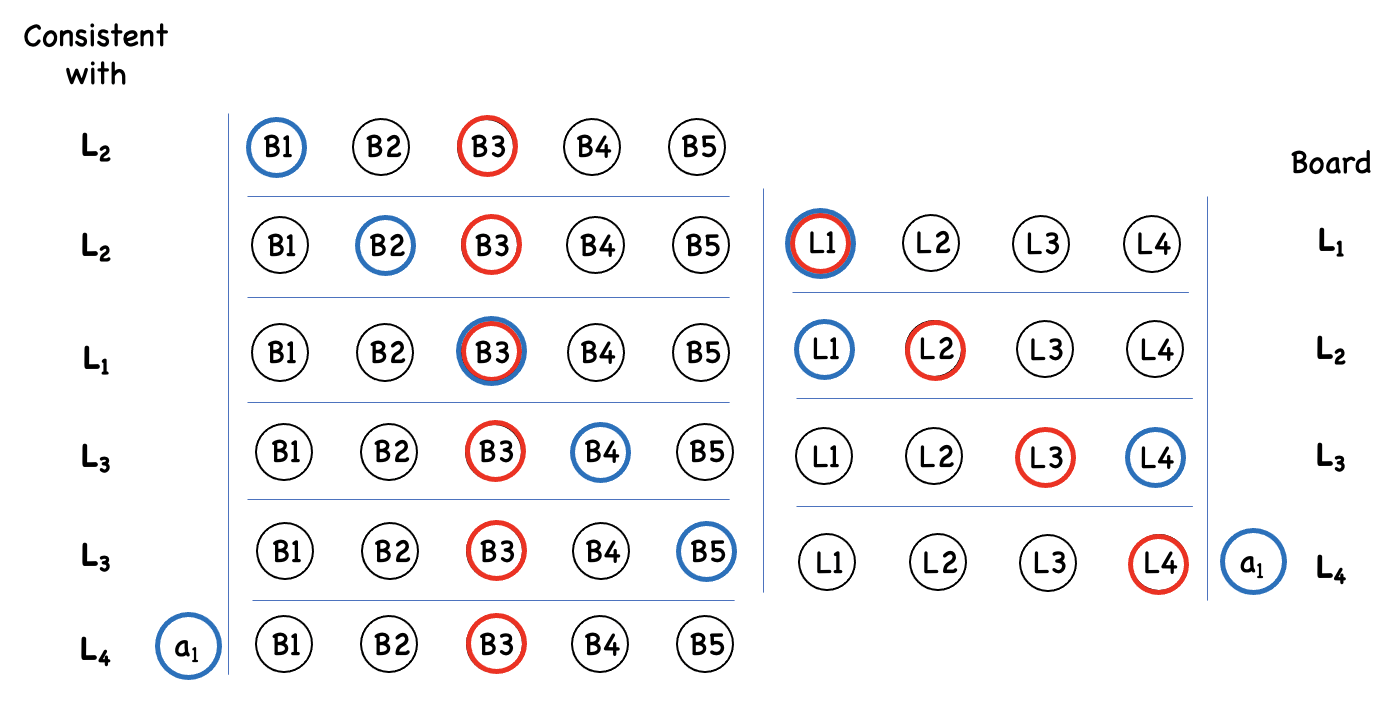}
\caption{MS game play with atoms.  
}\label{fig:B5L4Atoms}
\end{center}
\end{figure}
illustrates Spoiler changing his second move on board $L_4$ from $L(4)$ to $a_1$.  Duplicator now has an additional potentially viable 2nd round move, playing the newly played atom, $a_1$, as indicated in the additional board added on the left.  However, note in the left hand column of the figure how no board is any longer simultaneously partially isomorphic with boards $L_1$ and $L_4$. Spoiler can now break each of the isolated isomorphisms by playing on $B$, in order, from top to bottom, as follows: $B(2), B(1), B(2), B(5), B(4), B(4)$.

\medskip

The following follows from the discussion above and Lemma \ref{lemma:g_upper_of_3}.

\begin{prop} \label{lemma:atoms_vs_no_atoms_example}
While Duplicator can always win $3$-round \ms games on linear orders of sizes $5$ \vs $4$, Spoiler can always win $3$-round \ms games with atoms on linear orders of these sizes.
\end{prop}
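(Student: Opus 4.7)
The first assertion is immediate from Lemma~\ref{lemma:g_upper_of_3}, so my plan is to exhibit an explicit Spoiler winning strategy for the $3$-round \ms game \emph{with atoms} on a $5$ vs.\ $4$ pair $(B,L)$, making precise the illustrative example culminating in Figure~\ref{fig:B5L4Atoms}.

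Spoiler plays $B(3)$ in round~$1$, so by the oblivious convention we obtain four $L$-boards $L_1,\dots,L_4$ with $L(i)$ selected on $L_i$. In round~$2$ Spoiler plays on the $L$-side: $L(1)$ on top of $L_1$, the element $L(1)$ on $L_2$, the element $L(4)$ on $L_3$, and a fresh atom $a_1$ on $L_4$. The substitution of $a_1$ for what would be $L(4)$ on top is the critical atom-dependent move. Duplicator's oblivious reply on $B$ yields six boards $B_1,\dots,B_6$ indexed by her round-$2$ choice $B(1),B(2),B(3),B(4),B(5),a_1$.

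The key step is to verify that after round~$2$ each $B_i$ is partially isomorphic with \emph{at most one} $L_j$. A direct inspection of the $6\times 4$ table, using well-definedness of the labeling function, order consistency on the selected elements, and the rule that an atom can only be matched by the same atom, yields: $B_1$ and $B_2$ match only $L_2$; $B_3$ matches only $L_1$; $B_4$ and $B_5$ match only $L_3$; and $B_6$ matches only $L_4$. In particular the critical pair $(L_1,L_4)$ that defeated Spoiler in the atom-free argument is now separated, because matching $L_4$ forces Duplicator's round-$2$ move on $B$ to be the atom $a_1$, whereas matching $L_1$ forces it to be $B(3)$ on top.

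Because each surviving iso candidate is isolated in a single pair $(B_i,L_{j(i)})$, in round~$3$ Spoiler plays on $B$ an element that, relative to the already-selected marks on $B_i$, occupies an order position with no counterpart in $L_{j(i)}$: the moves $B(2),B(1),B(2),B(5),B(4),B(4)$ on $B_1,\dots,B_6$ (respectively ``squeezes'' between consecutive marks or ``overshoots'' the extremes) rule out any order-consistent or atom-based Duplicator round-$3$ reply. The main obstacle is this last exhaustive bookkeeping; the conceptual content, however, lies in the disjointness established in round~$2$, where the atom acts as a ``private channel'' peeling $L_4$ off the cluster $\{L_1,L_2,L_3\}$ so that each remaining iso can be broken independently by a single round-$3$ move on the corresponding $B$-board.
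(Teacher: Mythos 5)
Your proposal is correct and follows essentially the same route as the paper: the paper proves this proposition by exactly the discussion surrounding Figures~\ref{fig:B5L4} and~\ref{fig:B5L4Atoms}, where Spoiler opens with $B(3)$, substitutes the atom $a_1$ for the play atop $L(4)$ on board $L_4$ so that no $B$-board remains simultaneously isomorphic to $L_1$ and $L_4$, and then finishes with the same round-$3$ moves $B(2),B(1),B(2),B(5),B(4),B(4)$. Your write-up merely makes explicit the round-$2$ moves on $L_2$ and $L_3$ and the $6\times 4$ compatibility table that the paper leaves to its figures, and that bookkeeping checks out.
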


\medskip

We are now going to establish Duplicator-winning strategies for \ms games with atoms, for various numbers of rounds. By Lemma \ref{lemma:atoms}, Duplicator has a harder time winning these games than standard \ms games and so these strategies will initially provide weaker upper bounds than the upper bounds obtained by providing Duplicator-winning strategies in \ms games. 

\begin{defi}
Let $g'(r)$ denote the largest $k$ such that Spoiler wins every $r$-round \ms game with atoms on two sets $\mathcal{B}$ and $\mathcal{L}$ of linear orders  where each $B \in \mathcal{B}$ is of size at least $k$ and each $L \in \mathcal{L}$ is of size less than $k$.
\end{defi} 

To see that $g'(r)$ is well-defined, note that if Duplicator can win an $r$-round E-F game on two linear orders $B, L$ then she can win an $r$-round \ms game, as well as an $r$-round \ms game with atoms, on sets $\mathcal{B},\mathcal{L}$ of structures with $B \in \mathcal{B}$ and $L \in \mathcal{L}$ -- this is the case because she can focus on the one pair of structures, $L$ and $B$, when playing the \ms game, and further, because Spoiler gains no advantage from playing atoms when play is constrained 
to just a pair
of structures. Hence $g'(r) \leq f(r)$. In our considerations of $g'$ we will exclusively be focusing on game-based approaches. We will, further, just be concerned with establishing upper bounds on $g'(r)$, and in so doing will always be taking $\mathcal{B}$ and $\mathcal{L}$ to be singletons, $\mathcal{B} = \{B\}, \mathcal{L} = \{L\}$.

The following is an immediate consequence of Lemma \ref{lemma:atoms}.
\begin{lem} \label{lemma:g'_doms_g}
For all values of $r$, we have  $g(r) \leq g'(r)$.
\end{lem}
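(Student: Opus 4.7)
The plan is to chain Definition~\ref{def:g}, the Equivalence Theorem~\ref{thm:main1}, and Lemma~\ref{lemma:atoms}: a separating first-order sentence with $r$ quantifiers is translated first into a Spoiler win in the ordinary \ms game, and then (by the cited lemma) into a Spoiler win in the \ms game with atoms. This immediately pushes the threshold $g(r)$ below the threshold $g'(r)$.

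Concretely, I would carry out three steps. First, set $k = g(r)$ and apply Definition~\ref{def:g} to obtain a first-order sentence $\phi$ with at most $r$ quantifiers such that $\phi$ is true on every linear order of size at least $k$ and false on every linear order of size less than $k$. Second, fix arbitrary sets $\mathcal{B}$ and $\mathcal{L}$ of linear orders in which every $B \in \mathcal{B}$ has size at least $k$ and every $L \in \mathcal{L}$ has size less than $k$; by construction $B \models \phi$ for all $B \in \mathcal{B}$ and $L \models \neg\phi$ for all $L \in \mathcal{L}$, so Theorem~\ref{thm:main1} gives that Spoiler wins the $r$-round \ms game on $(\mathcal{B}, \mathcal{L})$. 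Third, apply Lemma~\ref{lemma:atoms} to promote this to a Spoiler win in the $r$-round \ms game with atoms on the same pair $(\mathcal{B}, \mathcal{L})$. Since the pair was arbitrary within the admissible size range, the definition of $g'$ yields $g'(r) \geq k = g(r)$.

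I do not anticipate any genuine obstacle: the hard work is already packaged in Theorem~\ref{thm:main1} and Lemma~\ref{lemma:atoms}. The one delicate point to keep straight is that Definition~\ref{def:g} is existential (a single sentence $\phi$ suffices) while the definition of $g'$ is universal in $(\mathcal{B}, \mathcal{L})$; fortunately, the single $\phi$ produced in the first step works uniformly across every admissible pair, so the universal quantification in the definition of $g'$ presents no real difficulty.
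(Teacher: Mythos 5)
Your proof is correct and follows essentially the same route as the paper, which simply declares the lemma an immediate consequence of Lemma~\ref{lemma:atoms}. You have merely made explicit the step the paper leaves implicit, namely using the Equivalence Theorem~\ref{thm:main1} to convert the syntactic definition of $g$ into the Spoiler win that Lemma~\ref{lemma:atoms} then lifts to the game with atoms.
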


The following lemma describes why it is possible to recursively prove upper bounds on $g'$ (and hence $g$) in \ms games with atoms and get around the issue described in Section~\ref{sec:interlude}. 

\begin{lem} \label{lemma:reduction}
\textbf{Reduction Lemma:} Suppose we have an $r$-round \ms game with atoms on linear orders of sizes $K$ and $K'$.  For some integer $h$, with $1 \leq h \leq \min(K, K')$, suppose there are boards on the $L$ and $B$ sides in which first moves of $L(h)$ and $B(h)$ have been played, or in which moves of $L(K-h+1)$ and $B(K'-h+1)$ have been played. Then Duplicator wins the $K$ \vs $K'$ game on those boards iff she wins the $(r-1)$-round \ms game with atoms on linear orders of sizes $K-h$ and $K'-h$.
\end{lem}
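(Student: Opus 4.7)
The plan is to establish both directions of the iff. By the symmetry of reversing the linear orders, it suffices to treat the first case, where the first-round moves are $L(h)$ and $B(h)$. After this first round, every board decomposes into three parts: a \emph{left portion} of size $h-1$ (identically labeled on both sides), the played constant $c_1$ at position $h$, and a \emph{right portion} of size $K-h$ on the $B$ side and $K'-h$ on the $L$ side. Atoms are shared across both sides.

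For the direction that Duplicator wins the bigger game $\implies$ Duplicator wins the smaller game, I would appeal to the Equivalence Theorem~\ref{thm:main1}, extended to atoms by Observation~\ref{obs:eq}. Contrapositively, suppose Spoiler wins the smaller game in $r-1$ rounds and let $\phi$ be a distinguishing sentence with $r-1$ quantifiers. Relativize $\phi$ to the predicate $R(y) := (y > c_1) \vee A(y)$; that is, replace every $\forall y\,\psi$ by $\forall y\,(R(y) \to \psi')$ and every $\exists y\,\psi$ by $\exists y\,(R(y) \wedge \psi')$, with $\psi'$ recursively relativized. The resulting sentence $\phi'$ still has $r-1$ quantifiers, and on either bigger-game structure it reads exactly as $\phi$ evaluated on the right portion together with the atoms, which is isomorphic to the corresponding smaller-game structure. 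Hence $\phi'$ distinguishes the bigger structures, and by the Equivalence Theorem Spoiler wins the remaining $r-1$ rounds.

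For the opposite direction I would use a direct game-theoretic argument. Let $\sigma$ be a winning Duplicator strategy in the $(r-1)$-round smaller game; Duplicator plays obliviously in the bigger game. Classify each Spoiler move on each board as (i) a right-portion move, (ii) a left-portion move at some position $j < h$, (iii) a play on top of $c_1$, or (iv) an atom. Oblivious play produces, among all its responses: the right-portion response prescribed by $\sigma$ for case (i), the mirror position $j$ on the opposite side for case (ii), an on-top-of-$c_1$ response for case (iii), and the matching atom for case (iv). Because $\sigma$ yields a partially isomorphic pair of right-portion boards, and because oblivious play realizes the full Cartesian product of per-round choices, there is a pair $(B^*, L^*)$ in the bigger game whose right-portion projection is the $\sigma$-winning pair and whose left, $c_1$, and atom moves all match round by round. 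The isomorphism of the left portions (same size $h-1$ with identical labels), the matching of $c_1$ and of atoms, and the partial isomorphism of the right portions then combine to give a partial isomorphism of the full bigger-game boards, so Duplicator wins.

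The main technical hurdle I expect is the multi-structural bookkeeping: in a single round Spoiler may play different move types on different boards, so the existence of the consistent pair $(B^*, L^*)$ has to be extracted carefully from the Cartesian collection generated by oblivious play, respecting both the rounds in which Spoiler chose $\mathcal{B}$ and the rounds in which he chose $\mathcal{L}$. This is precisely where the atom augmentation earns its keep: atoms in the smaller game furnish the separation power that, in the atom-free recursion of Section~\ref{sec:interlude}, Spoiler could only obtain by playing on top of $c_1$ in the bigger game, so that the reduction which failed there now succeeds.
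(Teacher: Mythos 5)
Your proposal is essentially correct, and for one direction it takes a genuinely different route from the paper. For the direction ``Duplicator wins the big game $\Rightarrow$ Duplicator wins the small game,'' the paper argues game-theoretically: Spoiler lifts his small-game winning strategy onto the right portions, and any Duplicator reply landing in the left portion, on $c_1$, or on a wrong atom already breaks the partial isomorphism, so play is forced onto the right portions. You instead relativize a distinguishing $(r-1)$-quantifier sentence to $(y > c_1) \vee A(y)$ and invoke the Equivalence Theorem via Observation~\ref{obs:eq}. Both work; yours is arguably cleaner for that direction since it sidesteps the multi-board bookkeeping entirely.

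For the converse direction your plan is essentially the paper's, but the ``technical hurdle'' you flag is real, and your framing of $\sigma$ as acting on the ``right-portion projection'' does not quite discharge it: $\sigma$'s responses to right-portion moves may legitimately depend on which left-portion positions Spoiler has played on which boards in which rounds --- that is exactly the separation power discussed in Section~\ref{sec:interlude} --- so you cannot feed $\sigma$ only the right-portion moves and handle the remaining move types by side-channel mirroring. The paper's device closes this cleanly: rename the positions $1,\dots,h$ (including the play-on-top-of-$c_1$ move) as atoms $a_1,\dots,a_h$, with Duplicator committing to count any non-mirrored reply on those positions as a broken isomorphism. Under this renaming the \emph{entire} remaining $(r-1)$-round play on the two designated boards is literally a play of the $(K-h)$ versus $(K'-h)$ game with $h$ extra atoms --- harmless, since Spoiler already has unboundedly many atoms --- so $\sigma$ is applied to the whole simulated play, its guaranteed winning pair automatically has matching atoms (hence mirrored left portions and matched $c_1$ plays), and the consistent pair $(B^*,L^*)$ you need is exactly the preimage of that pair under the renaming. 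Your closing remark identifies this mechanism correctly; it just needs to be promoted from a comment to the engine of the argument.
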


Figure \ref{fig:6_vs_5_reduction} illustrates the reduction of a $3$-round, $6$ versus $5$ \ms game with atoms, to a $2$-round, $3$ versus $2$ \ms game with atoms, per the conclusion.

\begin{figure} [ht]
\centerline{\scalebox{0.30}{\includegraphics{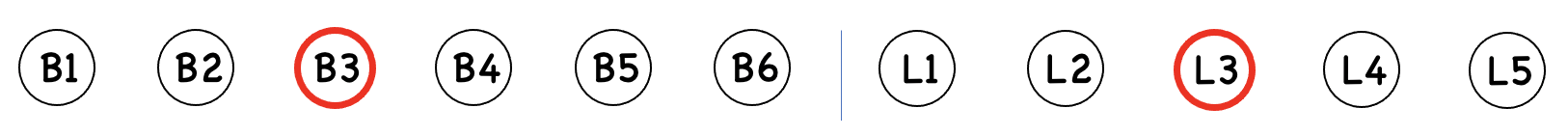}}}
\caption{An illustration of the reduction in Lemma \ref{lemma:reduction} where a $3$-round, $6$ versus $5$ \ms game with atoms gets reduced to a $2$-round, $3$ versus $2$ \ms game with atoms.}
\label{fig:6_vs_5_reduction}
\end{figure}

\begin{proof}
Without loss of generality assume the 1st round moves are $L(h)$ and $B(h)$. 
Duplicator now makes a pact with Spoiler, saying that the only way she will maintain an isomorphism with moves on the left hand side of $L(h)$/$B(h)$ is by mirroring (in other words by responding to $L(i)$ with $B(i)$ when $1 \leq i \leq h$, and vice versa). She tells Spoiler that if ever such a move is \textit{not} mirrored, Spoiler can count it as a break in the partial isomorphism between the two boards. By agreeing to these stricter isomorphism rules, Duplicator makes it harder for herself to maintain partial isomorphisms.  However, the effect is that we can remove the elements $B(1),...,B(h)$ and $L(1),...,L(h)$ and set aside the first $h$ atoms on each side, $a_1,...,a_h$, only to be played as follows: if Spoiler ever would have wanted to play $B(i)$ or $L(i)$, with $1  \leq i \leq h$, on a given board, he can instead play $a_i$ with the same effect. This reduces the game to an $(r-1)$-round \ms game with atoms on linear orders of size $K-h$ and $K'-h$ with $h$ additional atoms. Since in the definition of \ms games with atoms, Spoiler already has as many atoms as he wants, the additional $h$ atoms have no effect on the game.
We are left with just playing an $(r-1)$-round \ms game with atoms on linear orders of size $K-h$ and $K'-h$ and if Duplicator can win such a game, she can win the original game.

On the other hand, if Duplicator does \textit{not} have a winning strategy in the $K-h$ \vs $K'-h$ \ms game with atoms, then Spoiler has a winning strategy and can force play to be entirely on the side of the initial move with this many unplayed elements, and 
thus win.  The lemma follows.
\end{proof}

\begin{lem} \textbf{Laddering Up Lemma for \MS Games and \MS Games with Atoms:} \label{lemma:stepping-up}
Suppose Duplicator can win \ms games (respectively \ms games with atoms) on singleton sets of sizes $K, K+1$ for all $K \geq N$. Then Duplicator can also win \ms games (respectively \ms games with atoms) 
on singleton sets of sizes $K, K'$ whenever $K \geq N$ and $K' \geq N$.
\end{lem}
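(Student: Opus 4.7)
The plan is to reduce the lemma, in both of its forms, to transitivity of the underlying Duplicator-win equivalence relation. For ordinary \ms games this is exactly the relation $\equiv_r$ from Section~\ref{sec:prelims}, which is an equivalence relation by Lemma~\ref{lemma:prenex_equiv}. For the atoms variant, call the analogous relation $\equiv'_r$. By Observation~\ref{obs:eq}, any $r$-round \ms game with atoms on $(\mathcal{A},\mathcal{B})$ coincides with an ordinary $r$-round \ms game played on the structures obtained by adjoining the atoms (as distinguished elements, with the unary predicate $A(x)$), so Theorem~\ref{thm:main1} yields a corresponding Equivalence Theorem for \ms games with atoms. The proof of Lemma~\ref{lemma:prenex_equiv} then carries over verbatim in this enriched signature, showing that $\equiv'_r$ is also an equivalence relation.

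With transitivity in hand, the lemma follows by a simple chaining argument. Let $L_M$ denote the linear order of size $M$, viewed as a singleton set for the purposes of these games (and, in the atoms case, carrying the common supply of atoms). The hypothesis gives $\{L_K\} \equiv_r \{L_{K+1}\}$ for every $K \geq N$, and likewise $\{L_K\} \equiv'_r \{L_{K+1}\}$ in the atoms case. Given arbitrary $K, K' \geq N$, assume without loss of generality $K \leq K'$ and form the chain
\[
\{L_K\} \equiv_r \{L_{K+1}\} \equiv_r \{L_{K+2}\} \equiv_r \cdots \equiv_r \{L_{K'}\}.
\]
Applying transitivity $K' - K$ times produces $\{L_K\} \equiv_r \{L_{K'}\}$, which is exactly the claim that Duplicator wins the $r$-round \ms game on boards of sizes $K$ and $K'$. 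The identical chain with $\equiv'_r$ in place of $\equiv_r$ handles the atoms case.

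The argument is short and the only step that requires a moment's thought is the extension of Lemma~\ref{lemma:prenex_equiv} to the atoms setting; once Observation~\ref{obs:eq} is invoked, this is immediate, so I anticipate no substantive obstacle. In particular, no game-theoretic merging of Duplicator strategies across different pairs of sizes is required: the sentence-level Equivalence Theorem, together with the syntactic transitivity it induces, does all the work.
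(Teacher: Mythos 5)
Your proposal is correct and follows essentially the same route as the paper: chain $\{L_K\}\equiv_r\{L_{K+1}\}\equiv_r\cdots\equiv_r\{L_{K'}\}$ and apply the transitivity from Lemma~\ref{lemma:prenex_equiv}, handling the atoms case by reducing to ordinary \ms games on the enriched structures via Observation~\ref{obs:eq}. Your extra remark that the Equivalence Theorem (and hence transitivity) carries over verbatim in the enriched signature is exactly the justification the paper leaves implicit.
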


\begin{proof}
Suppose both $K \geq N$ and $K' \geq N$. Let l.o.($K$) denote the linear order of size $K$. Then we have that l.o.($K$) $\equiv_r$ l.o.($K+1$) $\equiv_r \cdot\cdot\cdot \equiv_r$ l.o.($K'$). By repeated application of Lemma \ref{lemma:prenex_equiv} the lemma follows for \ms games.

The same argument works for \ms games with atoms by replacing each linear order with the union of the linear order and the corresponding atoms, whereby \ms games with atoms reduce to \ms games, per Observation~\ref{obs:eq}. 
\end{proof}

\begin{lem} \label{lemma:g_prime_of_2}
Duplicator can win $2$-round MS games with atoms on linear orders of sizes $2$ or greater and hence $g'(2) \leq 2$.
\end{lem}

\begin{proof}
Immediate from Lemma \ref{lemma:g_upper_of_2} coupled with the observation that it never helps Spoiler to play an atom in the first or last round.
\end{proof}

\begin{lem} \label{lemma:g_prime_of_3}
Duplicator can win $3$-round \ms games with atoms on linear orders of sizes $5$ or greater and hence $g'(3) \leq 5$.
\end{lem}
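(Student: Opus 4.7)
The plan is to reduce the three-round analysis to the already-handled two-round case by combining the Laddering Up Lemma (Lemma~\ref{lemma:stepping-up}), the Reduction Lemma (Lemma~\ref{lemma:reduction}), and Lemma~\ref{lemma:g_prime_of_2}. By Lemma~\ref{lemma:stepping-up}, it suffices to exhibit a Duplicator winning strategy in the $3$-round \ms game with atoms on sizes $K$ and $K+1$ for every $K \geq 5$; I fix such $K$ and set $|L| = K$, $|B| = K+1$.

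I then case on Spoiler's first round. If Spoiler plays an atom, Duplicator plays the same atom on the opposite side; the atom selections match, so the remaining game is a $2$-round \ms game with atoms on the original linear orders (both sizes $\geq 2$), which Duplicator wins by Lemma~\ref{lemma:g_prime_of_2}. If Spoiler plays a structural element $L(k)$ with $1 \leq k \leq K$ (the case of a move $B(j)$ on the $B$ side is handled the same way), Duplicator's oblivious reply on $B$ includes both a \emph{left-mirror} copy where $B(k)$ is played and a \emph{right-mirror} copy where $B(k+1)$ is played. Applying the Reduction Lemma to the left-mirror pair (taking $h = k$), the remaining $2$-round game on that pair is equivalent to a $2$-round \ms game with atoms on sizes $K-k$ and $K+1-k$, both of which are $\geq 2$ iff $k \leq K-2$. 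Applied to the right-mirror pair (taking $h = K-k+1$, so that $L(k) = L(K-h+1)$ and $B(k+1) = B((K+1)-h+1)$), the reduced sizes are $k-1$ and $k$, both $\geq 2$ iff $k \geq 3$. Since $K \geq 5$ gives $K-2 \geq 3$, every $k \in \{1,\dots,K\}$ falls into at least one of these two ranges, and Lemma~\ref{lemma:g_prime_of_2} then gives Duplicator a win on the corresponding mirror pair; winning on a single board pair suffices for her to win the overall game. The analogous analysis for $B(j)$ uses left-mirror $L(j)$ (valid for $j \leq K-2$) and right-mirror $L(j-1)$ (valid for $j \geq 4$), and these ranges cover $\{1,\dots,K+1\}$ exactly when $K \geq 5$.

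The main technical obstacle—and what pins down the threshold at $K = 5$—is verifying that every possible first move by Spoiler is absorbed by one of the two mirror windows. The threshold is tight: for $K = 4$ the middle move $B(3)$ lies strictly between the two windows, matching Proposition~\ref{lemma:atoms_vs_no_atoms_example}, which says Spoiler wins the $3$-round \ms game with atoms on sizes $5$ versus $4$. Combining the cases yields a Duplicator winning strategy on adjacent sizes $K,~K+1$ for every $K \geq 5$, and the Laddering Up Lemma extends this to all pairs of sizes $\geq 5$, giving $g'(3) \leq 5$.
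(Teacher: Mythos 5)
Your proof is correct and follows essentially the same route as the paper's: answer Spoiler's first move with a mirror move so that the Reduction Lemma turns the remaining two rounds into a $2$-round game with atoms on linear orders of sizes at least $2$, then invoke Lemma~\ref{lemma:g_prime_of_2} and finish with the Laddering Up Lemma. The paper phrases Duplicator's reply as ``matching the short side,'' which is exactly the choice between your left- and right-mirror copies, so the two arguments coincide; your explicit treatment of an atom first move and of the $B$-side versus $L$-side arithmetic is just a more detailed write-up of the same idea.
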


\begin{proof}
Suppose we have linear orders of sizes $K,K+1$ with $K \geq 5$. Any Spoiler 1st round move leaves a short side of no more than $\lfloor\frac{K}{2}\rfloor$ unplayed elements on that side. Duplicator can then reply with a move leaving an identical short side. Without loss of generality assume this common short side is on the left. Then to the right of the played moves, each board has at least $K - \lfloor\frac{K}{2}\rfloor - 1 \geq 2$ unplayed elements, and by virtue of the fact that $g'(2) \leq 2$ (Lemma \ref{lemma:g_prime_of_2}), Lemma \ref{lemma:reduction} guarantees that Duplicator has a winning strategy. The Laddering Up Lemma \ref{lemma:stepping-up} then guarantees that Duplicator has a winning strategy for any boards of sizes $K \geq 5$ and $K' \geq 5$. The lemma follows.
\end{proof}

Although this section is concerned with establishing upper bounds on $g$, we shall actually need to establish precise values for $g'$ in order to get the upper bounds on $g$ to go through. The discussion we gave to show that Spoiler can win a $5$ \vs $4$ $3$-round \ms game with atoms (Proposition \ref{lemma:atoms_vs_no_atoms_example}) can easily be extended to show that Spoiler can win such a $3$-round game on linear orders of sizes $5$ or greater versus $4$ or smaller. Hence we have that $g'(3) \geq 5$, so that together with the prior lemma we have:

\begin{lem} \label{lemma:g_prime_of_3_hard} $g'(3) = 5$.
\end{lem}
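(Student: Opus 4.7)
The plan is to combine Lemma \ref{lemma:g_prime_of_3} with the sketch given immediately before the statement of this lemma. The upper bound $g'(3) \leq 5$ is already Lemma \ref{lemma:g_prime_of_3}, so all the work is in establishing the matching lower bound $g'(3) \geq 5$: namely that for every pair of linear orders $B, L$ with $|B| \geq 5$ and $|L| \leq 4$, Spoiler has a winning strategy in the $3$-round \ms game with atoms on $(\{B\}, \{L\})$.

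First I would nail down the base case $|B| = 5, |L| = 4$ by expanding the sketch accompanying Proposition \ref{lemma:atoms_vs_no_atoms_example} into an explicit case analysis. Spoiler opens on $B$ with $B(3)$; by the oblivious strategy, Duplicator's response may be taken to consist of the four labeled boards $L_1, L_2, L_3, L_4$ in which $L(1), L(2), L(3), L(4)$ are played respectively. In round $2$ Spoiler plays on $L$ exactly the moves shown in Figure \ref{fig:B5L4Atoms}, crucially substituting the atom $a_1$ for $L(4)$ on board $L_4$. The key verification is that after Duplicator's oblivious response on $B$, every surviving labeled $B$-board is partially isomorphic to at most \emph{one} of $L_1, \ldots, L_4$; the substitution of $a_1$ is precisely what prevents the third labeled $B$-board from being simultaneously partially isomorphic to $L_1$ and $L_4$, which in the plain \ms game of Lemma \ref{lemma:g_upper_of_3} was what rescued Duplicator. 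Spoiler's round $3$ move on $B$ is then the sequence $B(2), B(1), B(2), B(5), B(4), B(4)$ described in the text, and I would check board by board that it kills the unique surviving partial isomorphism.

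Next I would extend to general $|B| = K \geq 5$ and $|L| = K' \leq 4$. The observation is that Duplicator's oblivious response on $L$ in round $1$ produces at most $K' \leq 4$ essentially distinct labeled $L$-boards regardless of $K$, so the structural picture after round $1$ is the same as in the base case (with $4 - K'$ boards simply missing when $K' < 4$, which only helps Spoiler). In round $1$ Spoiler plays any interior element $B(m)$ of $B$ with $m \geq 3$ and $K - m \geq 2$, which exists for every $K \geq 5$ and preserves the two-on-each-side property that made the rounds $2$ and $3$ analysis go through in the base case. The atom play of round $2$ and the breaking moves of round $3$ then transfer essentially verbatim, with the obvious relabelling when $K > 5$ (Spoiler just picks elements at the appropriate positions on either side of $B(m)$).

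The main obstacle is the round-$2$ bookkeeping: the atom-play on $L$ must simultaneously (i) separate the surviving $L$-boards enough that no single labeled $B$-board in Duplicator's oblivious response can remain partially isomorphic to two of them, and (ii) still leave Spoiler, in round $3$, with distinct $B$-side responses that break every surviving isomorphism. Condition (i) is exactly the effect of inserting the atom, as highlighted in the discussion around Figure \ref{fig:B5L4Atoms}; condition (ii) is straightforward to check once (i) holds, since Spoiler may choose a different element on each labeled $B$-board. The extension to larger $|B|$ introduces no new phenomena because $|L| \leq 4$ caps the number of cases to verify, so the difficulty lies in presenting the base-case bookkeeping cleanly rather than in inventing any new ideas.
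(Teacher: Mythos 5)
Your proposal is correct and follows essentially the same route as the paper: the upper bound is quoted from Lemma \ref{lemma:g_prime_of_3}, and the lower bound $g'(3)\geq 5$ is obtained by fleshing out the $5$ \vs $4$ atom example behind Proposition \ref{lemma:atoms_vs_no_atoms_example} and extending it to linear orders of size $5$ or greater versus $4$ or smaller, which is exactly what the paper does (and the paper is, if anything, terser about the extension than you are).
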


Any Spoiler-winning strategy in an ordinary \ms game of $r$ rounds corresponds to a sentence that is valid for $B$ but not for $L$ with $r$ quantifiers. If Spoiler's strategy starts on $B$ then the corresponding sentence starts with $\exists$, while if Spoiler's strategy starts on $L$, the sentence starts with $\forall$. If we force Spoiler to play first on $L$, then we are giving Duplicator an advantage so that she may be able to win games that she would not be able to win otherwise.

\begin{defi} \label{def:g_forall}
Let $g'_\forall(r)$ denote the smallest $k$ such that Duplicator can win \ms games with atoms on a pair of linear orders, each of size $k$ or greater, if Spoiler is constrained to play his first move on $L$.
\end{defi}

\begin{lem} \label{lemma:for_all}
If Spoiler is constrained to play his first move on $L$,  Duplicator can win $r$-round \ms games with atoms on linear orders of sizes $2g'(r-1)$ or greater. Hence, 
$g'_\forall(r) \leq 2g'(r-1)$. 
\end{lem}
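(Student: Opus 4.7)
Plan: Duplicator's winning strategy responds to Spoiler's constrained first move on $L$ either by mirroring an atom or, for a structural move $L(h)$, by creating two copies of $B$ that match Spoiler's position from the left and from the right respectively. Under the conventional labeling in which $L$ is the smaller (``little'') linear order, both candidate responses are valid indices on $B$. I will then invoke the Reduction Lemma on each copy to collapse the residual game to an $(r-1)$-round \ms game with atoms whose sizes I control, and appeal to the definition of $g'(r-1)$ together with the Laddering Up Lemma to finish.

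More concretely, if Spoiler's first move is an atom, Duplicator plays the same atom on every copy of $B$; the matched atoms contribute only a trivial singleton to any partial isomorphism, leaving the residual $(r-1)$-round game on the unaltered linear orders of sizes $\geq 2g'(r-1) \geq g'(r-1)$, which Duplicator wins by $g'(r-1)$'s definition combined with the Laddering Up Lemma applied to pairs of sizes both at least $g'(r-1)$.

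If Spoiler plays $L(h)$, Duplicator makes two copies of $B$ and plays $B(h)$ on copy $1$ and $B(|B|-|L|+h)$ on copy $2$. By the Reduction Lemma (left-match form on copy $1$, right-match form on copy $2$), these collapse to $(r-1)$-round \ms games with atoms on linear orders of sizes $(|L|-h,\,|B|-h)$ and $(h-1,\,|B|-|L|+h-1)$, respectively. Copy $1$ is a Duplicator win whenever $h \leq |L|-g'(r-1)$ (so both reduced sizes are at least $g'(r-1)$), and copy $2$ is a Duplicator win whenever $h \geq g'(r-1)+1$. Since $|L| \geq 2g'(r-1)$ gives $|L|-g'(r-1) \geq g'(r-1)$, the two intervals $\{1,\ldots,|L|-g'(r-1)\}$ and $\{g'(r-1)+1,\ldots,|L|\}$ jointly cover $\{1,\ldots,|L|\}$, so at least one copy always provides a winning reduction regardless of Spoiler's choice of $h$.

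The main obstacle is ensuring that the left-match and right-match intervals truly meet with no gap, and that is exactly what the threshold $2g'(r-1)$ buys: at a threshold of $2g'(r-1)-1$ a single central value of $h$ would lie outside both intervals, and neither candidate response would reduce to sub-games large enough to invoke $g'(r-1)$ directly. The tightness of this bound also highlights why the game-with-atoms formalism is the right setting—Section~\ref{sec:interlude} shows that analogous naive reductions fail in ordinary \ms games because of subtle play-on-top interactions, whereas the atom-augmented Reduction Lemma cleanly delivers the residual $(r-1)$-round game on each copy.
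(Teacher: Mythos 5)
Your proposal is correct and takes essentially the same route as the paper: respond to Spoiler's move $L(h)$ so that one side of the boards is matched exactly while both long sides retain at least $g'(r-1)$ elements, then invoke the Reduction Lemma and the Laddering Up Lemma. The only cosmetic differences are that you obliviously offer both a left-matching and a right-matching copy of $B$ and check that the two ranges of $h$ they handle cover $\{1,\dots,|L|\}$ exactly because $|L| \geq 2g'(r-1)$, whereas the paper adaptively matches whichever side is short, and that you explicitly dispose of an atom first move, which the paper leaves implicit.
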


\begin{proof} 
Consider an $r$-round, $2g'(r-1) + 1$ \vs $2g'(r-1)$ \ms game with atoms. Refer to Figure \ref{fig:forall_lemma}.
\begin{figure} [ht]
\centerline{\scalebox{0.38}{\includegraphics{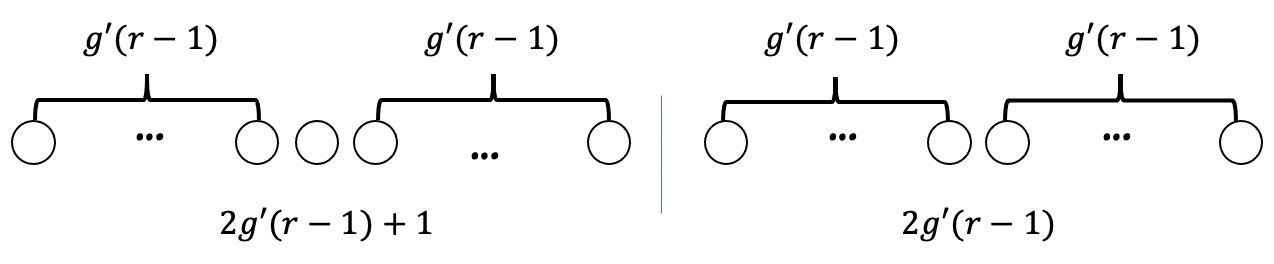}}}
\caption{The $r$-round, $2g'(r-1) + 1$ \vs $2g'(r-1)$ \ms game with atoms. Only one board on each side are shown.}
\label{fig:forall_lemma}
\end{figure}
If Spoiler is constrained to play on the $L$ side, his play will necessarily leave a short side of size at most $g'(r-1)-1$, which can be matched by a move that leaves the same short side on $B$, and with long sides that are each of size at least $g'(r-1)$. Hence the game is winnable by Duplicator via the Reduction Lemma (Lemma \ref{lemma:reduction}). For $r$ round games on boards of sizes $K+1$ \vs $K$ with $K  > 2g'(r-1)$ again the short sides can be matched up, leaving long sides of sizes still at least $g'(r-1)$. The lemma follows by the Laddering Up Lemma (Lemma \ref{lemma:stepping-up}).
\end{proof}

\begin{thm} \label{thm:main_upper}
For $r \geq 2$, 
\begin{equation} \label{eqn:main_recursion}
	g'(r) = \begin{cases} 2g'(r-1)~~~~~~\textrm{if $r$ is even,}\\ 2g'(r-1) + 1\textrm{ if $r$ is odd.} \end{cases}
\end{equation}
Moreover, Duplicator can win $r$-round \ms games with atoms on linear orders of sizes $2g'(r-1)$ or greater if $r$ is even, and on linear orders of sizes $2g'(r-1) + 1$ or greater if $r$ is odd.
\end{thm}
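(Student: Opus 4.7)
The plan is to prove Theorem~\ref{thm:main_upper} by strong induction on $r$, simultaneously establishing the recurrence $g'(r) = 2g'(r-1) + \epsilon_r$ and the ``moreover'' clause giving Duplicator's winning sizes, where $\epsilon_r = 0$ for $r$ even and $\epsilon_r = 1$ for $r$ odd. Write $M_r = 2g'(r-1) + \epsilon_r$ for the claimed threshold. The base cases $r = 2$ and $r = 3$ come from Lemmas~\ref{lemma:g_prime_of_2} and~\ref{lemma:g_prime_of_3_hard}, noting that $g'(2) = 2 = 2g'(1)$ and $g'(3) = 5 = 2g'(2) + 1$ satisfy the recurrence.

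For the inductive step I split into the upper bound $g'(r) \leq M_r$, which is the ``moreover'' clause, and the lower bound $g'(r) \geq M_r$. For the upper bound, Lemma~\ref{lemma:for_all} directly handles the case in which Spoiler's first move is constrained to lie in $L$, showing Duplicator wins whenever both boards are of size at least $2g'(r-1) \leq M_r$. For the complementary case of a first move in $B$, I would argue by label symmetry: the ``moreover'' threshold $M_r$ is a symmetric condition on the two sides, so relabeling the board Spoiler plays on as $L$ reduces this case to Lemma~\ref{lemma:for_all}. The Laddering Up Lemma~\ref{lemma:stepping-up} extends from the minimum-size instance to all pairs of boards of size at least $M_r$. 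A parity check explains why the odd-$r$ threshold must be $2g'(r-1)+1$ rather than $2g'(r-1)$: at the smaller size, Spoiler's middle-of-$B$ play together with a Reduction Lemma~\ref{lemma:reduction} invocation would drop Duplicator into an $(r-1)$-round sub-game with atoms on boards of sizes $g'(r-1)-1$ and $g'(r-1)$, which the inductive hypothesis declares a Spoiler win.

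For the lower bound $g'(r) \geq M_r$, I mirror the proof of Theorem~\ref{thm:main_lower} verbatim, substituting $g'$ for $g$ throughout and invoking the MS-games-with-atoms version of the inductive hypothesis at each step (valid because any Spoiler strategy for the ordinary MS game transfers to the atoms variant by Lemma~\ref{lemma:atoms}). Spoiler plays the middle element of the appropriate board (of $B$ for odd $r$, of $L$ for even $r$, respecting the parity-based inductive commitment), forcing a short side of size strictly less than $g'(r-1)$ on the opposite board and a long side of size at least $g'(r-1)$ on his own. The Reduction Lemma then yields an $(r-1)$-round sub-game with atoms satisfying the $\geq g'(r-1)$ versus $< g'(r-1)$ hypothesis, a Spoiler win by induction. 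End-move copies where Duplicator placed an extreme element in round one are handled by the tandem-subgame trick of Theorem~\ref{thm:main_lower} in the final two rounds.

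The main obstacle I anticipate is the parity bookkeeping in the upper-bound argument: the $B$-start subcase is not covered literally by Lemma~\ref{lemma:for_all}, and the label-swap argument must be paired with a careful verification that the reduced sub-game's boards are both of size at least $g'(r-1) = M_{r-1}$, hitting exactly the threshold supplied by the inductive ``moreover'' for $r-1$. A secondary subtlety is that the inductive commitment about which side Spoiler starts on toggles with the parity of $r$, and the proof must track this alternation consistently from the $r = 3$ base case upward so that the even and odd recurrences feed each other correctly.
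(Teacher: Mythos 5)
The lower-bound half of your plan is workable but heavier than necessary: the paper simply deduces $g'(r) \geq 2g'(r-1)+\epsilon_r$ from $g'(r)\geq g(r)$ (Lemma~\ref{lemma:g'_doms_g}), the base values $g(2)\geq 2$, $g'(3)=5$, $g(4)\geq 10$, and Theorem~\ref{thm:main_lower}, rather than re-running the game argument inside the atoms model; your route via Lemma~\ref{lemma:atoms} would also go through. The genuine gap is in the upper bound for even $r$. Your reduction of the ``Spoiler opens in $B$'' case to Lemma~\ref{lemma:for_all} by label symmetry does not work: the two boards have different sizes ($|B| = 2g'(r-1)+1$ versus $|L| = 2g'(r-1)$), so the situation is not symmetric, and the proof of Lemma~\ref{lemma:for_all} depends essentially on the first move landing on the \emph{smaller} board, where every element leaves a short side of size at most $g'(r-1)-1$ that Duplicator can match on $B$ while keeping both long sides of size at least $g'(r-1)$. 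If Spoiler instead opens with the exact middle element $B(g'(r-1)+1)$ of the larger board, both of its sides have exactly $g'(r-1)$ elements, and no single response on $L$ matches either side without leaving a reduced $(r-1)$-round subgame of sizes $g'(r-1)$ versus $g'(r-1)-1$ --- a Spoiler win by induction when Spoiler is free to choose where to move. Your own ``parity check'' paragraph describes precisely this configuration but deploys it only to explain the odd-$r$ ``$+1$''; it is equally fatal to your even-$r$ argument as written.

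The missing idea is the paper's two-board response: against $B(g'(r-1)+1)$, Duplicator duplicates $L$ and plays $L(g'(r-1))$ on one copy and $L(g'(r-1)+1)$ on the other, so that whichever side Spoiler pursues, some copy offers a matched short side. After a case analysis of Spoiler's second-round options, the argument bottoms out in the quantity $g'_\forall(r-1)$: Spoiler is eventually forced to move first on the $L$ side of an $(r-1)$-round subgame on boards of sizes $g'(r-1)$ and $g'(r-1)-1$, and Lemma~\ref{lemma:for_all} combined with the odd-case recurrence $g'(r-1) = 2g'(r-2)+1$ gives $g'_\forall(r-1) \leq 2g'(r-2) = g'(r-1)-1$, so Duplicator survives via the Reduction Lemma~\ref{lemma:reduction}. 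Without this mechanism (or an equivalent one), the even-$r$ upper bound --- the hardest part of the theorem and the reason the $g'_\forall$ machinery exists at all --- is not established.
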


\begin{proof}
We establish the equality asserted in the theorem via first establishing that the $\geq$ inequality holds, and then establishing that the $\leq$ inequality holds. Since for all $r \geq 1$ we have $g'(r) \geq g(r)$ (lemma \ref{lemma:g'_doms_g}), the $\geq$ portion of the theorem follows from the lemmas establishing that $g(1) = 1, g(2) \geq 2, g'(3) = 5$ and $g(4) \geq 10$, together with Theorem \ref{thm:main_lower}. 

Now let us establish that the $\leq$ inequality holds. It is trivial to verify that $g'(1) = 1$, while $g'(2) \leq 2$ is Lemma \ref{lemma:g_prime_of_2} and $g'(3) = 5$ is Lemma \ref{lemma:g_prime_of_3_hard}. Hence, we have already established the $\leq$ part of the theorem for $r=2$ and $r = 3$. For larger values of $r$ we proceed inductively, assuming the truth of the theorem for values up to $r-1$ and proving it for the value $r$.  The inductive step for the case of odd $r$ is easy, so let's dispose of that case first -- it is essentially the same argument we gave to establish $g'(3) \leq 5$ in Lemma \ref{lemma:g_prime_of_3}. Suppose we have linear orders of sizes $K, K+1$ with $K \geq 2g'(r-1) + 1$. Any Spoiler 1st round move leaves a short side of no more than $\lfloor\frac{K}{2}\rfloor$ unplayed elements on that side. Duplicator can then reply with a move leaving an identical short side. Without loss of generality assume this common short side is on the left. Then, to the right of the played moves, each board has at least $K - \lfloor\frac{K}{2}\rfloor - 1$ unplayed elements. But, using the inductive assumption, $K - \lfloor\frac{K}{2}\rfloor - 1 \geq K - \frac{K}{2} - 1 \geq (g'(r-1) + \frac{1}{2}) - 1$; hence $K - \lfloor\frac{K}{2}\rfloor - 1 \geq g'(r-1) - \frac{1}{2}$. Since both $K - \lfloor\frac{K}{2}\rfloor - 1$ and $g'(r-1)$ are integers, it follows that $K - \lfloor\frac{K}{2}\rfloor - 1 \geq g'(r-1)$. Thus, each board has at least $g'(r-1)$
unplayed elements on their long sides. The Reduction Lemma (Lemma \ref{lemma:reduction}) in conjunction with the induction hypothesis therefore guarantees that Duplicator has a winning strategy. The Laddering Up Lemma (Lemma \ref{lemma:stepping-up}) then guarantees that Duplicator has a winning strategy for any boards of sizes $K,K' \geq 2g'(r-1) + 1$.

Next consider the case of even $r$. Suppose first that we have linear orders of sizes $2g'(r-1)$ and $2g'(r-1) + 1$. Let us first dispose of any first move by Spoiler other than $B(g'(r-1) + 1)$, the middle element on the $B$ side. Any move other than this one on the $B$ side can be responded to with a move on $L$ that matches the short side while leaving at least $g'(r-1)$ unplayed elements on both long sides, and therefore, again by the Reduction Lemma and induction, yielding a winning strategy for Duplicator. On the other hand, any Spoiler 1st round move on $L$ is analogously met by matching the short side with a move on $B$, transposing to the just-prior analysis. Thus we may assume that Spoiler plays the element $B(g'(r-1) + 1)$. In response, Duplicator uses two boards and plays $L(g'(r-1))$ on one of the boards and $L(g'(r-1) + 1)$ on the other one.  See Figure \ref{fig:split_board}.
\begin{figure} [ht]
\centerline{\scalebox{0.38}{\includegraphics{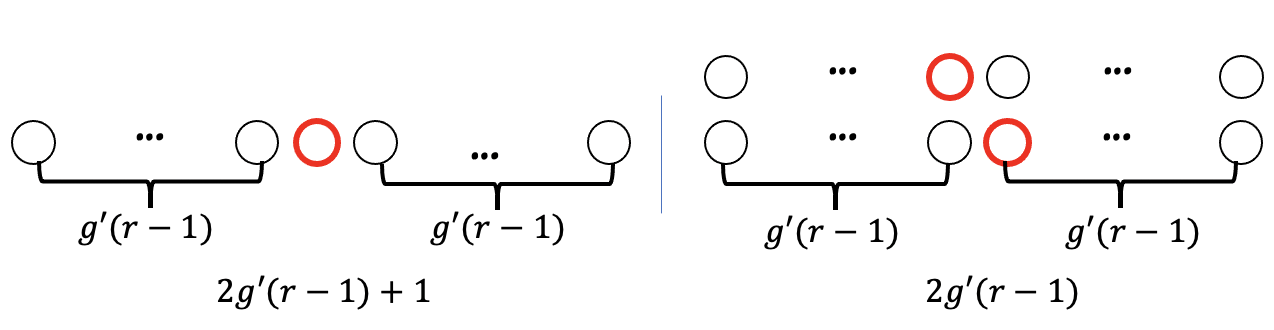}}}
\caption{A  $2g'(r-1) + 1$ \vs $2g'(r-1)$ game where Spoiler plays first on $B(g'(r-1) + 1)$ and Duplicator responds by playing $L(g'(r-1))$ on one board and $L(g'(r-1) + 1)$ on another.}
\label{fig:split_board}
\end{figure}

Consider the possible Spoiler 2nd round responses. Suppose Spoiler plays on $B$. A play on one of the left-hand unplayed $g'(r-1)$ elements, i.e., on some $B(i)$ such that $1 \leq i \leq g'(r-1)$ will be met with a play of $L(i)$ on the 2nd $L$ board. As we argued in the proof of the Reduction Lemma, we can now regard the remaining $g'(r-1) - 1$ unplayed elements to the left of the 1st round selections on the bottom $L$ board and the $B$ boards as additional atoms and just consider the game on the right side of these boards, which is now an $r-2$ round game on boards of sizes $g'(r-1)$ and $g'(r-1) - 1$. Inductively it is easy to see that $g'(r-1) - 1 > g'(r-2)$ and so such a sequence leads inductively to a Duplicator win. A 2nd round Spoiler play on $B$, on one of the right hand set of unplayed $g'(r-1)$ elements, is handled with a symmetrical argument. Further, playing an atom on $B$ is met by playing the same atom on both $L$ boards (in fact playing on just one board and ignoring the other is sufficient), and again leads to an inductive win for Duplicator by virtue of the fact that $g'(r-1) - 1 > g'(r-2)$. 

Thus we may assume that Spoiler makes his 2nd round moves on $L$. A play on the long side of either board is met with a symmetrical play by Duplicator on $B$, transposing to our prior analysis for when Spoiler played his 2nd move on $B$. Playing an atom on $L$ is met with the same atom being played on $B$, again with a transposition.  Thus we may suppose that Spoiler plays on the short side of both $L$ boards, and, in particular, plays on the left on the top $L$ board. Now the left hand (short side) of the top board is of size $g'(r-1) - 1$ and the left hand side of $B$ is of size $g'(r-1)$. Further, we are assuming that $r$ is even, and so $r-1$ is odd and hence by (\ref{eqn:main_recursion}) we have 
\begin{equation} \label{eqn:r-1_to_r-2}
    g'(r-1) = 2g'(r-2) + 1. 
\end{equation}
Since we've reduced the analysis to Spoiler next playing on the $L$ side of this $(r-1)$-round, $g'(r-1)$ \vs $g'(r-1)-1$  game, Lemma \ref{lemma:for_all} applies and says that $g'_\forall(r-1) \leq 2g'(r-2)$. Taken together with (\ref{eqn:r-1_to_r-2}), we have that $g'_\forall(r-1) \leq g'(r-1) - 1$.

Thus, we are left with boards of sizes $g'(r-1)$ and $g'(r-1)-1$, both of which are at least of the size of $g'_\forall(r-1)$. Hence, by the definition of $g'_\forall$ (Definition \ref{def:g_forall}), Duplicator has a winning strategy, over the remaining $r-1$ rounds, playing just on the left hand sides of these two boards and hence, by the Reduction Lemma, has a winning strategy playing on the entire board. Thus Duplicator has a winning strategy in the original $r$-round game for boards of sizes $2g'(r-1)$ and $2g'(r-1) + 1$.

For $K, K+1$ with $K > 2g'(r-1)$ the argument is easier since Duplicator can just mimic the short side play of any 1st round Spoiler play and immediately apply the Reduction Lemma. As usual, the argument is completed by applying the Laddering Up Lemma.
\end{proof}

We have therefore established the following upper bounds:

\begin{cor} \label{cor:main_upper}
We have
$g(2) \leq 2, g(3) \leq 4, g(4) \leq 10$, and for $r > 4$,
\begin{equation*} 
	g(r) \leq \begin{cases} 2g(r-1)~~~~~~\textrm{if $r$ is even,}\\ 2g(r-1) + 1\textrm{ if $r$ is odd.} \end{cases}
\end{equation*}
Moreover, Duplicator can win $r$-round \ms games on linear orders of sizes that are at least as large as these upper bound (right hand side) values in all the inequalities.
\end{cor}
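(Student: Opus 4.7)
The plan is to derive this corollary almost immediately from Theorem \ref{thm:main_upper} by transferring Duplicator's winning strategies from MS games with atoms to ordinary MS games via Corollary \ref{cor:atoms}. The cases $r = 2$ and $r = 3$ are simply restatements of Lemmas \ref{lemma:g_upper_of_2} and \ref{lemma:g_upper_of_3}. For $g(4) \leq 10$, Theorem \ref{thm:main_upper} (the even case) yields that Duplicator wins $4$-round MS games with atoms on any two linear orders each of size at least $2g'(3) = 10$; Corollary \ref{cor:atoms} then upgrades this to a winning strategy in the ordinary MS game, so $g(4) \leq 10$.

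For $r > 4$, I would proceed by strong induction on $r$, carrying along the auxiliary identification $g(r') = g'(r')$ for each $4 \leq r' < r$. At the base $r' = 4$, this equality follows from the preceding paragraph (giving $g'(4) \leq 10$), combined with the lower bound $g(4) \geq 10$ from Lemma \ref{lemma:g_underline_of_4} and the universal inequality $g(r') \leq g'(r')$ of Lemma \ref{lemma:g'_doms_g}. Given the inductive identification $g'(r-1) = g(r-1)$, Theorem \ref{thm:main_upper} provides that Duplicator wins $r$-round MS games with atoms on any two linear orders of size at least $2g'(r-1) = 2g(r-1)$ when $r$ is even, or $2g'(r-1) + 1 = 2g(r-1) + 1$ when $r$ is odd, and Corollary \ref{cor:atoms} converts this to a winning strategy in the corresponding MS game without atoms. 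This gives the stated upper bound on $g(r)$, and pairing with the matching lower bound from Theorem \ref{thm:main_lower} closes the induction by showing $g(r) = g'(r)$, which is needed to feed the next inductive step.

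There is no real obstacle — all the substantive work lies in Theorems \ref{thm:main_upper} and \ref{thm:main_lower}, together with the new-game machinery of Section \ref{sec:atoms}. The only point requiring care is that the recursion in Theorem \ref{thm:main_upper} is phrased in terms of $g'$ while the corollary is phrased in terms of $g$, so one must verify the identification $g'(r-1) = g(r-1)$ for $r-1 \geq 4$ before substituting; carrying this matching through the induction, using the lower bounds on $g$ from Theorem \ref{thm:main_lower} to pin $g'$ from below at each step, is the sole bookkeeping needed. The second clause of the corollary, that Duplicator wins $r$-round MS games on linear orders of sizes at least the stated right-hand-side values, is just the game-theoretic restatement of the upper bound on $g(r)$ and so requires no separate argument.
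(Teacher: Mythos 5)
Your proposal is correct and follows essentially the same route as the paper: base cases from Lemmas \ref{lemma:g_upper_of_2} and \ref{lemma:g_upper_of_3}, the chain $g(4)\leq g'(4)\leq 2g'(3)=10$, and the recursion for $r>4$ via Lemma \ref{lemma:g'_doms_g} and Theorem \ref{thm:main_upper}; your explicit tracking of the identification $g(r')=g'(r')$ for $r'\geq 4$ (pinned from below by Theorem \ref{thm:main_lower}) is exactly the bookkeeping needed to convert the $g'$-recursion into a $g$-recursion, and the paper's own terser proof leaves this step implicit. One small caution: your last sentence calls the ``Moreover'' clause a mere game-theoretic restatement of $g(r)\leq N$, but Duplicator winning on \emph{all} pairs of sizes at least $N$ is strictly stronger than $g(r)\leq N$ (the paper stresses this after Theorem \ref{thm:g_for_game_play}); this causes no gap here because your argument in fact derives the Duplicator-win statement directly from the ``Moreover'' clause of Theorem \ref{thm:main_upper} together with Corollary \ref{cor:atoms}, and then reads the bound on $g$ off from it.
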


\begin{proof}
The inequalities $g(2) \leq 2, g(3) \leq 4$ are Lemmas \ref{lemma:g_upper_of_2} and \ref{lemma:g_upper_of_3}. The chain of inequalities $g(4) \leq g'(4) \leq 2g'(3) \leq 10$, follows from  Lemma \ref{lemma:g'_doms_g}, Lemma \ref{lemma:g_prime_of_3} and Theorem \ref{thm:main_upper}. The inductively defined inequality for $r > 4$ follows by Lemma \ref{lemma:g'_doms_g} and Theorem \ref{thm:main_upper}. Finally, the upper bounds associated with all these lemmas and Theorem \ref{thm:main_upper} were established and stated by observing that Duplicator could win $r$ round games when the two linear orders considered were at least as large as the upper bounds. The same therefore follows for this corollary.
\end{proof}

\section{Tight Bounds on $g(r)$ and Final Theorems} \label{sec:final}
At last,
we pull together the identical upper and lower bounds we have obtained for $g(r)$, 
to yield Theorem \ref{thm:g}. Further, since the upper and lower bounds on $g(r)$ turned out to be tight, the very last paragraph at the end of Section \ref{sec:lower_bounds} immediately implies the following:

\begin{thm}
For each $r \geq 1$ there is a sentence with $r$ quantifiers distinguishing linear orders of size $g(r)$ or greater from linear orders of size less than $g(r)$. The prenex signatures of such sentences are as follows:
\begin{eqnarray*}
r = 1:& \exists \\
r = 2:& \exists\exists \\
r = 3:& \forall\exists\exists \\
r \geq 4, r~\textrm{even}:& \forall\exists\cdot\cdot\cdot\forall\exists \\
r \geq 4, r~\textrm{odd}:& \exists\forall\exists\cdot\cdot\cdot\forall\exists.
\end{eqnarray*}
\end{thm}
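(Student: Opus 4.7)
The plan is to combine the tight bounds on $g(r)$ from Theorem~\ref{thm:g} with the explicit sentences already exhibited throughout Section~\ref{sec:lower_bounds}, and verify by induction that the prenex signatures of these sentences match those claimed.

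First, I would dispatch the base cases $r \leq 4$ by direct inspection. For $r=1$, the sentence $\exists x(x=x)$ has prefix $\exists$. For $r=2,3,4$, the sentences $\Phi_2 = \exists x\exists y(x<y)$, $\Phi_3$ (equation~(\ref{g3_forall})), and $\Phi_4$ (equations~(\ref{cond1})--(\ref{cond6})) given in Lemmas~\ref{lemma:g_lower_of_2}, \ref{lemma:g_lower_of_3}, and~\ref{lemma:g_underline_of_4} have prefixes $\exists\exists$, $\forall\exists\exists$, and $\forall\exists\forall\exists$, respectively, and by Theorem~\ref{thm:g} they distinguish linear orders of size at least $g(r)$ from smaller ones. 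That settles the statement for $r \leq 4$.

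Next, I would prove the $r \geq 5$ cases by induction on $r$, formalizing the relativization construction sketched at the end of Section~\ref{sec:lower_bounds} and already exhibited for $r=5,6$ via $\Phi_5$ and $\Phi_6$. The inductive hypothesis is that $\Phi_{r-1}$ is a prenex sentence with $r-1$ quantifiers, whose prefix matches the claim for $r-1$, and that $\Phi_{r-1}$ distinguishes linear orders of size $\geq g(r-1)$ from those of smaller size. For the even case $r \geq 6$, I define
\[
\Phi_r \;=\; \forall x_0\, \psi(x_0),
\]
where $\psi(x_0)$ reuses the $r-1$ quantifiers of $\Phi_{r-1}$, and whose quantifier-free body, by case analysis on the position of each newly bound variable relative to $x_0$, asserts that $\Phi_{r-1}$ holds when interpreted either entirely on the elements $<x_0$ or entirely on the elements $>x_0$ — exactly the template displayed in $\Phi_6$. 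Since $g(r) = 2g(r-1)$ for even $r$, one of the two sides of any $x_0$ in a linear order of size $\geq g(r)$ must contain $\geq g(r-1)$ elements; and in a linear order of size $< g(r)$ there is some $x_0$ (e.g.\ a middle element) with fewer than $g(r-1)$ elements on each side, so the relativized $\Phi_{r-1}$ fails on both sides. The prefix becomes $\forall$ followed by the prefix of $\Phi_{r-1}$, which is $\exists\forall\exists\cdots\forall\exists$ by induction, yielding $\forall\exists\forall\exists\cdots\forall\exists$ as required. For odd $r \geq 5$, I symmetrically define $\Phi_r = \exists x_1\, \psi(x_1)$, following the template of $\Phi_5$: the existential picks a ``pivot'' and the reused $r-1$ quantifiers serve dual duty on both sides, with the quantifier-free body asserting (via case analysis on the sign of $x_2 - x_1$) that $\Phi_{r-1}$ holds on both sides simultaneously. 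Since $g(r) = 2g(r-1)+1$ for odd $r$, the middle element of a linear order of size $\geq g(r)$ serves as a witness; and in a linear order of size $< g(r)$, every pivot leaves one side with fewer than $g(r-1)$ elements, so $\Phi_{r-1}$ fails there.

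The main obstacle is not conceptual but notational: writing out the body of $\psi(x_0)$ (respectively $\psi(x_1)$) uniformly for all $r$, so that the \emph{same} $r-1$ quantifiers of $\Phi_{r-1}$ can choose witnesses on whichever side of the pivot the case analysis selects, without silently introducing any extra quantifier. The patterns in $\Phi_5$ and $\Phi_6$ already illustrate the bookkeeping: for each atomic subformula $u < v$ or $u = v$ in the body of $\Phi_{r-1}$, one inserts conjuncts of the form $x_0 < u$ (or $u < x_0$) for each existentially bound variable, gated by implications keyed on the outermost universally bound variable's position relative to the pivot, plus the degenerate clauses handling the case that the pivot coincides with a later-bound variable. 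Once this translation is made precise for one level, the induction carries through mechanically, and Theorem~\ref{thm:g} ensures that the sizes at which the resulting $\Phi_r$ changes truth value are exactly $g(r)$, completing the proof.
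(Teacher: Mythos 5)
Your approach is the same as the paper's: the paper derives this theorem directly from the relativization construction sketched at the end of Section~\ref{sec:lower_bounds} (the $\Phi_5$, $\Phi_6$ templates) together with the tightness of the bounds in Theorem~\ref{thm:g}, and your base cases and inductive set-up match that exactly. Your prefix bookkeeping is also right: the disjunction over the two sides of the pivot merges into a single prenex prefix for even $r$ because $\Phi_{r-1}$ then leads with $\exists$, and the conjunction over the two sides merges for odd $r$ because $\Phi_{r-1}$ then leads with $\forall$ and, for each value of that outer universal variable, only one side's copy is active.

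There is, however, one step that fails as literally stated. In the odd case you argue that on a linear order of size less than $g(r) = 2g(r-1)+1$, ``every pivot leaves one side with fewer than $g(r-1)$ elements, so $\Phi_{r-1}$ fails there.'' If the deficient side is \emph{empty} (pivot at an endpoint), this is false: $\Phi_{r-1}$ begins with $\forall$ when $r-1$ is even, so its relativization to an empty side is vacuously true, while the other side of an endpoint pivot in a structure of size $2g(r-1)$ has $2g(r-1)-1 \geq g(r-1)$ elements; without further constraints the body would then be satisfied and $\Phi_r$ would hold on a structure that is too small. This is precisely what the clause $x_1 = x_2 \rightarrow (x_3 < x_1 \wedge x_1 < x_5)$ in $\Phi_5$ is for: it forces both sides of the pivot to be nonempty, after which your induction applies. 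You list such clauses among the ``degenerate clauses'' of the bookkeeping, but your verification never invokes them, and they are logically essential rather than cosmetic. A symmetric check is needed in the even case: the $x_0 = x_1$ clause must be falsifiable, so that choosing the inner existential witness equal to the outer universal element does not trivially satisfy the body on small structures. With those two points supplied, the induction goes through and the argument is correct.
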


\noindent Here the $\cdot\cdot\cdot$ signifies a sequence of repeating quantifier alternations $\forall\exists$ of the length needed to give rise to $r$ quantifiers in total.

\begin{cor}
Let $\mathcal{A}$ be any collection of linear orders, each of size at least $g(r)$, and $\mathcal{B}$ any collection of linear orders, each of size less than $g(r)$. The Spoiler can win an $r$-round MS game  on $\mathcal{A}$ and $\mathcal{B}$.
\end{cor}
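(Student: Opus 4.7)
The plan is to derive this corollary as an immediate consequence of two results already in hand: (a) the theorem stated just before the corollary, which guarantees for each $r \geq 1$ the existence of a first-order sentence $\phi_r$ with exactly $r$ quantifiers (in one of the listed prenex signatures) such that every linear order of size at least $g(r)$ satisfies $\phi_r$ while every linear order of size less than $g(r)$ satisfies $\neg\phi_r$; and (b) the Equivalence Theorem for Multi-Structural Games (Theorem~\ref{thm:main1}), which equates Spoiler winning the $r$-round game on a pair $(\mathcal{A},\mathcal{B})$ with the existence of a sentence of at most $r$ quantifiers that holds throughout $\mathcal{A}$ and fails throughout $\mathcal{B}$.

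First, I would fix such a sentence $\phi_r$ obtained from the preceding theorem. Given the hypothesis that $|A| \geq g(r)$ for every $A \in \mathcal{A}$, the property of $\phi_r$ gives $A \models \phi_r$ for every $A \in \mathcal{A}$. Symmetrically, for every $B \in \mathcal{B}$ we have $|B| < g(r)$ and hence $B \models \neg\phi_r$. Thus $\phi_r$ witnesses a sentence with at most (in fact exactly) $r$ quantifiers that separates $\mathcal{A}$ from $\mathcal{B}$ in the sense required by Theorem~\ref{thm:main1}, and that theorem directly yields that Spoiler has a winning strategy in the $r$-round multi-structural game on $(\mathcal{A},\mathcal{B})$.

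There is no real obstacle at this stage of the paper. All the heavy lifting, namely pinning down the exact value of $g(r)$ (Theorem~\ref{thm:g}) and converting the Spoiler strategies into explicit distinguishing prenex sentences, was done in Sections~\ref{sec:lower_bounds}--\ref{sec:final}. The only remark worth making is that the power of Theorem~\ref{thm:main1} is precisely what lets a single sentence with an $r$-quantifier threshold on \emph{sizes} of linear orders handle arbitrary collections on each side: the syntactic separator $\phi_r$ is oblivious to how many linear orders sit above or below the threshold, so the passage from the pairwise statement of Theorem~\ref{thm:g_for_game_play} to the collection-level statement of this corollary is free.
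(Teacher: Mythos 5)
Your proposal is correct and is exactly the intended argument: the paper states this as an immediate corollary (with no written proof) of the preceding theorem, whose distinguishing sentence $\phi_r$ separates $\mathcal{A}$ from $\mathcal{B}$ elementwise by the size hypotheses, whereupon Theorem~\ref{thm:main1} hands Spoiler the win. Your closing remark that the single syntactic separator is what makes the passage from pairs to arbitrary collections free is precisely the right observation.
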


Although we have a completely specified $g(r)$, we have not completely answered the question of the minimum number of quantifiers needed to distinguish one linear order, $B$, from another, $L$, in one special case. Specifically, if $g(r-1) \leq |L| < |B| < g(r)$ for some value of $r$, we know there is no sentence with $r-1$ quantifiers that distinguishes $B$ from $L$, but nothing more. The following lemma closes this gap.

\begin{lem} \label{lemma:gap_closer} Given two linear orders $B, L$, with $|L| < |B| < g(r)$ for some $r > 0$ then there is a sentence with $r$ quantifiers that distinguishes $B$ from $L$.
\end{lem}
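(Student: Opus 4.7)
My plan is strong induction on $r$, freely applying the Equivalence Theorem (Theorem~\ref{thm:main1}) to pass between $r$-quantifier sentences and Spoiler wins in the $r$-round \ms game on $(\{B\},\{L\})$. The cases $r\leq 3$ are vacuous or handled by a short explicit sentence such as $\exists x\exists y\exists z(x<y<z)$ when $|B|=3,|L|=2$, so I assume $r \geq 4$ and that the lemma holds at every $r'<r$.

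For the inductive step I would split into three sub-cases. If $|B|<g(r-1)$, both orders have size below $g(r-1)$, so the inductive hypothesis at $r-1$ supplies an $(r-1)$-quantifier distinguishing sentence, which I pad with a dummy quantifier. If $|L|<g(r-1)\leq|B|$, the defining $(r-1)$-quantifier sentence for $g(r-1)$ already distinguishes; again pad. The interesting case is $g(r-1)\leq|L|<|B|<g(r)$, where I would have Spoiler open by playing $B(g(r-1)+1)$ in $B$, splitting $B$ into a left sub-structure of size exactly $g(r-1)$ and a right sub-structure of size $|B|-g(r-1)-1$ that is strictly less than $g(r-1)$ (using $|B|\leq g(r)-1\leq 2g(r-1)$). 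Against Duplicator's oblivious response of playing every $L(j)$, on copies with $j\leq g(r-1)$ the left sub-structures have sizes $g(r-1)$ vs $j-1<g(r-1)$ and are distinguishable in $r-1$ rounds by the sentence witnessing $g(r-1)$; on copies with $j>g(r-1)$ the right sub-structures have sizes $|B|-g(r-1)-1$ vs $|L|-j$, both below $g(r-1)$ with the $L$-side strictly smaller (since $|L|<|B|$), distinguishable in $r-1$ rounds by the inductive hypothesis.

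The hard part will be combining these copy-specific sub-distinguishers into a single $(r-1)$-quantifier formula $\Psi(x_1)$ so that $\phi=\exists x_1\,\Psi(x_1)$ has exactly $r$ quantifiers, rather than the roughly $2r-1$ that a naive disjunction $\Psi_L\vee\Psi_R$ would yield in prenex form. The right approach, I believe, is to mimic the quantifier-sharing trick used in the $\Phi_r$ construction from the proof of Theorem~\ref{thm:main_lower}: a single outer $\forall x_2$ is case-split on $x_2<x_1$ versus $x_2>x_1$ so that the same quantifier block services both the left-branch sub-formula (invoking the definition of $g(r-1)$) and the right-branch sub-formula (invoking the inductively supplied distinguisher), keeping the total count at exactly $r$. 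The game-side counterpart is that, since Spoiler may play different moves on different Duplicator copies of $L$, the copy-specific sub-strategies can be interleaved into a single \ms game strategy; some care is required once Duplicator starts replicating $B$ in round~2 to verify that no pair of final labeled structures forms a partial isomorphism.
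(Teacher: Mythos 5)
Your reduction to the case $g(r-1)\leq|L|<|B|<g(r)$ and the arithmetic of the left/right split are essentially sound (modulo one slip: the inequality $g(r)-1\leq 2g(r-1)$ fails at $r=4$, since $g(4)-1=9>8=2g(3)$, so for $|B|=9$ the right part of $B$ has size exactly $g(3)$ rather than strictly less). The genuine gap is the step you yourself flag as ``the hard part,'' and it does not go through as described. The quantifier-sharing trick merges the two branch sentences into a single $(r-1)$-quantifier block only if both branches have the \emph{same} prenex prefix $Q_2\cdots Q_r$; otherwise the relativized conjunction costs up to $2(r-1)$ quantifiers. Your construction does not maintain that invariant: you always open in $B$, so your inductively produced gap-closers begin with $\exists$, whereas the $g(r-1)$-witnessing sentence for even $r-1\geq 4$ begins with $\forall$ (prefix $\forall\exists\cdots\forall\exists$). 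Hence for odd $r\geq 5$ your left branch is $\forall$-first while your right branch is $\exists$-first, and the shared $\forall x_2$ you propose cannot serve both. The same mismatch already appears at $r=4$, where the left branch wants $\Phi_3$ (prefix $\forall\exists\exists$) and the right branch wants your $r=3$ gap-closer $\exists x\exists y\exists z(x<y<z)$ (prefix $\exists\exists\exists$).

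The paper's proof is engineered precisely to avoid this. It treats $r=4$ as a special base case by exhibiting an explicit family $\Phi_{4,k}$ for $5\leq k\leq 9$, obtained by deleting conjuncts from $\Phi_4$, all with the identical prefix $\forall\exists\forall\exists$; for $r>4$ it then reuses the Spoiler strategy from the proof of Theorem~\ref{thm:main_lower} (first move on $L$ for even $r$, on $B$ for odd $r$, played as close to center as possible), so that every threshold sentence and every gap-closing sentence at level $r$ carries the same prefix $\forall\exists\cdots\forall\exists$ or $\exists\forall\exists\cdots\forall\exists$, with the last two rounds always played on $L$ then $B$ --- the commitment needed both for quantifier sharing and for handling the end-move and play-on-top cases in the game argument. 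To repair your proof you would need to add and propagate this uniform-prefix invariant (in particular, open in $L$ when $r$ is even), and to carry out the round-by-round cross-talk analysis between the two interleaved sub-games that you defer in your final sentence; given the cautionary example in Section~\ref{sec:interlude}, neither step is routine.
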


\begin{proof} 
If there is a sentence with $r-1$ quantifiers that distinguishes $B$ from $L$, there is obviously a sentence with $r$ quantifiers that does the same. Hence, we only need to verify the lemma for the case $g(r-1) \leq |L| < |B| < g(r)$. This is the range of values of $|L|$ and $|B|$ for which we have not resolved how many quantifiers suffice to separate $B$ from $L$.

We will demonstrate the conclusion of the lemma using a combination of explicit sentences and Spoiler winning strategies. It is not possible to have $|L| < |B| < g(1)$, so that case is handled. For $g(2)$ the only possibility is $|L| = 0, |B| = 1$, which is covered by $g(1) = 1$. Next, for $g(3)$, the cases covered by the lemma are those given in the table below.

\medskip

\begin{center}
\begin{tabular}{ |c|c|c|} 
\hline
$\bm{|L|}$ & $\bm{|B|}$ & \textbf{Why Spoiler Wins in $\bm{3}$ Rounds} \\
\hline
0 & 1 & $g(1) = 1$ \\
0 & 2 & $g(1) = 1$ \\
0 & 3 & $g(1) = 1$ \\
1 & 2 & $g(2) = 2$ \\
1 & 3 & $g(2) = 2$ \\
2 & 3 & $3$ rounds; Spoiler wins by playing 3 distinct elements in $B$ \\
\hline
\end{tabular}
\end{center}

\medskip

Next consider $g(4)$, which we shall consider as a special case.  
Everything else will follow via an induction argument based on $g(r)$ for $r > 4$. Since $g(4) = 10,$ we have $ |B| \leq 9$. The following expression with $4$ quantifiers distinguishes linear orders of size $9$ and above from those of size $8$ and below:

\begin{small}
\begin{align*}
	\Phi_{4, 9} = \forall x \exists y \forall z \exists w(& \notag \\
		&x < z < y \rightarrow (w \neq z \land x < w < y) ~~~\land  \\
		&x < y < z \rightarrow (w \neq z \land x < y < w) ~~~\land \\
		&y < z < x \rightarrow (w \neq z \land y < w < x) ~~~\land \\
		&z = x \rightarrow (x < w < y \vee y < w < x) ~~~\land \\
		&z = y \rightarrow (x < y < w \vee w < y < x)). 
\end{align*}
\end{small}

The sentence $\Phi_{4, 9}$ is constructed from the sentence $\Phi_{4}$ used in the proof of Lemma \ref{lemma:g_underline_of_4} by removing Condition~(\ref{cond4}) of $\Phi_4$. In words, $\Phi_{4, 9}$ states that ``for  every $x$ either there are 5 elements after it or 4 elements before it''. In more detail, ``for  every $x$ there is a $y$ such that if $y > x$ then there are two or more elements on each side of $y$, both of which are greater than $x$ and if $y < x$ then there are two or more elements between $y$ and $x$ and one element smaller than $y$''.

It is similarly easy to construct $\Phi_{4, 8}$, $\Phi_{4, 7}$, $\Phi_{4, 6}$ and $\Phi_{4, 5}$ just by removing more conditions from $\Phi_{4}$. We present these sentences below for completeness.
\begin{small}
\begin{align*}
	\Phi_{4, 8} = \forall x \exists y \forall z \exists w(& \notag \\
		&x < y < z \rightarrow (w \neq z \land x < y < w) ~~~\land \\
		&y < z < x \rightarrow (w \neq z \land y < w < x) ~~~\land \\
		&z = x \rightarrow (x < w < y \vee y < w < x) ~~~\land \\
		&z = y \rightarrow (x < y < w \vee w < y < x)). 
\end{align*}
\end{small}

\begin{small}
\begin{align*}
	\Phi_{4, 7} = \forall x \exists y \forall z \exists w(& \notag \\
		&x < y < z \rightarrow (w \neq z \land x < y < w) ~~~\land \\
		&z = x \rightarrow (x < w < y \vee y < w < x) ~~~\land \\
		&z = y \rightarrow (x < y < w \vee w < y < x)). 
\end{align*}
\end{small}

\begin{small}
\begin{align*}
	\Phi_{4, 6} = \forall x \exists y \forall z \exists w(& \notag \\
		&z = x \rightarrow (x < w < y \vee y < w < x) ~~~\land \\
		&z = y \rightarrow (x < y < w \vee w < y < x)). 
\end{align*}
\end{small}

\begin{small}
\begin{align*}
	\Phi_{4, 5} = \forall x \exists y \forall z \exists w(& \notag \\
		&z = x \rightarrow (x < w < y \vee y < w < x) ~~~\land \\
		&z = y \rightarrow (x < y < w \vee y < x)). 
\end{align*}
\end{small}

We have ``filled in the gaps'' for our base case of $r=4$ and we will now proceed by induction as we did in Theorem \ref{thm:main_lower}. A critical point is that all of these sentences for $\Phi_{4,k}$ for $4  < k \leq 9$ end with a universal and then an existential quantifier, meaning that Spoiler plays the next-to-last round on $L$ and the last round on $B$.

Suppose then that $g(r-1) \leq |L| < |B| < g(r)$ for $r > 4$. Spoiler adopts the same strategy as that described in the proof of Theorem \ref{thm:main_lower} for the $g(r)$ vs $g(r)-1$ game. One very minor nuance is that in the even $r$ case there is not always a central element for Spoiler to select for his 1st round move in $L$ since $|L|$ could be even. Similarly in the odd $r$ case there is not always a central element for Spoiler to select for his 1st round move in $B$ since $|B|$ could be even. It suffices, however, for Spoiler to play as close to the center as possible. As an example, the sentence $\Phi_{4, 9}$ corresponds to playing the left point among the two middle points in $L$ in the 9 versus 8 game. All other details of the argument are unchanged, including the fact that the last two rounds are played on $L$ and then $B$.
\end{proof}

Finally, we are able to prove the precise game theoretic analog of Theorem \ref{thm:g_for_game_play}.
\begin{thm} \label{thm:fund_thm_for_los}
Two linear orders, $B$ and $L$, with $|L| < |B|$, can be distinguished by a sentence with $r$ quantifiers iff $|L| <  g(r)$.
\end{thm}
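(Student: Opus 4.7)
The plan is to treat each direction separately, with the forward ``$\Leftarrow$'' direction (producing a distinguishing sentence) assembled from the already-established constructions of Definition~\ref{def:g} and Lemma~\ref{lemma:gap_closer}, and the reverse ``$\Rightarrow$'' direction obtained via the contrapositive using Corollary~\ref{cor:main_upper} together with the Equivalence Theorem~\ref{thm:main1}.

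First I would handle ``$\Leftarrow$''. Assume $|L| < g(r)$ and split into two cases. If $|B| \geq g(r)$, then by Definition~\ref{def:g} (and Theorem~\ref{thm:g}, which pins down the values of $g$) there is a sentence with $r$ quantifiers that is true on every linear order of size at least $g(r)$ and false on every linear order of size less than $g(r)$; this sentence then distinguishes $B$ from $L$. If instead $g(r-1) \leq |L| < |B| < g(r)$ or more generally $|L| < |B| < g(r)$, the conclusion follows directly from Lemma~\ref{lemma:gap_closer}, which was designed precisely to close this ``gap'' regime. These two cases together cover all $|L| < g(r) \leq |B|$ and $|L| < |B| < g(r)$, which is the full content of $|L| < g(r)$ combined with $|L| < |B|$.

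For ``$\Rightarrow$'' I would argue by contrapositive: suppose $|L| \geq g(r)$, and show no $r$-quantifier sentence distinguishes $B$ from $L$. Since $|B| > |L| \geq g(r)$, both orders have size at least $g(r)$. Corollary~\ref{cor:main_upper} then asserts that Duplicator wins the $r$-round MS game on $(\{B\}, \{L\})$ (this is the ``moreover'' clause stating that Duplicator can win on any pair of linear orders of size at least the upper-bound values, which equal $g(r)$). Applying the Equivalence Theorem~\ref{thm:main1} in its contrapositive form, the existence of a Duplicator winning strategy rules out any sentence with at most $r$ quantifiers that is true on $B$ and false on $L$ (or vice versa).

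The main obstacle is essentially bookkeeping rather than a new idea: all the deep work has been done in Theorem~\ref{thm:main_lower}, Theorem~\ref{thm:main_upper}, Corollary~\ref{cor:main_upper}, and Lemma~\ref{lemma:gap_closer}. The only subtle point to verify carefully is that Corollary~\ref{cor:main_upper} does give a Duplicator win for the pair $(\{B\},\{L\})$ and not merely for the canonical pair of sizes $g(r)$ and $g(r){+}1$; this is guaranteed by the Laddering Up Lemma~\ref{lemma:stepping-up}, which is already invoked inside the proof of the corollary, so the statement we need is available off the shelf.
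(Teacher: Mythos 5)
Your proposal is correct and follows essentially the same route as the paper's own proof: the forward direction splits into the cases $|L|<|B|<g(r)$ (Lemma~\ref{lemma:gap_closer}) and $|L|<g(r)\leq|B|$ (Definition~\ref{def:g}), and the converse is the contrapositive via Corollary~\ref{cor:main_upper} combined with the Equivalence Theorem. Your extra remarks about the Laddering Up Lemma and the explicit appeal to Theorem~\ref{thm:main1} only make explicit what the paper leaves implicit.
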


\begin{proof}
If direction: If $|L| < |B| < g(r)$ then $B$ and $L$ can be distinguished by Lemma \ref{lemma:gap_closer}. If $|L| < g(r) \leq |B|$ then $B$ and $L$ can can be distinguished by the definition of $g$ (Definition \ref{def:g}). 

Only if direction: If $g(r) \leq |L| < |B|$ then $B$ and $L$ \textit{cannot} be distinguished by Corollary \ref{cor:main_upper}.
\end{proof}

\section{Conclusions} \label{sec:conc}
We have studied multi-structural (MS) games, which generalize \edashf games by being  played over  sets $\mathcal A$, $\mathcal B$ of structures rather than over a pair $A$, $B$  of individual structures.
Whereas \edashf games can capture exactly  the quantifier rank needed to describe a property, we showed that MS games can capture exactly the number of quantifiers needed to describe a property.
As a first application, we used them to determine the number of quantifiers needed to distinguish between linear orders of different sizes.  
This determination became quite nuanced as we uncovered several pitfalls of interest in their own right.

The quantifier count is a natural complexity measure but has received scant attention compared to the quantifier rank, number of distinct variable names, and measures of size and depth of the sentence body. 

We expect complexity differences to be magnified in studying other structures beyond linear orders, such as higher-dimensional lattices, rooted trees, and other classes of graphs.  As with the related ideas of Lotfallah \cite{Lotfallah04}, \ms games extend readily to second-order logic where they may bear on higher-order problems in descriptive complexity. For example, \ms games when adapted to second order logic can easily simulate Ajtai-Fagin games.

\section*{Acknowledgments}
We are indebted to Phokion Kolaitis for making Observation \ref{obs:phokion}. We thank Neil Immerman and Leonid Libkin for helpful discussions.

\bibliographystyle{alphaurl}
\bibliography{main}

\begin{thebibliography}{FLRV21b}

\bibitem[AF90]{AjtaiFagin90}
Mikl{\'{o}}s Ajtai and Ronald Fagin.
\newblock Reachability is harder for directed than for undirected finite
  graphs.
\newblock {\em J. Symbolic Logic}, 55(1):113--150, 03 1990.
\newblock \href {https://doi.org/10.2307/2274958} {\path{doi:10.2307/2274958}}.

\bibitem[AI03]{AdlImm03}
Micah Adler and Neil Immerman.
\newblock An $n!$ lower bound on formula size.
\newblock {\em ACM Trans. Comput. Logic}, 4(3):296–314, July 2003.
\newblock \href {https://doi.org/10.1145/772062.772064}
  {\path{doi:10.1145/772062.772064}}.

\bibitem[Daw21]{Dawar21pc}
Anuj Dawar.
\newblock Private communication, January 2021.

\bibitem[DS21]{DaSa21}
Anuj Dawar and Abhisekh Sankaran.
\newblock {Extension Preservation in the Finite and Prefix Classes of First
  Order Logic}.
\newblock In Christel Baier and Jean Goubault-Larrecq, editors, {\em 29th EACSL
  Annual Conference on Computer Science Logic (CSL 2021)}, volume 183 of {\em
  Leibniz International Proceedings in Informatics (LIPIcs)}, pages
  18:1--18:13, Dagstuhl, Germany, 2021. Schloss Dagstuhl--Leibniz-Zentrum
  f{\"u}r Informatik.
\newblock \href {https://doi.org/10.4230/LIPIcs.CSL.2021.18}
  {\path{doi:10.4230/LIPIcs.CSL.2021.18}}.

\bibitem[Ehr61]{Ehr61}
Andrzej Ehrenfeucht.
\newblock An application of games to the completeness problem for formalized
  theories.
\newblock {\em Fundamenta Mathematicae}, 49:129--141, 1961.
\newblock \href {https://doi.org/10.4064/fm-49-2-129-141}
  {\path{doi:10.4064/fm-49-2-129-141}}.

\bibitem[Fag75]{Fagin75}
Ronald Fagin.
\newblock Monadic generalized spectra.
\newblock {\em Mathematical Logic Quarterly}, 21(1):89--96, 1975.
\newblock \href {https://doi.org/10.1002/malq.19750210112}
  {\path{doi:10.1002/malq.19750210112}}.

\bibitem[Fag93]{Fagin93}
Ronald Fagin.
\newblock Finite-model theory - a personal perspective.
\newblock {\em Theoretical Computer Science}, 116(1):3 -- 31, 1993.
\newblock \href {https://doi.org/10.1016/0304-3975(93)90218-I}
  {\path{doi:10.1016/0304-3975(93)90218-I}}.

\bibitem[FLRV21a]{Fagin21}
Ronald Fagin, Jonathan Lenchner, Kenneth~W. Regan, and Nikhil Vyas.
\newblock Multi-structural games and number of quantifiers.
\newblock {\em Proceedings of Logic in Computer Science (LICS)}, pages 1--13,
  2021.
\newblock \href {https://doi.org/10.1109/LICS52264.2021.9470756}
  {\path{doi:10.1109/LICS52264.2021.9470756}}.

\bibitem[FLRV21b]{Fagin21a}
Ronald Fagin, Jonathan Lenchner, Kenneth~W. Regan, and Nikhil Vyas.
\newblock Multi-structural games and number of quantifiers.
\newblock {\em CoRR}, 2021.
\newblock https://arxiv.org/abs/2104.14709v1.
\newblock \href {https://doi.org/10.48550/arXiv.2104.14709}
  {\path{doi:10.48550/arXiv.2104.14709}}.

\bibitem[Fra54]{Fra54}
Roland Fra{\"i}ss{\'e}.
\newblock Sur quelques classifications des syst{\`e}mes de relations.
\newblock {\em Universit{\'e} d’Alger, Publications Scientifiques, S{\'e}rie
  A}, 1:35--182, 1954.
\newblock \href {https://doi.org/10.2307/2963939} {\path{doi:10.2307/2963939}}.

\bibitem[FSV95]{FaginSV95}
Ronald Fagin, Larry~J. Stockmeyer, and Moshe~Y. Vardi.
\newblock On monadic {NP} vs. monadic co-{NP}.
\newblock {\em Inf. Comput.}, 120(1):78--92, 1995.
\newblock \href {https://doi.org/10.1006/inco.1995.1100}
  {\path{doi:10.1006/inco.1995.1100}}.

\bibitem[GS05]{GroheS05}
Martin Grohe and Nicole Schweikardt.
\newblock The succinctness of first-order logic on linear orders.
\newblock {\em Log. Methods Comput. Sci.}, 1(6):1--25, 2005.
\newblock \href {https://doi.org/10.2168/LMCS-1(1:6)2005}
  {\path{doi:10.2168/LMCS-1(1:6)2005}}.

\bibitem[HV15]{HellaV15}
Lauri Hella and Jouko V{\"{a}}{\"{a}}n{\"{a}}nen.
\newblock The size of a formula as a measure of complexity.
\newblock In {\AA}sa Hirvonen, Juha Kontinen, Roman Kossak, and Andr{\'{e}}s
  Villaveces, editors, {\em Logic Without Borders - Essays on Set Theory, Model
  Theory, Philosophical Logic and Philosophy of Mathematics}, volume~5 of {\em
  Ontos Mathematical Logic}, pages 193--214. De Gruyter, 2015.
\newblock \href {https://doi.org/10.1515/9781614516873.193}
  {\path{doi:10.1515/9781614516873.193}}.

\bibitem[HV19]{HelVil19}
Lauri~T Hella and Miikka~S Vilander.
\newblock {Formula size games for modal logic and $\mu$-calculus}.
\newblock {\em Journal of Logic and Computation}, 29(8):1311--1344, 12 2019.
\newblock \href
  {http://arxiv.org/abs/https://academic.oup.com/logcom/article-pdf/29/8/1311/32008427/exz025.pdf}
  {\path{arXiv:https://academic.oup.com/logcom/article-pdf/29/8/1311/32008427/exz025.pdf}},
  \href {https://doi.org/10.1093/logcom/exz025}
  {\path{doi:10.1093/logcom/exz025}}.

\bibitem[Imm79]{Immerman79}
Neil Immerman.
\newblock Length of predicate calculus formulas as a new complexity measure.
\newblock In {\em 20th Annual Symposium on Foundations of Computer Science, San
  Juan, Puerto Rico, 29-31 October 1979}, pages 337--347. {IEEE} Computer
  Society, 1979.
\newblock \href {https://doi.org/10.1109/SFCS.1979.21}
  {\path{doi:10.1109/SFCS.1979.21}}.

\bibitem[Imm81]{Immerman81}
Neil Immerman.
\newblock Number of quantifiers is better than number of tape cells.
\newblock {\em J. Comput. Syst. Sci.}, 22(3):384--406, 1981.
\newblock \href {https://doi.org/10.1016/0022-0000(81)90039-8}
  {\path{doi:10.1016/0022-0000(81)90039-8}}.

\bibitem[Imm99]{Imm99}
Neil Immerman.
\newblock {\em Descriptive Complexity}.
\newblock Graduate Texts in Computer Science. Springer, 1999.
\newblock \href {https://doi.org/10.1007/978-1-4612-0539-5}
  {\path{doi:10.1007/978-1-4612-0539-5}}.

\bibitem[Imm21]{Immerman21}
Neil Immerman.
\newblock private communications, December 2020 and April 2021.

\bibitem[Lib12]{Lib12}
Leonid Libkin.
\newblock {\em Elements of Finite Model Theory}.
\newblock Texts in Theoretical Computer Science. Springer, 2012.
\newblock URL: \url{https://www.springer.com/gp/book/9783540212027}.

\bibitem[Lot04]{Lotfallah04}
Wafik~Boulos Lotfallah.
\newblock An {E}hrenfeucht-{F}ra{\"i}ss{\'e} class game.
\newblock {\em Mathematical Logic Quarterly}, 50(2):179--188, 2004.
\newblock \href {https://doi.org/10.1002/malq.200310088}
  {\path{doi:10.1002/malq.200310088}}.

\bibitem[LY05]{LoYo05}
Wafik~Boulos Lotfallah and Maha~Ragab Youssef.
\newblock On the freedom of {S}poiler in {E}hrenfeucht-{F}ra{\"i}ss{\'e} games.
\newblock Available at http://math.guc.edu.eg/Wafik/On the freedom of Spoiler
  in the EF game.doc, 2005.

\bibitem[Ros82]{Ros82}
Joseph~G. Rosenstein.
\newblock {\em Linear Orderings}, volume~98 of {\em Pure and Applied
  Mathematics}.
\newblock Academic Press, New York and London, 1982.

\bibitem[Ros05]{Rosen2005}
Eric Rosen.
\newblock On the first-order prefix hierarchy.
\newblock {\em Notre Dame J. Formal Logic}, 46(2):147--164, 04 2005.
\newblock \href {https://doi.org/10.1305/ndjfl/1117755146}
  {\path{doi:10.1305/ndjfl/1117755146}}.

\end{thebibliography}

\newpage
\appendix

\section{A Proof of Theorem \ref{thm:f}} \label{app:f}

\begin{proof} 
Note for future use that $f(1) = 1$ and  $f(r) = 2f(r-1) +1$ for $r>1$. 
The statement is clearly true for $r=1$.  Assume inductively that it is true for $r-1$. Let us refer to the bigger linear order as $B$ (for ``big'') and the smaller linear order by L (for ``little''). 

We first show that if the size of $L$  is less than $f(r)$, then Spoiler wins the $r$-round game. 
There are two possibilities, depending on whether the size of $B$ is odd or even.  Assume first that it is odd, say of size $2k+1$.  So the size of $L$ is at most $2k$, and since the size of $L$ is less than $f(r)$, it follows that $2k < f(r) = 2f(r-1) +1$, so $k \leq f(r-1)$.  In  the first round, Spoiler selects the middle point of $B$, call it $s_1$. There are then $k$ points to the left and $k$ to the right of $s_1$  in $B$. After Duplicator selects a point $d_{1}$  in $L$,  there will be either fewer than $k$ points to the left of $d_{1}$ in $L$ or fewer than $k$ points to the right of $d_{1}$ in $L$.   

In the former case, Spoiler now makes all of his moves to the left in either structure (forcing Duplicator to do the same in the other structure), and in the latter case (when there  less than $k$ points to the right of $d_{1}$ in $L$.), Spoiler makes all of his moves to the right  in either structure (forcing Duplicator to do the same in the other structure).  Spoiler now wins by the inductive assumption, since $k \leq f(r-1)$, and Spoiler has turned this into a game with less than $k$ elements in the smaller linear order. 

Assume now that the size of $B$ is even, say of size $2k$.  So the size of $L$ is at most $2k-1$, and since the size of $L$ is less than $f(r)$, we have
\[
2k-1 < f(r)  = 2f(r-1) + 1,
\]
so $2k - 2 < 2f(r-1)$, 
and so $k-1 < f(r-1)$.  

In  the first round, Spoiler selects the 
$k$th point of $B$, call it $s_1$. There are 
then $k-1$ points to the left and $k$ to the right of $s_1$ in $B$. Duplicator now selects a point $d_1$ in $L$.  If $d_1$ is within the first $k-1$ points in $L$,  then 
Spoiler makes all remaining moves to the left in both structures, since there are $k-1$ points to the left of $s_1$, fewer than $k-1$ points to the left of $d_1$, and $k-1 < f(r-1)$, so Spoiler wins by inductive hypothesis.  If $d_1$ is not within the first $k-1$ points in $L$,  then there are at most $k-1$ points to the right of $d_1$. But there are $k$  points to the right of $s_1$. Spoiler makes all remaining moves to the right  in either structures, and wins by inductive hypothesis since $k-1< f(r-1)$.

We now show that if the size of $L$  is at least  $f(r)$, then Duplicator wins the $r$-round game. 
If Spoiler selects $s_1$ within the first $f(r-1)+1$ points in $B$ or $L$,  then Duplicator selects $d_1$ in the other structure with $d_1$ = $s_1$.  There are now at least $f(r-1)$  points to the right of the first move in $L$,  and more points than that to the right of the first move in $B$. Since Duplicator can simply mimic Spoiler’s choices when Spoiler moves to the left of the first point chosen, Duplicator wins by the inductive hypothesis in considering moves on the right. 

Now assume that Spoiler selects $s_1$ beyond the first $f(r-1)+1$ points in $B$ or $L$.  There are two cases, depending on whether Spoiler moves in $B$ or $L$.  Assume first that Spoiler moves in $B$, and selects point $s_1$, which is the $k$th point from the right of $B$.  This splits into two subcases, depending on whether or not $k \leq f(r-1) + 1$. If $k \leq f(r-1) + 1$, then Duplicator selects $d_1$ in $L$ that is $k$ points from the right of $L$. There are then at least $f(r-1)$ points in $L$ to the left of $d_1$.  Duplicator simply mimics the moves of Spoiler on points to the right of $s_1$ or $d_1$, and uses her winning strategy for points to the left of $s_1$ and $d_1$, where there is a winning strategy by inductive assumption, since there at least $f(r-1)$ to the left of $d_1$ and more than that to the left of $s_1$.  

We now consider subcase 2, where $k > f(r-1) + 1$.   Let $I$ (respectively, $I'$) be the closed  interval in $B$ (respectively, $L$) 
where the left endpoint is at position $f(r-1)+1$ from the left of $B$ (respectively, $L$), and whose right endpoint is at position
$f(r-1)+1$ from the right of $B$ (respectively, $L$).  By assumption, the point $s_1$ in $B$ is inside of $I$. The interval $I'$ contains the point that is $f(r-1)+1$ from the left of $L$,  and so is nonempty. Assume that the point $s_1$ in $B$ is $m$ from the left side of $I$ and $n$ from the right side  of $I$.  Since $L$ is smaller in size than $B$, it follows that $I'$ is smaller in size than $I$. So there is a point $d_1$ in $I'$ that is at most $m$ from the left side  of $I'$, and at most n from the right side  of $I'$.  Since there are at least $f(r-1)$ points to the left of $d_1$ in $L$ and at least $f(r-1)$ points to the right of $d_1$ in $L$,  and since the number of points to the left (respectively, to the right) of $d_1$ in $L$ is at most the number of points to the left (respectively, to the right) of $s_1$ in $B$, Duplicator can win by making use of her winning strategy on moves to the left of $s_1$ or $d_1$, and make use of her winning strategy on moves to the right of $s_1$ and $d_1$.

Assume now that Spoiler selects point $s_1$  in $L$.  By assumption, $s_1$  is beyond the first $f(r-1)+1$ points in $L$.    Assume $s_1$ is the $k$th point from the right of $L$.  Then Duplicator selects $d_1$ as the $k$th point from the right of $B$. On moves to the right of $d_1$ or $s_1$, Duplicator simply mimics Spoiler’s moves, an on moves to the left of $d_1$ or $s_1$, Duplicator has a winning strategy by inductive assumption, since there are more than $f(r-1)$ points to the left of $s_1$ in $L$,  and more than that to the left of $d_1$ in $B$.  So again, Duplicator wins.  This concludes the proof of the theorem.
\end{proof}

\section{Game based proof of Lemma~\ref{lemma:g_underline_of_4}}\label{sec:app-g_underline_of_4}
\begin{proof} [Proof of Lemma~\ref{lemma:g_underline_of_4} via Games]
We first consider the singleton $10$ \vs $9$ base case.  We later consider the case where $|L| < 9$, followed by $|B| > 10$, and lastly the case of many linear orders $\mathcal B = \{B_i\}, \mathcal L = \{L_j\}$, of sizes $|B_i| \geq 10, |L_j| \leq 9$.  In the $10$ \vs $9$ game, Spoiler will play first by playing the central element, $L(5)$, on the $L$ side. Without loss of generality, Duplicator then plays all possible moves on the $B$ side. See Figure \ref{fig:10_vs_9_counter_move1} for an illustration. 

\begin{figure} [ht]
\centerline{\scalebox{0.35}{\includegraphics{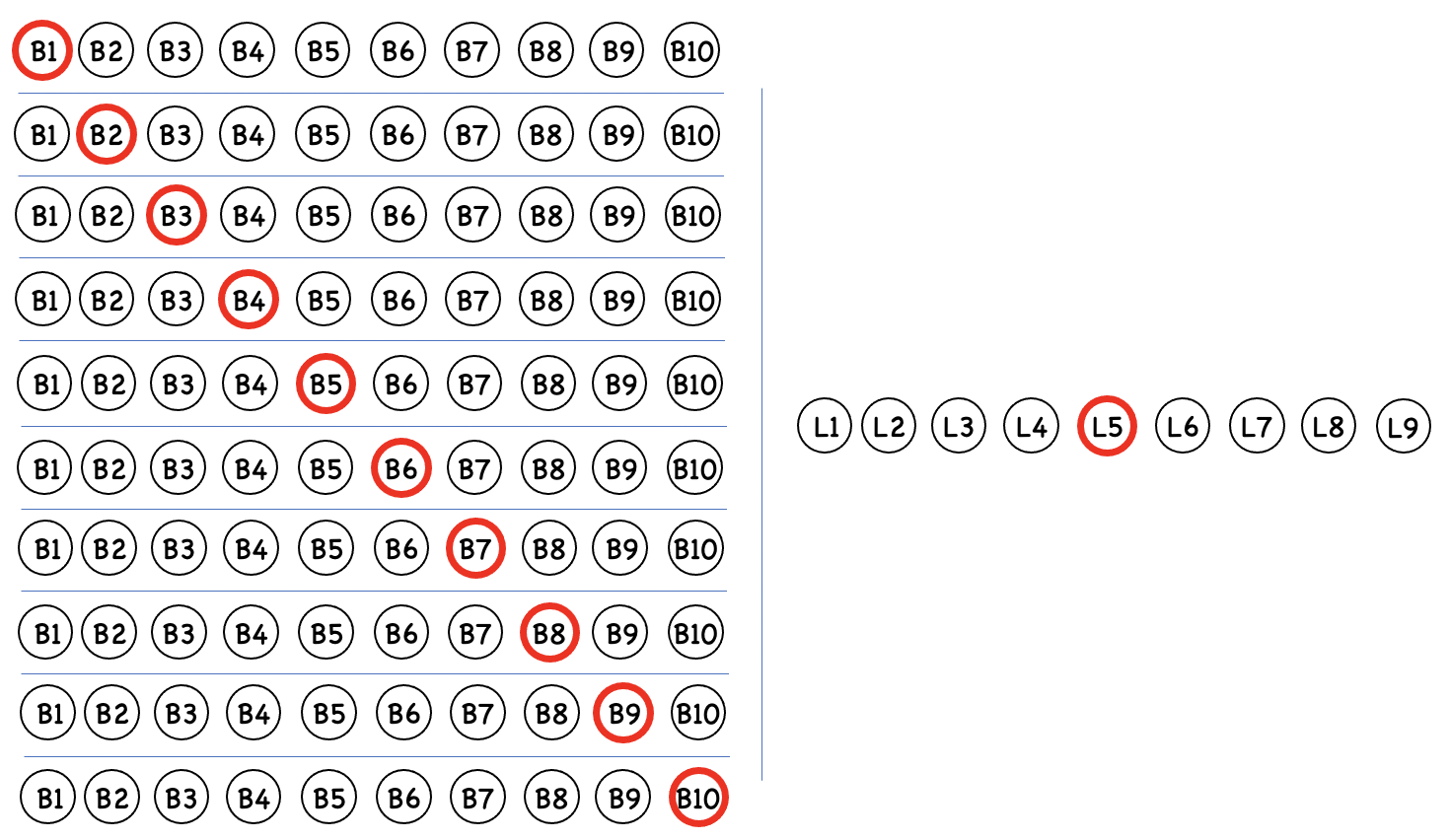}}}
\caption{First Spoiler plays $L(5)$ on the $L$ side. Duplicator responds by playing all possible moves on the $B$ side. When discussing a given $B$ board, given that there are $10$ elements, there will either be more elements to the right or more elements to the left of the 1st move played. We refer to the side that has more elements as the ``long side'' and the side with fewer elements as the ``short side.''  Thus, in the figure, on the board in which Duplicator has played $B(5)$, the long side is to the right and the short side is to the left.}
\label{fig:10_vs_9_counter_move1}
\end{figure}

Since there are $10$ elements on each board on the $B$ side, there will necessarily be more elements on one side or the other of any move on any particular board. Refer to the side that has more elements as the ``long side'' and the side with fewer elements as the ``short side.'' Spoiler will now make his 2nd round moves on $B$, playing as close as he can to the middle of the long side of every board. Since there are at least $5$ elements on the long side of every board, there will necessarily be at least two unplayed elements to either side of Spoiler's 2nd round move on the long side of each board. See Figure \ref{fig:10_vs_9_counter_move2_spoiler}.  
\begin{figure} [ht]
\centerline{\scalebox{0.35}{\includegraphics{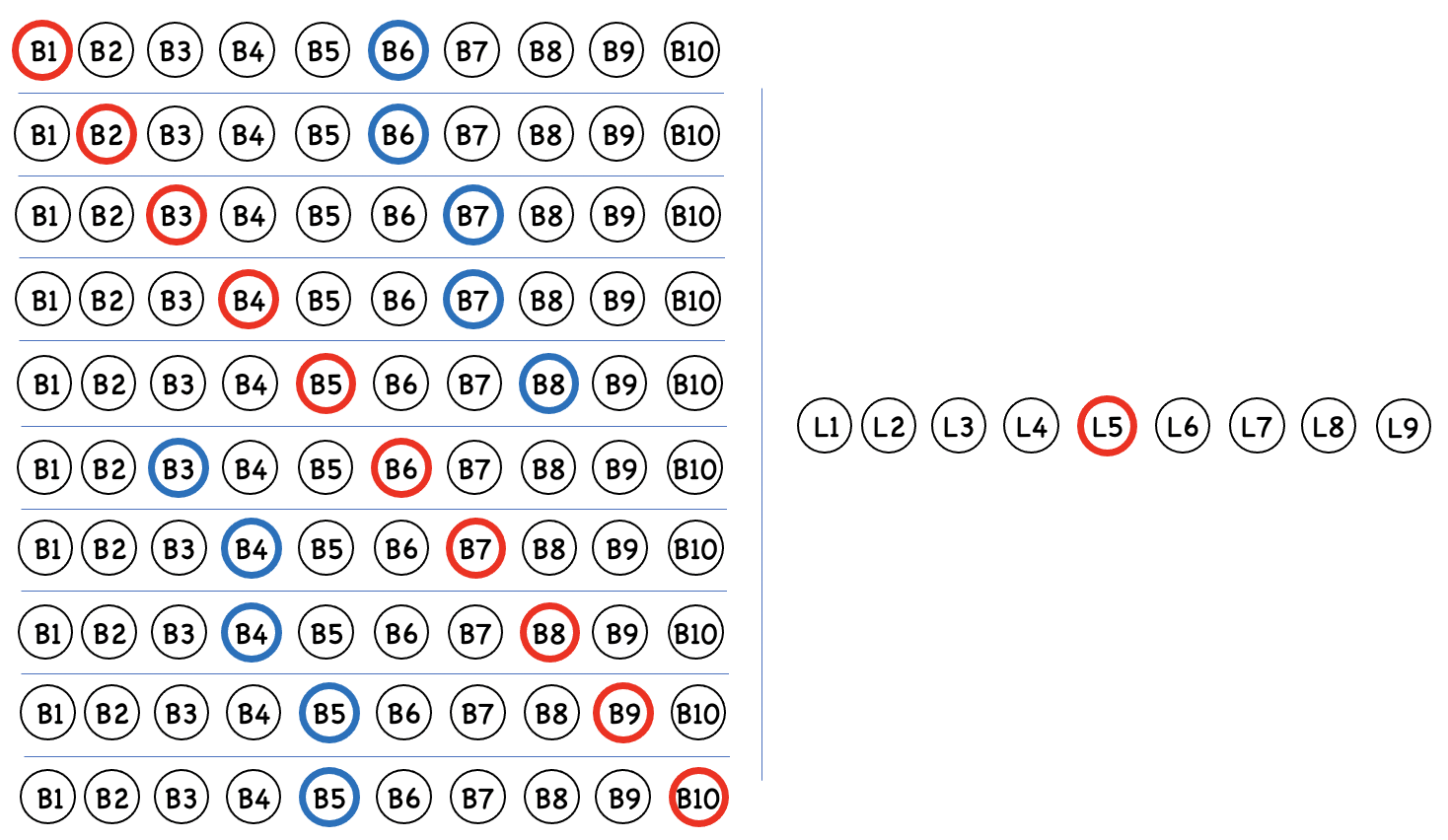}}}
\caption{Spoiler's 2nd moves on the $B$ side in blue), associated with each possible 1st round move on this side by Duplicator. The moves are all in the ``middle'' of the ``long sides.'' See the text for a description of these terms.}
\label{fig:10_vs_9_counter_move2_spoiler}
\end{figure}

Refer to these moves that have two unplayed elements to either side of them as ``middle moves.'' Duplicator then follows up, playing every possible move on the $L$ side. 
The possible moves are depicted in blue in Figure \ref{fig:10_vs_9_counter_move2}. There is plainly no value to playing on top of the 1st round move here so we omit that option.
\begin{figure} [ht]
\centerline{\scalebox{0.40}{\includegraphics{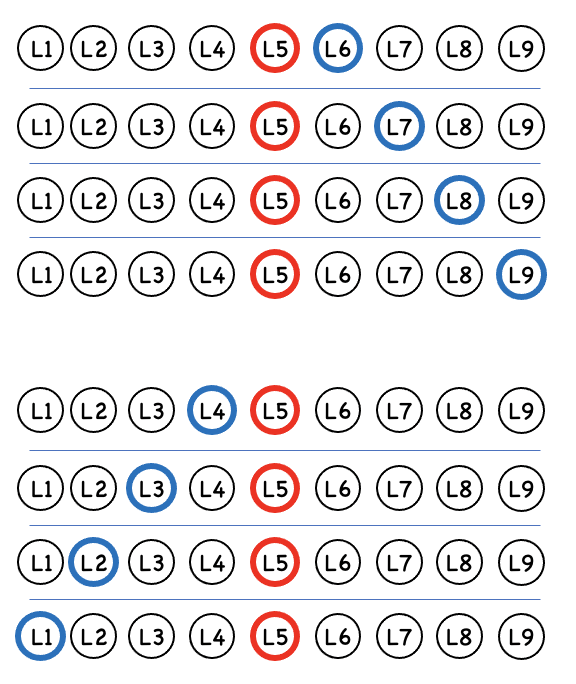}}}
\caption{Possible 2nd moves by Duplicator (in blue) in response to Spoiler playing middle moves from the long side of all $B$ boards. The play-on-top move is omitted, as discussed in the text.}
\label{fig:10_vs_9_counter_move2}
\end{figure}

For his 3rd move, Spoiler will now play on $L$, replying to each possible play of Duplicator using the green moves in Figure \ref{fig:10_vs_9_counter_move3}.
\begin{figure} [ht]
\centerline{\scalebox{0.40}{\includegraphics{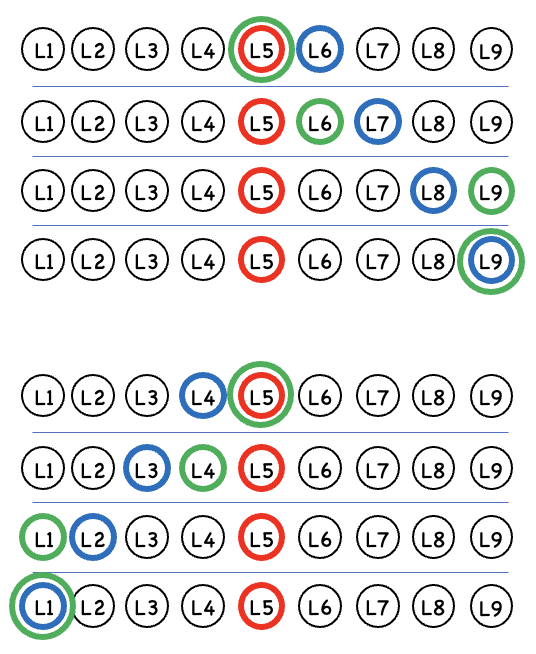}}}
\caption{3rd moves of Spoiler (in green) in response to each possible 2nd move of Duplicator (in blue).}
\label{fig:10_vs_9_counter_move3}
\end{figure}
In order to keep an isomorphism with the elements in boards 1 or 5 of Figure \ref{fig:10_vs_9_counter_move3}, Duplicator will have to play on top of her 1st move on $B$. Note that this move breaks the isomorphism with all other $L$ boards except board 1 if the $B$ board had its long side on the right of the first move, and except board 5 if the $B$ board had its long side on the left of the first move. Spoiler is going to play his 4th moves from the $B$ boards. In case the second move on $B$ was right of the first move (in other words the long side was to the right), Spoiler will play to the right between the 1st and 2nd moves. Since there is no corresponding move on the 1st board, this move breaks the isomorphism with all $L$ boards. Analogously, if the 2nd move on $B$ was to the left of the 1st move, Spoiler will play to the left between the 1st and 2nd moves, again breaking the isomorphism with all boards. Since the arguments continue to be completely parallel, with plays on the top four boards of Figure \ref{fig:10_vs_9_counter_move3} corresponding to cases where the long side of $B$ was to the right, and play on boards 5 through 8 of Figure \ref{fig:10_vs_9_counter_move3} corresponding to cases where the long side of $B$ was to the left, we shall focus just on the top four boards. We have ruled out the case where Duplicator makes a 3rd round play that is on top of her 1st round play. To try to keep an isomorphism going with board 4 in the Figure, Duplicator must play on top of her 2nd move. But in this case Spoiler plays to the right of his 2nd move and there is no analogous move on board 4. Hence Duplicator cannot keep an isomorphism going with board 4. Next, to keep an isomorphism going with board 2 in Figure \ref{fig:10_vs_9_counter_move3}, Duplicator must play to the right between her 1st and 2nd moves. But then Spoiler plays a second time to the right between his 1st and 2nd moves, in the at least one additional unplayed element and Duplicator cannot reply in kind on the 2nd board. An attempt to maintain the isomorphism with board 3 in the Figure is seen to be fruitless with an analogous argument. Duplicator must play to the right of the 2nd move on $B$ boards that have their long side to the right. Spoiler then picks a second unplayed element to the right of the 2nd move played on these boards and there is no analogous move on the 3rd board on the $L$ side. It is thus impossible to maintain an isomorphism with any of the boards in Figure \ref{fig:10_vs_9_counter_move3} and hence Spoiler wins this 10 versus 9 game. 

Suppose now that $|L| < 9$. On his first move Spoiler will play a move that is as close to the center of $L$ as possible. Duplicator will then just have fewer possible moves in the 2nd round when she plays on $L$ because $|L|$ is smaller than in the prior analysis. Spoiler still exploits the fact that there are two unplayed elements both to the left and right of the 2nd round moves on the same side of the 1st round moves on $B$ but not on $L$. If, for a 2nd round move, Duplicator plays immediately to the right or immediately to the left of Spoiler's round $1$ move, Spoiler will, on his 3rd round move, play on $L$ on top of his 1st move (just as before). If Duplicator plays an end move, which is not simultaneously an immediate neighbor of Spoiler's 1st move, then Spoiler will play on top of that move (again, as before). On the other hand, if Duplicator plays a move that is two to the right (alternatively, two to the left) of the 1st round move, and the move is not also an end move, then Spoiler will play immediately to the right (alternatively, immediately to the left) of center. Analogously, a Duplicator move that is a 2nd-from-end element and not covered in any other case, will be responded to with an end move on the same side. In all cases the analysis is exactly as before and results in a Spoiler victory. Finally, if $|B| > 10$, the analysis is not particularly different than what we've seen; there is just a bit more ``room'' when picking Spoiler's 2nd round moves, which again must leave at least two unplayed elements both to the right and left. All other aspects of the analysis are unchanged. Finally, if we have a game with multiple linear orders, $\mathcal B = \{B_i\}, \mathcal L = \{L_j\}$ on the respective sides, with $|B_i| \geq 10$ and $|L_j| \leq 9$, then Spoiler first plays on each liner order in $\mathcal L$ as described above, and then alternates -- the fact that there happen to be multiple linear orders of different sizes has no impact on the strategy. The lemma is therefore established.
\end{proof}

It is worth noting that Spoiler's ability to play on top of existing moves was essential to the above Spoiler-winning strategy. Suppose such a move were prohibited. Consider the situation after Duplicator's 2nd round moves -- see Figure \ref{fig:10_vs_9_counter_move2}. Consider just the top $4$ boards in this figure and suppose for the moment that our 3rd round moves were constrained to be on $L(6)$--$L(9)$. By the analysis showing that the $5$ \vs $4$ game is Duplicator-winning (proof of Lemma \ref{lemma:g_upper_of_3}), recall that playing two 3rd round moves with both moves either to the left of the 2nd round (blue) moves or to the right of the 2nd round moves would lead to a Duplicator victory. Thus, under the assumption that Spoiler does not play on top of an existing move, he would have to play moves on two boards in the $L(1)$--$L(4)$ range. But then, for a 3rd round move, Duplicator could mimic any one of these Spoiler moves on the $4$th $B$ board (e.g., playing $B(i)$ if Spoiler played $L(i)$). It is then evident that any 4th round move on this 4th $B$ board would be fruitless, while if  Spoiler makes his 4th round moves on $L$, one of the moves on the two boards in which Spoiler previously played on $L(1)$--$L(4)$ will allow for a Duplicator victory with respect to the same $4$th $B$ board. 

It is worth calling out this last fact explicitly: 

\begin{observation} \label{obs:play-on-top} If Spoiler were not able to play on top of existing moves, the M-S game on linear orders of sizes $10$ vs.\ linear orders of size $9$ would be winnable by Duplicator.
\end{observation}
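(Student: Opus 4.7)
The plan is to formalize the argument sketched in the paragraph immediately preceding the observation, showing that without the play-on-top option Duplicator has a winning response to every Spoiler strategy in the $4$-round \ms game on linear orders of sizes $10$ and $9$.

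First I would reduce to the critical case in which Spoiler's first move is $L(5)$. For any other first move---an end or near-end element on either side, or a middle element of $B$---Duplicator matches on the opposite side, and the Reduction Lemma~\ref{lemma:reduction} combined with the Laddering Up Lemma~\ref{lemma:stepping-up} puts the residual $3$-round game into the Duplicator-winning range of Lemma~\ref{lemma:g_upper_of_3}. After Spoiler plays $L(5)$, Duplicator responds obliviously on $B$, and Spoiler plays the ``middle-of-long-side'' strategy on each $B$ board (as in the game-based proof of Lemma~\ref{lemma:g_underline_of_4}); other $2$nd-round options are again handled by the Reduction Lemma. Duplicator responds obliviously on $L$, producing the configuration of Figure~\ref{fig:10_vs_9_counter_move2}.

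The heart of the argument is Spoiler's $3$rd round. A $3$rd-round play on $B$ is easily defeated by single-board mimicry, so Spoiler must play on $L$. By symmetry, focus on the four $L$ boards whose $2$nd-round element lies in $\{L(6),\ldots,L(9)\}$. If all four of Spoiler's $3$rd-round plays on these boards were in $\{L(6),\ldots,L(9)\}$, then paired with the $B$ copy whose $1$st move is $B(5)$, the Reduction Lemma~\ref{lemma:reduction} removes the five left-most elements of each structure, reducing the residual game to a $3$-round \ms game on linear orders of sizes $5$ and $4$, which is Duplicator-winning by Lemma~\ref{lemma:g_upper_of_3}. Since the no-play-on-top hypothesis forbids $L(5)$, Spoiler is therefore forced to play some $L(i)$ with $i\in\{1,2,3,4\}$ on at least one of these boards. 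Duplicator then replies by playing $B(i)$ on a suitably chosen $B$ copy, producing a $B$ configuration order-isomorphic to that specific $L$ board.

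The main obstacle I expect is verifying that this matched pair survives every Spoiler $4$th-round move. This requires a careful case analysis comparing the unplayed elements of $B$ versus $L$ in each of the four sub-intervals cut out by the three played positions, and it uses the full flexibility provided by Duplicator's oblivious $1$st- and $3$rd-round choices: she may select the $B$ copy whose $1$st-round element and Spoiler's resulting $2$nd-round element are compatible with the configuration of the $L$ board being matched. Combining this with the symmetric treatment of the four $L$ boards whose $2$nd-round element lies in $\{L(1),\ldots,L(4)\}$ then completes the proof of the observation.
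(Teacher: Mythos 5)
Your overall line---formalize the paragraph preceding the observation, isolate the critical position of Figure~\ref{fig:10_vs_9_counter_move2}, and argue that without play-on-top Spoiler must leak a third-round move into $L(1)$--$L(4)$, which Duplicator then mirrors on a $B$ copy---is the paper's approach. But the step you use to force that leak is broken. You dispose of the case ``all four third-round plays lie in $\{L(6),\dots,L(9)\}$'' by invoking the Reduction Lemma~\ref{lemma:reduction} to pass to a $3$-round $5$-vs-$4$ game and then citing Lemma~\ref{lemma:g_upper_of_3}. The Reduction Lemma is a statement about \ms games \emph{with atoms}, and applied honestly here it reduces to the $3$-round $5$-vs-$4$ game \emph{with atoms}, which is \emph{Spoiler}-winning (Proposition~\ref{lemma:atoms_vs_no_atoms_example}); it therefore cannot deliver a Duplicator win. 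The atom-free analogue of this reduction---``mimic on the left of $B(5)/L(5)$ and play the $5$-vs-$4$ strategy on the right''---is exactly the fallacy dissected in Section~\ref{sec:interlude}, and Figure~\ref{fig:10_vs_9_recursion_attempt3b} shows it fails even without play-on-top. What the paper actually extracts from the $5$-vs-$4$ analysis is sharper and local: if two of the third-round moves on the right-half boards fall on the same side of their blue moves, Duplicator obtains a \emph{single-board} partial isomorphism that survives one more move, and since the left halves of that matched pair are identically configured (only position $5$ played), it also survives a fourth-round move there. The same objection applies to your use of the Reduction Lemma to dismiss Spoiler's alternative first- and second-round moves.

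Second, your dichotomy is off: the correct conclusion is that Spoiler must divert at least \emph{two} of the four boards into $L(1)$--$L(4)$, because any three boards whose third-round moves stay in $L(6)$--$L(9)$ already contain two same-side moves and hence hand Duplicator the single-board win above. The two diverted boards are what make the endgame work: when Spoiler plays his fourth round on $L$, he commits one element on each diverted board, and Duplicator need only match one of the two on her chosen $B$ copy. With only one diverted board guaranteed, the matching you sketch is in real doubt---the second-round element of each $B$ copy was dictated by Spoiler's middle-of-the-long-side play, so the gaps of the matched $B$ and $L$ boards need not be compatible (e.g.\ an $L$ board whose second-round move is $L(6)$ has an empty gap between $L(5)$ and $L(6)$, while the $B(5)$ copy has $B(6),B(7)$ unplayed between its first two moves). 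You flag this fourth-round survival analysis as ``the main obstacle'' and leave it open; that analysis is where the content of the observation actually lies, so as written the proposal has a genuine gap there as well.
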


\end{document}